\newcommand{\wcost}{\omega}
\newcommand{\id}[1]{\ifmmode\mathit{#1}\else\textit{#1}\fi}
\newcommand{\const}[1]{\ifmmode\mbox{\textc{#1}}\else\textsc{#1}\fi}
\newcommand{\ourmodel}{Asymmetric NP}
\newcommand{\ourmodellong}{Asymmetric Nested-Parallel}
\newcommand{\seqmodel}{Asymmetric RAM}
\newcommand{\depth}{depth}
\newcommand{\local}{symmetric}
\newcommand{\implicit}{implicit $k$-decomposition}
\newcommand{\first}{\mb{first}}
\newcommand{\last}{\mb{last}}
\newcommand{\low}{\mb{low}}
\newcommand{\high}{\mb{high}}
\newcommand{\imprep}{BC labeling}
\newcommand{\clustergraph}{clusters graph}
\newcommand{\localgraph}{local graph}
\newcommand{\outver}{outside vertex}
\newcommand{\outvers}{outside vertices}
\renewcommand{\whp}[1]{\textit{whp}}
\title{Implicit Decomposition for \\Write-Efficient Connectivity Algorithms}
\author{Naama Ben-David$^\ast$ \qquad Guy E. Blelloch$^\ast$ \qquad
Jeremy T. Fineman$^\dag$\smallskip\\
Phillip B. Gibbons$^\ast$ \qquad Yan Gu$^\ast$ \qquad
Charles McGuffey$^\ast$ \qquad Julian Shun$^\ddag$\bigskip\\
$^\ast$Carnegie Mellon University \quad $^\dag$Georgetown University
\quad $^\ddag$UC Berkeley\bigskip\\
\{nbendavi, guyb, gibbons, yan.gu, cmcguffe\}@cs.cmu.edu\smallskip\\
jfineman@cs.georgetown.edu \quad jshun@eecs.berkeley.edu
}
\date{}
\begin{document}
    \maketitle
    \thispagestyle{empty}
    \par
\setcounter{secnumdepth}{2}
\setcounter{tocdepth}{2}

\begin{abstract}
The future of main memory appears to lie in the direction of new
technologies that provide strong capacity-to-performance ratios, but
have write operations that are much more expensive than reads in terms
of latency, bandwidth, and energy. Motivated by this trend, we propose
sequential and parallel algorithms to solve graph connectivity problems using
significantly fewer writes than conventional algorithms. Our primary
algorithmic tool is the construction of an $o(n)$-sized {\em implicit
decomposition} of a bounded-degree graph $G$ on $n$ nodes, which combined with read-only access to
$G$ enables fast answers to connectivity and biconnectivity queries on $G$.
The construction breaks the linear-write ``barrier'',
resulting in costs that are asymptotically lower than conventional algorithms while adding only a modest cost to querying time.
For general non-sparse graphs on $m$ edges,
we also provide the first $o(m)$ writes and $O(m)$ operations
parallel algorithms for connectivity and biconnectivity.
These algorithms provide insight into how
applications can efficiently process computations on large graphs in systems
with read-write asymmetry.
\end{abstract}

\hide{
\vfill
\noindent
Corresponding Author: Yan Gu, yan.gu@cs.cmu.edu
}

\newpage
\setcounter{page}{1}

\hide{  Old intro.
The future of main memory appears to lie in a new wave of nonvolatile
memory technologies (e.g., phase-change memory, spin-torque transfer
magnetic RAM, memristor-based resistive RAM, conductive-bridging RAM)
that promise persistence, significantly lower energy costs, and higher
density than the DRAM technology used in today's main
memories~\cite{hp-nvm15, intel-nvm15, Meena14, Yole13}.  A key
property of such technologies, however, is their asymmetric read-write
costs: writes can be an order of magnitude or more higher energy,
higher latency, lower (per-module) bandwidth, and more wear-out prone than
reads~\cite{Akel11, Athanassoulis12, blelloch2016efficient, BFGGS15,
  carson2016write, ChoL09, Dong09, Dong08, HuZXTGS14, ibm-pcm14b,
  Kim14, LeeIMB09, Qureshi12, Xu11, yang:iscas07, ZhouZYZ09, ZWT13}.
Moreover, because bits are stored in these technologies as ``at rest''
states of the given material that can be quickly read but require
physical change to update, this asymmetry appears
fundamental.\footnote{See Appendix~\ref{sec:hardware} for more
technical details of the new memories.}  This motivates the need for
algorithms that are \emph{write-efficient}, in that they significantly
reduce the number of writes compared to existing algorithms.

Models of computation for memories with asymmetric read-write costs
have been proposed and studied by a number of recent
papers~\cite{BT06, Eppstein14, Gal05, ParkS09, Chen11, Viglas12,
  Viglas14, BFGGS15, blelloch2016efficient, BBFGGMS16,
  carson2016write, jacob2017} (see Appendix~\ref{sec:priorwork} for details).  In
this paper, we focus on two such models that are simple enough for
algorithm design while still capturing the read-write asymmetry: (i)
the \emph{\seqmodel{} model}~\cite{blelloch2016efficient}, in which
writes to the asymmetric memory cost $\wcost \gg 1$ and all other
operations are unit cost; and (ii) its parallel variant, the
\emph{\ourmodel} model~\cite{BBFGGMS16}.  Both models have a small
\local{} memory that can be used to help minimize the number of writes
to the large asymmetric memory.

\hide{
The series of work by Blelloch et al.~\cite{BFGGS15,blelloch2016efficient,BBFGGMS16}
defines the cost models, the \emph{Asymmetric RAM model}, of the new hardware that is intuitively similar to the RAM model, except the existence of a small symmetric memory and each write to the main memory costs $\wcost{}$, a non-constant parameter representing the cost of a write relative to a read.
This model is adopted by many recent works~\cite{BFGGS15,blelloch2016efficient,BBFGGMS16,jacob2017}, and the lower and upper bounds on lots of classic problems are shown, as well as the parallel algorithms on these problems (analyzed in a similar model named as \emph{Asymmetric NP model}).
Carson et al.~\cite{carson2016write} presented write-efficient sequential
algorithms for a similar model, as well as write-efficient parallel
algorithms (and lower bounds) on a distributed memory model with
asymmetric read-write costs, focusing on linear algebra problems and
direct N-body methods.
}


We present the first write-efficient sequential and parallel
algorithms for graph connectivity (connected components, spanning
forests) and biconnectivity (biconnected components, articulation
points, and related 1-edge-connectivity) problems. The algorithms
significantly reduce the number of writes to the asymmetric memory
compared to existing algorithms, improving the overall sequential time
and parallel work bounds.  In the algorithms we often cannot afford to
write out the result since just writing it requires too many writes.  Instead we
construct a compact \emph{``oracle''} that can answer queries (e.g. if two
vertices are connected) efficiently.  The costs for constructing the
oracles and query times are summarized in Table~\ref{tbl:results}.
}

\section{Introduction}
\label{sec:intro}

Recent trends in computer memory technologies suggest wide adoption
of memory systems in which reading from memory is significantly
cheaper than writing to it, especially with regards to energy.  The
reason for this asymmetry, roughly speaking, is that writing to memory
requires a change to the state of the material, while reading only
requires detecting the current state.\footnote{See Appendix~\ref{sec:hardware}
for some technical details of the new memories.}  This trend poses the
interesting question of how to design algorithms that are more
efficient in the number of writes than in the number of reads.  To
this end recent works have studied models that account for asymmetric
read-write costs and have analyzed a variety of algorithms in these
models~\cite{BT06, BBFGGMS16, BFGGS15, blelloch2016efficient,
carson2016write, Chen11, Eppstein14, Gal05, jacob2017, ParkS09,
Viglas12, Viglas14}.

Some of this work has shown an inherent tradeoff between reads and
writes.  For example, Blelloch et al.~\cite{blelloch2016efficient}
show that for computations on certain DAGs, any reduction in writes
requires an increase in reads.  For FFTs and sorting networks, for
example, the increase in reads is exponential with respect to the
decrease in writes.  Intuitively, the tradeoff is because reducing
writes restricts the ability to save partial results needed in
multiple places, and hence requires repeating certain computations.
This is reminiscent of well-studied time-space
tradeoffs~\cite{HK81}---but not the same, because time-space models
still allow an \emph{arbitrary} number of writes to the limited space,
with each write costing the same as a read.

In this paper we are interested in graph connectivity problems, and in
particular whether it is possible to build an ``oracle'' using a
sublinear number of writes that supports fast queries, along with any
read-write tradeoffs this entails.  We consider undirected connectivity
(connected components, spanning forests) and biconnectivity
(biconnected components, articulation points, and related
1-edge-connectivity) problems.  We do not consider the cost of
initially storing the graph in memory, but note that there are many
scenarios in which the graph is either represented implicitly, e.g.,
the Swendsen-Wang algorithm~\cite{swendsen1987nonuniversal}, or for
which the graph is sampled and used multiple times, e.g., edges
selected based on different Boolean hash functions or based on
properties (timestamp, weight, relationship, etc.) associated with the
edge.

Our results show that if a graph with $n$ vertices and $m$ edges is
sufficiently dense, a sublinear number of writes ($o(m)$) can be
achieved without asymptotically increasing the number of reads (no
tradeoff is required).  For bounded-degree graphs, on the other hand,
our algorithm achieving a sublinear number of writes ($o(n)$) involves
a linear tradeoff between reads and writes.  The main technical
contribution is a new \emph{implicit} decomposition of a graph that
allows writing out information only for a suitably small \emph{sample}
of the vertices.  We use two models to account for the read-write
asymmetry: (i) the \emph{\seqmodel{}
model}~\cite{blelloch2016efficient}, in which writes to the asymmetric
memory cost $\wcost \gg 1$ and all other operations are unit cost; and
(ii) its parallel variant, the \emph{\ourmodel}
model~\cite{BBFGGMS16}.  Both models have a small \local{} memory (a
cache) that can be used to help minimize the number of writes to the
large asymmetric memory.

Table~\ref{tbl:results} summarizes our main results for these models,
showing asymptotic improvements in construction costs over all prior
work (sequential or parallel) for these well-studied connectivity
problems.

\myparagraph{Algorithms with $o(m)$ writes for non-sparse graphs}
The first contribution of this paper is a group of algorithms that achieve
$O(m/\wcost{} + n)$ writes, $O(m)$ other operations, and hence $O(m +
\wcost n)$ work.  While standard sequential BFS- or DFS-based
graph connectivity algorithms require only $O(n)$ writes, and
hence already achieve this bound, the parallel setting is more challenging.
Existing linear-work parallel connectivity algorithms perform
$\Theta(m)$ writes~\cite{ColeKT96, Gazit1991, HalperinZ96, Halperin00,
PettieR02, PoonR97, Shun2014}, and hence are actually
$\Theta(\wcost m)$ work in the asymmetric memory setting.  We show how
the algorithm of Shun et al.~\cite{Shun2014} can be adapted to use
only $O(m/\wcost + n)$ writes (and $O(m)$ other operations), by
avoiding repeated graph contractions and using a recent
write-efficient BFS~\cite{BBFGGMS16}, yielding the first
$O(m+\wcost{}n)$ expected work, low-\depth{} parallel algorithm for
connectivity in the asymmetric setting.  (By \defn{low \depth} we mean
\depth{} polynomial in $\wcost \log n$.)

For biconnectivity, the standard output is an array of size $m$
indicating to which biconnected component each edge
belongs~\cite{CLRS, JaJa92}.  Producing this output requires at least
$m$ writes, and as a result, the sequential time (and parallel work)
ends up being $\Theta(\wcost m)$ in the asymmetric memory setting.  We
present an equally effective representation of the output, which we
call the \emph{\imprep}, which has size only $O(n)$.  This leads to a
sequential biconnectivity algorithm that constructs the oracle in only
$O(m+\wcost n)$ time in the asymmetric setting.  Moreover, we show how
to leverage our new parallel connectivity algorithm to compute the
\imprep{} in $O(m/\wcost + n)$ writes, yielding the first
$O(m+\wcost{}n)$ work parallel algorithm for biconnectivity in the
asymmetric memory setting.  We show:

\begin{theorem}\label{thm:main-linear}
Graph connectivity and biconnectivity oracles can be constructed in
parallel with $O(m+\wcost{}n)$ expected work and $O(\wcost^2\log^2 n)$
\depth{} \whp{}\footnote{Throughout the paper we use \whp{} to mean
with probability $1 - {n^{-c}}$ for any constant $c$ that shows up
linearly in the cost bound (e.g. $O(c \wcost^2 \log^2 n)$ in the
bound given).} on the \ourmodel{} model, and each query can be
answered in $O(1)$ work.  Sequentially, the construction takes $O(m
+ \wcost n)$ time on the \seqmodel{} model, with $O(1)$ query time.
The \local{} memory is $O(\wcost \log n)$ words.
\end{theorem}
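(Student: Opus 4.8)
The plan is to build the oracle from three write-efficient ingredients, each invoked \emph{without} ever materializing an $\Omega(m)$-sized intermediate structure: (i) a parallel connected-components / spanning-forest routine with $O(m/\wcost+n)$ writes and $O(m)$ reads; (ii) a Tarjan--Vishkin-style reduction of biconnectivity to connectivity that feeds an \emph{implicit} auxiliary graph to routine (i); and (iii) a constant-query-time \imprep{} of size $O(n)$ assembled from the resulting component labels together with standard tree-query structures. The sequential bounds follow from the same decomposition with BFS/DFS in place of the parallel connectivity routine.

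For connectivity, running a standard contraction-based algorithm directly on $G$ is too write-heavy, since each of its $\Theta(\log n)$ contraction rounds costs $\Theta(m)$ writes. Instead I would perform a \emph{single} low-diameter decomposition of $G$ with radius parameter $\Theta(1/\wcost)$, realized with the write-efficient BFS of~\cite{BBFGGMS16}; \whp{} this produces clusters of radius $O(\wcost\log n)$ and only $O(m/\wcost)$ inter-cluster edges, at a cost of $O(m)$ reads and just $O(n+m/\wcost)$ writes. I then materialize the contracted (multi-)graph on $O(n)$ vertices and $O(m/\wcost)$ edges---now affordable---and run the linear-work low-depth algorithm of Shun et al.~\cite{Shun2014} on it; because that graph is sparse, even its $\Theta(m/\wcost)$ writes stay within budget. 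Re-attaching the intra-cluster BFS-tree edges yields a spanning forest. The totals are $O(m/\wcost+n)$ writes and $O(m)$ reads, hence $O(m+\wcost n)$ work, and composing the $O(\wcost\log n)$-radius decomposition with the $O(\log^2 n)$-depth inner algorithm through the write-efficient primitives gives depth $O(\wcost^2\log^2 n)$ \whp{}. A connectivity query is a single array lookup after pointer-jumping each component representative to a tree root, so $O(1)$ work.

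For biconnectivity I would root each tree of the spanning forest and compute, using write-efficient Euler tours, list ranking, and tree contraction, the preorder number, subtree size, and the $\mathit{low}/\mathit{high}$ values of every vertex; these arrays have size $O(n)$, so the tree computations cost $O(n)$ writes, and folding in the $m$ non-tree edges to finish $\mathit{low}/\mathit{high}$ costs $O(m)$ reads but still only $O(n)$ writes. By the Tarjan--Vishkin characterization, the biconnected components of $G$ are the connected components of an auxiliary graph on the edges of $G$; rather than storing that $\Theta(m)$-edge graph I would present it \emph{implicitly} to routine (i)---each adjacency is decidable in $O(1)$ from the tree and the precomputed arrays---obtaining a component label $\beta(v)$ for the tree edge above each non-root $v$ in $O(m/\wcost+n)$ writes. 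The \imprep{} is then the array $\beta(\cdot)$ plus $O(1)$-query LCA and level-ancestor structures on the rooted forest (each built in $O(n)$ writes and low depth) plus precomputed per-vertex articulation bits and per-tree-edge bridge bits; its size is $O(n)$. A query asking whether a non-tree edge $\{u,v\}$ lies in a given biconnected component is answered by computing $w=\mathrm{LCA}(u,v)$, taking the appropriate child of $w$ (a level-ancestor query), and reading its $\beta$ value, using the fact that every non-tree edge is biconnected with every tree edge on its fundamental cycle; articulation-point and $1$-edge-connectivity queries are direct bit lookups---all $O(1)$ work.

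The sequential bounds come from the same pipeline: ordinary BFS/DFS solves connectivity in $O(m)$ time with $O(n)$ writes; a single DFS, with its stack kept in the asymmetric memory, computes the preorder, $\mathit{low}$, and $\mathit{high}$ arrays in $O(m)$ time and $O(n)$ writes; and the auxiliary-graph connectivity together with the tree-query structures are again $O(m)$ time and $O(n)$ writes, for $O(m+\wcost n)$ total, with the $O(\wcost\log n)$ \local{}-memory bound inherited from the write-buffering and $O(\log n)$-deep recursion of the write-efficient primitives. The main obstacle I anticipate is not any single reduction but the simultaneous bookkeeping that keeps every step both $o(m)$-write---which forces the contracted graph, the Tarjan--Vishkin auxiliary graph, and any edge-to-component map to be touched implicitly rather than stored---and low-depth, i.e., tracking exactly how the $\wcost$ factors from the write-efficient BFS, sorting, prefix-sum, and tree primitives compose across the decomposition and the $O(\log n)$ rounds of the inner connectivity algorithm to land at the claimed $O(\wcost^2\log^2 n)$ depth.
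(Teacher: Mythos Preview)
Your connectivity construction is essentially the paper's: one low-diameter decomposition with $\beta=1/\wcost$, write-efficient BFS inside each piece, explicit contraction of the $O(m/\wcost)$ inter-cluster edges, then any linear-work parallel spanning-forest algorithm on the contracted graph. The paper uses the Cole--Klein--Tarjan algorithm for the last step rather than re-running Shun et al., but the cost accounting is identical.

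For biconnectivity you take a genuinely different route. You feed the Tarjan--Vishkin auxiliary graph (vertices $=$ tree edges, edges induced by non-tree edges and the $\mathit{low}/\mathit{high}$ conditions) \emph{implicitly} to the connectivity routine, and then layer LCA and level-ancestor structures on top for queries. The paper instead observes that once $\first,\last,\low,\high$ are available, one can mark each tree edge $(v,p(v))$ as \emph{critical} when $\first(p(v))\le\low(v)$ and $\high(v)\le\last(p(v))$, delete exactly those tree edges from $G$ itself, and run the write-efficient connectivity algorithm on the resulting subgraph of $G$ on the \emph{original} vertex set; the component label of each non-root vertex is the \imprep{}. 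This buys two simplifications: there is no implicit auxiliary graph to enumerate (the input to the inner connectivity call is literally a subgraph of $G$), and the basic queries need no LCA or level-ancestor machinery---for any edge one simply returns the label of its deeper endpoint, since the two endpoints of a non-tree edge already share a label. Your approach is correct and lands at the same bounds, but the critical-edge formulation is shorter and makes the $O(1)$ query cost immediate; conversely, your explicit LCA/level-ancestor layer would make certain two-vertex ``same biconnected component?'' queries cleaner than the paper's somewhat informal treatment of that case.
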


\begin{table*}[t]
\begin{center}
\begin{tabular}{c|c@{  }@{  }c|c@{  }@{  }c|c}\toprule
& \multicolumn{2}{c|}{Connectivity} & \multicolumn{2}{c|}{Biconnectivity}
  & Best choice \\
& Seq.~time & Parallel~work & Seq.~time & Parallel~work
  & when \\ \hline
Prior work
  & $O(m+\wcost n)$ & $O(\wcost m)^{\dag}$
  & $O(\wcost m)$ & $O(\wcost m)^{\dag}$
  & -- \\
Ours [\S\ref{sec:cc-linear},\S\ref{sec:biconn-linear}]
  & $O(m + \wcost n)^{\dag}$ & $O(m + \wcost n)^{\dag}$
  & $O(m + \wcost n)^{\dag}$ & $O(m + \wcost n)^{\dag}$
  & $m \in \Omega(\sqrt{\wcost} n)$ \\
Ours [\S\ref{sec:cc-sublinear},\S\ref{sec:biconn-sublinear}]
  & $O(\sqrt{\wcost} m)^{\dag}$ & $O(\sqrt{\wcost} m)^{\dag}$
  & $O(\sqrt{\wcost} m)^{\dag}$ & $O(\sqrt{\wcost} m)^{\dag}$
  & $m \in o(\sqrt{\wcost} n)$ \\ \bottomrule
\end{tabular}
\caption{\label{tbl:results} Summary of main results for constructing
  connectivity oracles ($n$ nodes, $m$ edges, $\dag$=expected), where
  $\wcost \gg 1$ is the cost of writes to the asymmetric memory.
  Compared to prior work, asymmetric memory writes are reduced by up
  to a factor of $\wcost$, yielding improvements in both sequential
  time and parallel work.  All parallel algorithms have \depth{}
  polynomial in $\wcost \log n$.  Query times are
  $O(\sqrt{\wcost})^{\dag}$ (connectivity) and $O(\wcost)^{\dag}$
  (biconnectivity) for the last row and $O(1)$ for the rest.
  For all
  algorithms the small \local{} memory is only $O(\wcost \log n)$ words.}
\vspace{-0.1in}
\end{center}
\end{table*}

\myparagraph{Algorithms with $o(n)$ writes for sparse graphs}
For sparse graphs, the work of our connectivity and
biconnectivity algorithms is dominated by the $\Omega(n)$ writes they
perform.  This led us to explore the
following fundamental question: \defn{Is it possible to construct,
using $o(n)$ writes to the asymmetric memory, an oracle for graph
connectivity (or biconnectivity) that can answer queries in time
independent of $n$?}  Given that the standard output for these
problems (even with \imprep{}) is $\Theta(n)$
size even for bounded-degree graphs, one might conjecture that
$\Omega(n)$ writes are required.  Our main contribution
is a (perhaps surprising) affirmative answer to the above question
for both the connectivity and biconnectivity problems.

The key technical contribution behind our breaking of the
$\Omega(n)$-write ``barrier'' is the definition and use of an
\defn{\implicit} of a graph.  Informally, a \emph{$k$-decomposition}
of a graph $G$ is comprised of a subset $S$ of the vertices, called
\emph{centers}, and a mapping $\rho(\cdot)$ that partitions the
vertices among the centers, such that (i) $|S| = O(n/k)$, (ii) the
number of vertices in each partition is at most $k$, and (iii) for
each center, the induced subgraph on its vertices is connected.
However, explicitly storing the center associated with each vertex
would require $\Omega(n)$ writes.  Instead, an \emph{\implicit}
defines the mapping implicitly in terms of a procedure that is given
only $G$ and $S$ (and a 1-bit labeling on $S$).

With the new concept of \implicit{}, we present three algorithmic
subroutines which together construct connectivity and biconnectivity
oracles with $O(m/\sqrt{\wcost})$ writes, which is $o(n)$ when $m \in
o(\sqrt{\wcost} n)$.  For clarity of presentation, we begin by
assuming the input graph has bounded degree.
Section~\ref{sec:appendix-unbounded} discusses how to relax this
constraint.

We first present an algorithm to compute an \implicit{} that can be
constructed in only $O(n/k)$ writes, $O(kn)$ reads, and low \depth,
and can compute $\rho(v)$ in only $O(k)$ expected reads and no
asymmetric memory writes.  The intuition behind our construction is
first to pick a random subset of the vertices and then map each
unpicked vertex to the closest center by performing a BFS on the graph
$G$.  Unfortunately, this does not satisfy the constraint on partition
size, so a more sophisticated approach is needed.
The unique
challenge that arises again and again in the asymmetric context is
that the sublinear limitation on the number of writes rules out the
approaches used by prior work.

We then show how the \implicit{}
can be used to solve graph connectivity and
biconnectivity.  We define the concept of a \emph{\clustergraph{}},
which contains vertices each representing a cluster and edges between
clusters.
The key idea is that after precomputing
on the \clustergraph{} and storing a constant amount of information
about connectivity and biconnectivity on each vertex (corresponding to
a cluster in the original graph), a connectivity or biconnectivity
query can be answered by only looking at the local structure and
preprocessed information on a constant number of clusters.
This is straightforward for connectivity
queries because we need only compare the labels of the clusters
that contains the respective query points.
However, this becomes much more challenging in graph biconnectivity
since the correspondence between the clusters and biconnected components is non-trivial:
a cluster may contain the vertices in many biconnected components while the vertices in a certain biconnected component can belong to different clusters.
Therefore, biconnectivity queries
require considerable subtleties in the design, to store the appropriate
information on the \clustergraph{} that enables each query to
access only a constant number of clusters (at most 3).
More specifically, we define the concept of the \emph{\localgraph{}} of each cluster (it maintains the relationship of biconnectivity of this cluster and its neighbor clusters and can be computed with the cost proportional to the size of this cluster), such that the biconnectivity queries discussed in Section~\ref{sec:biconn} can be answered by looking up a constant number of \localgraph{}s and the information stored in the \clustergraph{}.

Our sequential algorithms have significant algorithmic merits on their own, but we also show that all the algorithms can be
made to run in parallel with low \depth.  We show:



\begin{theorem}
\label{thm:main-sublinear}
Graph connectivity and biconnectivity oracles can be constructed in
$O(m/\sqrt{\wcost})$ expected writes and
$O(m\sqrt{\wcost})$ expected time (parallel work) on the \seqmodel{}
model (\ourmodel{} model, respectively).
The \depth{} on the \ourmodel{} is $O(\wcost^{3/2}\log^3 n)$ \whp{}.
Each connectivity query can be answered in $O(\wcost^{1/2})$ expected time
(work) ($O(\wcost{}^{1/2}\log n)$ \whp{}) and each biconnectivity query in
$O(\wcost)$ expected time (work) ($O(\wcost{}\log n)$ \whp{}).
The \local{} memory is $O(\wcost \log n)$ words.
\end{theorem}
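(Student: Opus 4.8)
\medskip
\noindent\emph{Proof sketch.}
The plan is to fix the decomposition parameter $k=\sqrt{\wcost}$ and compose the three subroutines sketched above: build an \implicit{}, form and solve connectivity (resp.\ biconnectivity) on a \clustergraph{}, and answer queries by recomputing the relevant part of the decomposition on demand. With $k=\sqrt{\wcost}$, storing the $O(n/k)$ centers and $O(1)$ words of per-cluster data costs $O(n/\sqrt{\wcost})=O(m/\sqrt{\wcost})$ writes (for bounded-degree connected inputs $m=\Omega(n)$; trivial components are handled separately, and general degree is reduced to this case in Section~\ref{sec:appendix-unbounded}), while the $O(kn)=O(\sqrt{\wcost}\,n)\subseteq O(\sqrt{\wcost}\,m)$ reads spent inside the decomposition stay within the claimed time/work bound. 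The \implicit{} is obtained by sampling each vertex into the center set $S$ independently with probability $\Theta(1/k)$, so $|S|=O(n/k)$ \whp{}, and defining $\rho(v)$ by a breadth-first search from $v$ in $G$ that stops at the first center reached, ties broken by a fixed priority on $S$. As flagged in the introduction, nearest-center assignment does not by itself cap cluster sizes at $k$; the decomposition section repairs this by truncating the search at radius $\Theta(k)$ and re-routing the vertices of oversized clusters, so that an evaluation of $\rho(v)$ explores only $O(k)$ vertices in expectation and performs no asymmetric writes, and the one-bit labeling written onto $S$ during construction is exactly what lets the query routine rerun $\rho$ later. In parallel each such search has \depth{} polynomial in $\wcost\log n$ (radius $O(k)$, each layer handled by the write-efficient BFS primitive), and these bounds add to the stated $O(\wcost^{3/2}\log^3 n)$; the $O(\wcost\log n)$-word symmetric memory suffices because no subroutine holds more than $O(k)$ vertices and $O(\log n)$ levels of recursion at once.

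Given the decomposition, let the \clustergraph{} $H$ have one vertex per cluster and an edge $\{C,C'\}$ whenever some edge $(u,v)$ of $G$ has $\rho(u)\in C$ and $\rho(v)\in C'$. Bounded degree makes $H$ have $O(n/k)$ vertices and edges, so $H$ is small enough that even writing it out explicitly costs only $O(n/\sqrt{\wcost})$ writes, and running any linear-work low-\depth{} connectivity (resp.\ biconnectivity) algorithm on it --- for instance the oracle of Theorem~\ref{thm:main-linear} --- costs $O(\wcost\,n/k)=O(\sqrt{\wcost}\,m)$ work, within budget (forming $H$ already costs $O(mk)=O(\sqrt{\wcost}\,m)$ reads since each of the $O(m)$ edges needs one $\rho$-evaluation). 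For connectivity I store on each cluster its component label in $H$; a query on $x,y$ computes $\rho(x),\rho(y)$ in $O(k)=O(\sqrt{\wcost})$ expected reads ($O(\sqrt{\wcost}\log n)$ \whp{} by a Chernoff bound on the number of vertices the truncated search visits) and reports whether the two labels coincide. Correctness is immediate: each cluster induces a connected subgraph, so $x$ and $y$ are connected in $G$ iff $\rho(x)$ and $\rho(y)$ lie in the same component of $H$.

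Biconnectivity is the substantive case, because the cluster partition and the biconnected-component partition do not refine each other: a cluster can meet many biconnected components and a biconnected component can span many clusters, so a query cannot be decided from cluster-level labels. Following the outline, preprocessing computes for each cluster $C$ its \localgraph{} --- the subgraph induced on $C$ plus one representative vertex per neighboring cluster (into which that neighbor and everything reachable through it is contracted) --- annotated, using the biconnectivity information already computed on $H$, with which boundary vertices of $C$ are articulation points of $G$ and which incident edges are bridges of $G$; this costs work proportional to $|C|$, hence $O(n)\subseteq O(\sqrt{\wcost}\,m)$ reads and $O(n/k)$ writes overall. A biconnectivity query on $x,y$ --- do they share a biconnected component, is a given vertex an articulation point, is a given edge a bridge --- is then answered by locating $\rho(x)$, $\rho(y)$, and at most one more cluster flagged by the cluster-level structure as lying between them, loading those \localgraph{}s together with the $O(1)$-word data on $H$, and running a bounded case analysis; the bound of at most $3$ relevant clusters is proved in Section~\ref{sec:biconn}. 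Each such step needs $O(k)$ evaluations of $\rho$, each of cost $O(k)$, giving $O(k^2)=O(\wcost)$ expected work per query ($O(\wcost\log n)$ \whp{}), and the preprocessing inherits the \depth{} of the decomposition and of the connectivity phase.

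The real obstacle is neither the write/read accounting nor the parallel \depth{} analysis --- both are routine once $k=\sqrt{\wcost}$ is pinned down and the write-efficient BFS and Theorem~\ref{thm:main-linear} are available --- but proving that the \localgraph{}s are \emph{sufficient}: a locality lemma asserting that for every pair of vertices the biconnectivity relationship (and the articulation/bridge status of any queried vertex or edge) is determined by a bounded neighborhood of clusters, together with the claim that the data realizing this fits in $O(1)$ words per cluster while supporting all the query types simultaneously. That lemma and its case analysis are the crux; the theorem's cost bounds then follow by summing the phase costs at $k=\sqrt{\wcost}$, and the \whp{} query bounds follow from a Chernoff bound on the size of the region each $\rho$-evaluation explores. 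A secondary difficulty, already present in the connectivity case, is the size-balancing repair inside the \implicit{} construction, which is what forces the more elaborate sampling-and-rerouting scheme in place of plain nearest-center assignment.
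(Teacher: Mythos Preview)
Your overall architecture matches the paper's, but there is one concrete gap that breaks the write bound. You assert that the \clustergraph{} $H$ has $O(n/k)$ vertices \emph{and} edges, and therefore can be written out explicitly with $O(n/\sqrt{\wcost})$ writes. The vertex count is right, but the edge count is not: each cluster contains up to $k$ bounded-degree vertices and can therefore have $\Theta(k)$ edges leaving it, so $H$ has $O(n/k)\cdot O(k)=O(n)$ edges in general (and the paper explicitly uses $m'=O(m)=O(n)$ in the proof of Theorem~\ref{thm:cc-oracle}). Materializing $H$ would thus cost $\Theta(n)$ writes, which is exactly the barrier the theorem is meant to beat. The paper's actual fix is to never write $H$ down: it runs the write-efficient connectivity algorithm of Section~\ref{sec:cc-linear} with $\beta=1/k$ directly on an \emph{implicit} representation of $H$, listing the neighbors of a center on demand in $O(k^2)$ reads and no writes (Lemma~\ref{lem:cc-impgraph}). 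That extra $O(k^2)$ read cost per center is absorbed into the $O(kn)$ work budget, and the edges of $H$ are only ever written once they have been thinned to $\beta m'=O(n/k)$ of them by the low-diameter decomposition. Your ``write $H$, then apply Theorem~\ref{thm:main-linear}'' shortcut does not work here.

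A secondary point: your description of how oversized clusters are repaired (``truncating the search at radius $\Theta(k)$ and re-routing'') does not match the paper's mechanism and would not obviously yield the stated $\rho(v)$ cost. The paper keeps the nearest-\emph{primary}-center search untruncated (it visits $O(k)$ vertices in expectation, $O(k\log n)$ \whp{}), and then inserts \emph{secondary} centers along the resulting spanning tree of each primary cluster by recursively splitting any tree of size $>k$ at a balanced separator vertex (Algorithm~\ref{algo:genclusters}); $\rho(v)$ is then the first center encountered on the unique shortest path from $v$ to its primary center. This is what makes $\rho(v)$ computable in $O(k)$ expected reads with only the one-bit primary/secondary label, and it is also what guarantees each cluster is connected (Lemma~\ref{lemma:clustertree}). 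Your sketch defers this to the decomposition section, which is fine, but the ``radius-$\Theta(k)$ truncation'' picture is misleading and, if taken literally, would not give connected clusters.
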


\hide{
\medskip\noindent
{\bf Do connectivity/biconnectivity require $\Omega(n)$ writes?}
Note that for sparse graphs, the work of our connectivity and biconnectivity algorithms
is dominated by the $\Theta(n)$ writes it performs (and prior algorithms are worse).
This led us to explore the following fundamental question:
\begin{quote}
\emph{Do connectivity and biconnectivity require $\Omega(n)$ writes?}
\end{quote}
Note that the answer to this question is uncertain even for bounded-degree graphs,
because the standard output for these problems is $\Omega(n)$ size.
Thus, a more precise phrasing of the question would be:
\emph{Is it possible to construct, using $o(n)$ writes
to the asymmetric memory, an oracle for connectivity (for biconnectivity)
in a bounded-degree graph that can answer queries in time independent of $n$?}

\myparagraph{Breaking the $\Omega(n)$ write barrier for bounded-degree graphs}
We answer affirmative to this latter question for both the
connectivity and biconnectivity problems.
The key idea is the
definition and use of an \defn{\implicit} of a graph.  Informally, a
\emph{$k$-decomposition} of a graph $G$ is comprised of a subset $S$
of the vertices, called \emph{centers}, and a mapping $\rho(\cdot)$ that
partitions the vertices among the centers, such that (i) $|S| =
O(n/k)$, (ii) the number of vertices in each partition is at most $k$,
and (iii) for each center, the induced subgraph on its vertices is
connected.  However, explicitly storing the center associated with
each vertex would require $\Omega(n)$ writes.  Instead, an
\emph{\implicit} defines the mapping implicitly in terms of a
procedure that is given only $G$ and $S$ (and a 1-bit labeling on
$S$).  The advantage of an \implicit{} is that storing $S$ requires
only $O(n/k)$ writes.  On the other hand, it is not clear that there
exists such a decomposition, with an efficient construction and fast
computation of $\rho(v)$ given $v$, because of the requirement that
the induced subgraphs be connected.

In this paper, we present an \implicit{} for bounded-degree graphs
that can be constructed in only $O(n/k)$ writes, $O(kn)$ reads, and
low \depth, and can compute $\rho(v)$ in only $O(k)$ reads and no
writes.  The intuition behind our construction is first to pick a
random subset of the vertices and then map each unpicked vertex to the
closest center by performing a BFS on the graph $G$.  Unfortunately,
this does not satisfy the constraint on partition size, so a more
sophisticated approach is needed.  The unique challenge that arises
again and again in the asymmetric context is
that the sublinear limitation on the number of writes rules out
the approaches used by prior work.

Although graph decompositions with various properties have been shown
to be quite useful in a large variety of applications (e.g.,
\cite{awerbuch1985complexity, miller2013parallel, blelloch2014nearly,
  awerbuch1992low, awerbuch1989network, abraham2012using,
  linial1991decomposing}), to our knowledge none of the prior
algorithms provide the necessary conditions to process graphs with a
sublinear number of writes.  For example, Miller et
al.~\cite{miller2013parallel} developed a simple and efficient
parallel \emph{low-diameter} decomposition algorithm that takes a
parameter $\beta <1/2$ and builds a decomposition with clusters of
diameter $O(\log (n)/ \beta)$ and $\beta m$ crossing edges with high
probability, in expected $O((\log^2 n)/{\beta})$ \depth{} and $O(m)$
work. However, even if the algorithm's many breadth-first-searches are
executed using a write-efficient BFS~\cite{BBFGGMS16}, it still
requires at least $O(n)$ writes. Furthermore, there is no guarantee on
the partition sizes.  Our decomposition needs only to be connected and
small, whereas other decompositions may require low-diameter,
relatively few edges between clusters, etc.  Further work has
considered balanced graph partitioning \cite{andreev2004balanced},
where, given parameters $k, u$, the goal is to find a partition of the
vertices into $k$ clusters such that each cluster is of size at most
$u \cdot n/k$.

\myparagraph{Contributions}
In summary, our main contributions are as follows:
\begin{itemize}
\item
We define an \implicit{} of a graph and show how it can be constructed efficiently
(and in parallel) for bounded-degree graphs using only $O(n/k)$ writes to
the asymmetric memory.
\item
We show how an \implicit{} can be used to break the $\Omega(n)$ barrier for connectivity
(biconnectivity) on bounded-degree graphs, resulting in $O(n \sqrt{\wcost})$ expected work, $O(n/\sqrt{\wcost})$
writes, low \depth{} algorithms for constructing a connectivity (biconnectivity) oracle
that answers queries in an expected $O(\sqrt{\wcost})$ work ($O(\wcost)$ work, respectively).
These algorithms are interesting in their own right, as there are considerable
subtleties in their design, particularly in the biconnectivity case.
\item
For general graphs, we present the first $O(m)$ work, low \depth{} parallel algorithms
for connectivity and for biconnectivity in the asymmetric memory setting.
\end{itemize}
Our algorithms support related connectivity problems such as spanning forests, articulation
points, bridges, cut-block trees, and 1-edge connectivity queries.

}

\section{Preliminaries and Related Work}

Let $G=(V, E)$ be an undirected, unweighted graph with $n = |V|$ vertices
and $m = |E|$ edges.
$G$ can contain self-loops and parallel (duplicate) edges, and is not necessarily connected.
We assume a global ordering of the vertices to break ties when necessary.
If the degree of every vertex is bounded by a constant, we say the graph
is \defn{bounded-degree}.
We use standard definitions of \emph{spanning tree}, \emph{spanning
  forest}, \emph{connected component}, \emph{biconnected component},
\emph{articulation points}, \emph{bridge}, and
\emph{$k$-edge-connectivity} on a graph, and
\emph{lowest-common-ancestor} (LCA) query on a tree (as summarized in
Appendix~\ref{sec:appendix-prelim}).
Let $[n] = \{1, 2, \cdots, n\}$ where $n$ is a positive integer.


\myparagraph{Computation models}
Sequential algorithms are analyzed using the \defn{\seqmodel}
model~\cite{blelloch2016efficient}, comprised of an infinitely large
asymmetric memory and a small \local{} memory.  The cost of writing to
the large memory is $\wcost$, and all other operations have unit cost.
This models practical settings in which there is a small amount of
standard symmetric memory (e.g., a cache) in addition to the asymmetric memory.

For parallel algorithms, we use the \defn{\ourmodellong\ (NP)} 
model~\cite{BBFGGMS16}, which is designed to characterize both parallelism
and memory read-write asymmetry. In the model, a
computation is represented as a (dynamically unfolding) directed acyclic graph (DAG) of tasks that
begins and ends with a single task called the root. A task consists of
a sequence of instructions that must be executed in order. Tasks can
also call the Fork instruction, which creates child tasks that can be
run in parallel with each other. The memory in the \ourmodel\ Model
consists of (i) an infinitely large
\emph{asymmetric} memory accessible to all tasks and (ii) a small
task-specific \emph{\local} memory accessible only to a task
and its children.  The cost of writing to large memory is
$\wcost$, and all other operations have unit cost.
%
The \defn{work} $W$ of a computation is the sum of the costs of the
operations in its DAG and the \defn{\depth{}} $D$ is the
cost of the DAG's most expensive path.  Under mild assumptions,
Ben-David et al.~\cite{BBFGGMS16} showed that a work-stealing scheduler
can execute an algorithm whose \ourmodel{} complexity is
work $W$ and \depth{} $D$ in $O(W / P + \wcost D)$ expected time
on $P$ processors.

In both models,
the number of \defn{writes} refers only to the writes to the
asymmetric memory, ignoring any writes to \local{}
memory.  All reads and writes are to words of size $\Theta(\log n)$
for input size $n$.  The size of the \local{} memory is measured in
words.

\myparagraph{Related Work}
Read-write asymmetries have been studied in the context of NAND Flash
chips~\cite{BT06, Eppstein14, Gal05, ParkS09}, focusing on how to
balance the writes across the chip to avoid uneven wear-out of
locations.  Targeting instead the new memory technologies, read-write
asymmetries have been an active area of research in the
systems/database/architecture communities (e.g., \cite{Arulraj17sigmod,
  Bausch2012damon, Chen11, Chen2015pvldb, ChoL09, LeeIMB09,
  Oukid2016sigmod, Viglas12, Viglas14, wang2013wade, yang:iscas07,
  zhou2012writeback, ZhouZYZ09, ZWT13}).  In the algorithms community,
Blelloch et al.~\cite{BFGGS15} defined several sequential and parallel
computation models that take asymmetric read-write costs into account,
and analyzed and designed sorting algorithms under these models.
Their follow-up paper~\cite{blelloch2016efficient} presented
sequential algorithms for various problems that do better than their
classic counterparts under asymmetric read-write costs, as well as
several lower bounds.  Carson et al.~\cite{carson2016write} presented
write-efficient sequential algorithms for a similar model, as well as
write-efficient parallel algorithms (and lower bounds) on a
distributed memory model with asymmetric read-write costs, focusing on
linear algebra problems and direct N-body methods.  Ben-David et
al.~\cite{BBFGGMS16} proposed a nested-parallel model with asymmetric
read-write costs and presented write-efficient, work-efficient, low
\depth{} (span) parallel algorithms for reduce, list contraction, tree
contraction, breadth-first search, ordered filter, and planar convex
hull, as well as a write-efficient, low-\depth{} minimum spanning tree
algorithm that is nearly work-efficient.  Jacob and
Sitchinava~\cite{jacob2017} showed lower bounds for an asymmetric
external memory model.  In each of these models, there is a small
amount of \local{} memory that can be used to help minimize the number
of writes to the large asymmetric memory.

Although graph decompositions with various properties have been shown
to be quite useful in a large variety of applications (e.g.,
\cite{abraham2012using, awerbuch1985complexity, awerbuch1992low, awerbuch1989network, blelloch2014nearly, linial1991decomposing, miller2013parallel}),
to our knowledge none of the prior
algorithms provide the necessary conditions for processing graphs
with a sublinear number of writes in order to answer connectivity/biconnectivity
queries (targeting instead other decomposition properties
that are unnecessary in our setting, such as few edges between
clusters).  For example, Miller et al.'s~\cite{miller2013parallel}
parallel low-diameter decomposition algorithm requires at least
$\Omega(n)$ writes (even if a write-efficient BFS~\cite{BBFGGMS16} is
used), and provides no guarantees on the partition sizes.  Similarly,
algorithms for size-balanced graph partitioning (e.g.,
\cite{andreev2004balanced}) require $\Omega(n)$ writes.  Our
\implicit{} construction is reminiscient of sublinear time algorithms
for estimating the number of connected
components~\cite{Berenbrink2014ipl, chazelle2005siam} in its use of
BFS from a sample of the vertices.  However, their BFS is used for a
completely different purpose (approximate counting of $1/n_u$, the
inverse of the size of the connected component containing a sampled
node $u$), does not provide a partitioning of the nodes into clusters
(two BFS from different nodes can overlap), and cannot be used for
connectivity or biconnectivity queries (two BFS from the same
connected component may be truncated before intersecting).

\section{Implicit Decomposition}\label{sec:implicit}

In this paper we introduce the concept of an implicit decomposition.
The idea is to partition a graph into connected clusters such that all we need to store to
represent the cluster is one representative, which we call the
center of the cluster, and some small amount of information on that
center (1 bit in our case).  The goal is to
quickly answer queries on the cluster.  The queries we consider are:
given a vertex find its center, and given a center
find all its vertices.  To reduce the amount of \local-memory
needed, we need all clusters to be roughly the same size.
We start with some definitions, which consider only undirected
graphs.

For graph $G = (V,E)$ we refer to the subgraph induced by a subset of
vertices as a \defn{cluster}.  A \defn{decomposition} of a connected
graph $G=(V,E)$ is a vertex subset $S\subset V$, called
\defn{centers}, and a function $\rho(v):V\to S$, such that the
\defn{cluster} $C(s) = \{v \in V~|~\rho(v) = s\}$ for each center $s \in S$
is connected.  A decomposition is a \defn{k-decomposition} if the
size of each cluster is upper bounded by $k$, and $|S|=O(n/k)$
(i.e. all clusters are about the same size).
We are often interested in the graph induced by the
decomposition, and in particular:
\begin{definition}[\clustergraph]
Given the decomposition $(S,\rho)$ of a graph $G = (V,E)$, the \defn{\clustergraph} is the multigraph $G' = (S,\langle\ \{\rho(u),
    \rho(v)\} : \{u,v\} \in E, \rho(u) \neq \rho(v)\ \rangle\ )$.
\end{definition}

\begin{definition}[implicit decomposition]
An \defn{implicit decomposition} of a connected graph
$G=(V,E)$ is a decomposition $(S,\rho,\ell)$ such that $\rho(\cdot)$ is
defined implicitly in terms of an algorithm given only $G$, $S$, and a
(short) labeling $\ell(s)$ on $s \in S$.
\end{definition}

\begin{figure}[t]
\centering
  \includegraphics[width=.8\columnwidth]{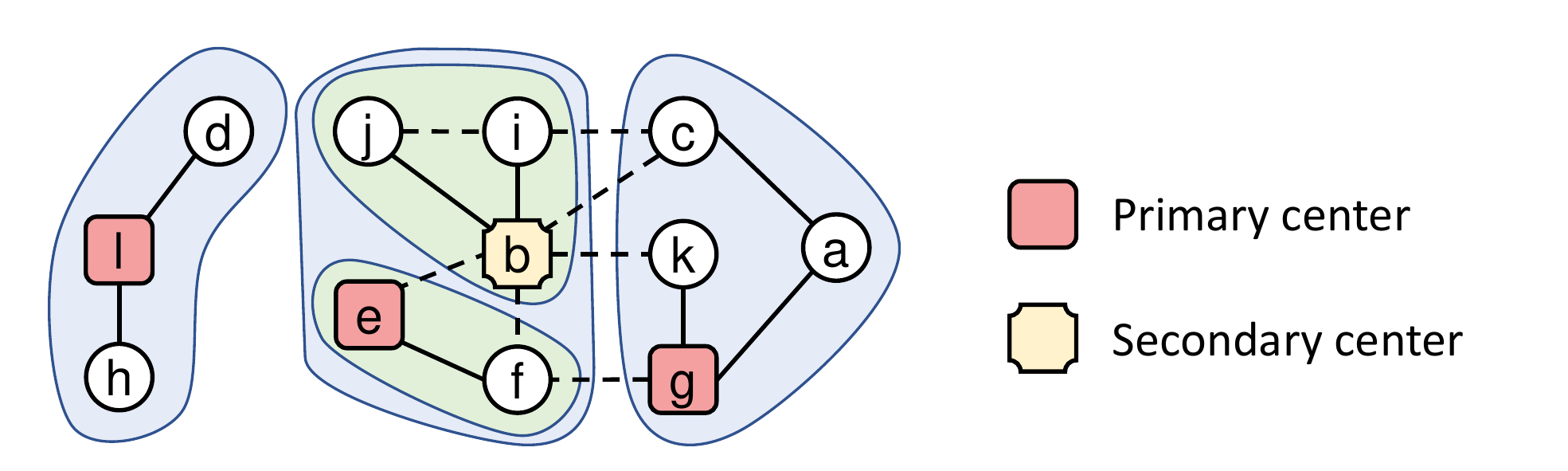}
  \vspace{.05in}
\caption{An example \implicit{} for $k = 4$ consisting of clusters $\{d,h,l\}, \{i,j,b\},\{e,f\},$ and $\{a,c,g,k\}$  .  In the graph,
  $j's$ primary center is $e$ (i.e., $\rho_0(j) = e$) and its
  secondary center is $b$ (i.e., $\rho(j) = b$).   Note $b$ is on a
  shortest path to $e$.   Also note that $c$ is closer to the
  secondary center $b$ than to $g$, but picks $g$ as its secondary
  (and primary) center, since $g$ is its primary center.  In breaking
  ties we assume lower letters have higher priorities.}
\label{fig:cluster}
 \end{figure}

In this paper, we use \implicit{}s.  An example is given in
Figure~\ref{fig:cluster}.  Our goal is to construct and query the
decomposition quickly, while using short labels.  Our main result is
the following.

\begin{theorem}\label{thm:mainimp}
An \implicit{} $(S,\rho,\ell)$ can be constructed on an
undirected bounded-degree graph $G = (V,E)$ with $|V| = n$ such that:
\begin{itemize}
\setlength{\itemsep}{1pt}
\setlength{\parskip}{0pt}
\setlength{\parsep}{0pt}
\item the construction takes
$O(kn)$ operations and $O(n/k)$ writes, both in expectation;
\item  $\rho(v), v \in V$ takes $O(k)$ operations in expectation
and $O(k \log n)$ \whp{}, and no writes;
\item $C(s), s \in S$ takes $O(k^2)$ operations in expectation,
  and $O(k^2 \log n)$ \whp{} and no writes;
\item the labels $\ell(s), s \in S$ are $1$-bit each; and,
\item construction and queries take $O(k \log n)$ \local{}
  memory \whp{}.
\end{itemize}
\end{theorem}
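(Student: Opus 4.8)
The plan is to build the decomposition in two stages and then bound construction cost, query cost, and working space in turn. \emph{Stage~1 (primary clusters).} Fix a total order on $V$ (the priorities of the excerpt), place each vertex in a set $R$ independently with probability $1/k$, and let the \emph{primary center} $\rho_0(v)$ of $v$ be the element of $R$ nearest to $v$, ties broken by priority; set $C_0(r)=\rho_0^{-1}(r)$. A short exchange argument shows each $C_0(r)$ is connected: given $u\in C_0(r)$, let $u'$ be the next vertex toward $r$ on a shortest path; if $\rho_0(u')\neq r$ then, using $d(u,r)=d(u',r)+1$ and the priority tie-break, some other center is at least as good for $u$, a contradiction. \emph{Stage~2 (repair).} Call $r$ \emph{heavy} if $|C_0(r)|>k$. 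For each heavy $r$, build a BFS tree $T_r$ of $C_0(r)$ rooted at $r$ and run the standard bottom-up carving of a bounded-degree tree into connected pieces: process $T_r$ bottom up, and whenever the still-uncut subtree hanging at a node reaches $k/\Delta$ vertices, cut it off as a piece whose root becomes a \emph{secondary center} (the residue at $r$ is $r$'s own piece). Each piece is connected and has at most $k$ vertices, and a heavy $C_0(r)$ yields $O(|C_0(r)|/k)$ of them. Let $S=R\cup\{\text{secondary centers}\}$, let $\ell(s)=1$ on secondary centers and $0$ on primary centers, and let $\rho(v)$ return $\rho_0(v)$ if it is light and otherwise the center of the unique piece of $C_0(\rho_0(v))$ containing $v$. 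The clusters $C(s)$ are exactly the light primary clusters together with the carved pieces, hence all connected and of size at most $k$.

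\textbf{Counting centers and writes.} Here $E[|R|]=n/k$, and since the $C_0(r)$ partition $V$, the number of secondary centers is $\sum_{r\ \text{heavy}}O(|C_0(r)|/k)\le O(n/k)$ deterministically given $R$. Thus $E[|S|]=O(n/k)$, and the only asymmetric-memory writes during construction are the $O(n/k)$ words storing $S$ and its one-bit labels. For the operation count, processing a center $r$ means enumerating $C_0(r)$ and, if $r$ is heavy, carving it; enumerating $C_0(r)$ costs $O(|C_0(r)|)$ once the bounded-degree neighborhood needed to certify membership has been scanned, so the total is $O\big(\sum_r |C_0(r)|\cdot(\text{local overhead})\big)=O(kn)$ in expectation.

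\textbf{Query costs.} For $\rho(v)$: run a BFS out of $v$ until all vertices up to the distance of the nearest element of $R$ have been scanned; since each vertex is sampled independently with probability $1/k$, a sample-free ball of $t$ vertices has probability $(1-1/k)^t$, from which the scanned region has $O(k)$ vertices in expectation and $O(k\log n)$ \whp{} (some care is needed because the BFS ball can grow quickly in expanding regions). This identifies $r=\rho_0(v)$; reading $\ell(r)$, if $r$ is light we return it, and otherwise we re-enumerate $C_0(r)$, rebuild $T_r$, redo the linear-time carving, and read off the piece containing $v$. The extra work is dominated by $|C_0(r)|$ together with the bounded-degree neighborhood scanned to certify its vertices, so $\rho(v)$ costs $O(k)$ in expectation and $O(k\log n)$ \whp{} provided one can show that for a fixed $v$ in a bounded-degree graph $E_R[\,|C_0(\rho_0(v))|\,]=O(k)$ and $|C_0(\rho_0(v))|=O(k\log n)$ \whp{}; no writes are performed. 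For $C(s)$: if $s$ is a light primary center, BFS out of $s$ and test each of the $O(k)$ candidate vertices for membership with a $\rho_0(\cdot)$-query of cost $O(k)$, giving $O(k^2)$ expected operations (and $O(k^2\log n)$ \whp{}); if $s$ is heavy or secondary, first recover the heavy primary $r$ responsible ($\rho_0(s)$, or $s$ itself), then enumerate and carve $C_0(r)$ and output the relevant piece, again $O(k^2)$ expected. In each case the working set is a single BFS frontier/tree over the explored region, i.e. $O(k\log n)$ words \whp{}, and every label is one bit by construction.

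\textbf{The main obstacle.} The exchange argument for connectivity and the geometric-tail estimates for ``time to reach the nearest sample'' are routine. The real work lies in the heavy clusters, and it has two intertwined parts. First, one must prove that in \emph{every} bounded-degree graph the primary cluster containing a fixed vertex has expected size $O(k)$ and size $O(k\log n)$ \whp{}: a sample-free region of $t$ vertices around a point has probability exponentially small in $t$, but turning this into a bound on the whole Voronoi cell has to work uniformly across path-like regions (cells wide and shallow) and expander-like regions (cells narrow and deep, where naive unions over BFS layers diverge). Second, and harder, the secondary-repair procedure must be designed so that a vertex inside a heavy cluster can recover $\rho(v)$ by exploring only an $O(k)$-expected-size neighborhood, while every final cluster stays connected, has at most $k$ vertices, and carries only one bit of metadata; the tension is that keeping $|S|=O(n/k)$ forces the carved pieces \emph{not} to be Voronoi cells, yet the query must still reconstruct them locally from $G$, $S$, and those one-bit labels. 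Obtaining connectivity, the strict size cap, local queryability, and one-bit labels simultaneously is the crux; once the carving and a lemma certifying its locality are in hand, the cost bounds follow from the accounting above.
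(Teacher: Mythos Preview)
Your construction is close in spirit to the paper's (random primary sample, then a tree-carving repair), and the write/center counts are fine. The genuine gap is in the cost of the $\rho(v)$ query. When $\rho_0(v)$ is heavy, your procedure ``re-enumerate $C_0(r)$, rebuild $T_r$, redo the linear-time carving.'' But enumerating $C_0(r)$ is \emph{not} an $O(|C_0(r)|)$ operation: to certify that a candidate vertex $u$ lies in $C_0(r)$ you must verify that $r$ is $u$'s nearest primary sample, and that verification is itself a BFS from $u$ of expected size $\Theta(k)$ (there is no shortcut---a vertex whose BFS-from-$r$ parent lies in $C_0(r)$ can still be closer to some other sample). Hence enumerating $C_0(r)$ costs $\Theta(k\,|C_0(r)|)$, and since ``heavy'' means $|C_0(r)|>k$, your $\rho(v)$ query costs $\Omega(k^2)$ operations whenever the heavy branch is taken. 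Even granting your conjecture $E[|C_0(\rho_0(v))|]=O(k)$, the bound would be $O(k^2)$, not the required $O(k)$. The same accounting makes your $C(s)$ query for a secondary $s$ cost $\Theta(k\,|C_0(r)|)$ for the heavy primary $r$ that spawned $s$, and here you would need a bound on $|C_0(r)|$ conditioned on $r$ being heavy, not on the cell seen from a fixed vertex.

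The paper avoids this by defining $\rho$ differently. Secondary centers are placed on the shortest-path tree rooted at each primary center, and $\rho(v)$ is declared to be the \emph{first} center (primary or secondary) encountered along the unique tie-broken shortest path $\mathrm{SP}(v,\rho_0(v))$. With that rule, computing $\rho(v)$ never touches the rest of the primary cluster: BFS from $v$ until a primary center is hit (expected $O(k)$ vertices, $O(k\log n)$ \whp{}), then walk back along the single path checking membership in $S$ at each step---$O(k)$ total, no re-carving. Your bottom-up carving produces valid pieces, but it forces the query to reconstruct a global object; the path-based definition of $\rho$ is exactly the missing idea that makes the query local and gives the $O(k)$ bound.
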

\noindent
Note that this theorem indicates a linear tradeoff of reads
(operations) and writes for the construction that is controlled by
$k$.

At a high level, the construction algorithm works by identifying a
subset of centers such that every vertex can quickly find its nearest
center without having to keep a pointer to it (which would require too
many writes).  It first selects a random subset of the vertices where
each vertex is selected with probability $1/k$.  We call these the
\defn{primary centers} and denote them as $S_0$.  All other vertices
are then assigned to the nearest such center.  Unfortunately, a
cluster defined in this way can be significantly larger than $k$
(super polynomial in $k$).  To handle this, the algorithm identifies an
additional $O(n/k)$ \defn{secondary centers}, $S_1$.  Every vertex $v$
is associated with a primary center $\rho_0(v) \in S_0$, and an actual
center $\rho(v) \in S = S_0 \cup S_1$.  The only values the the
algorithm stores are the set $S$ and the bits $\ell(s), s \in S$
indicating whether it is primary or secondary.

\newcommand{\p}{\mb{SP}}

In our construction it is important to break ties among equal-length
paths in a consistent way, such that subpaths of a shortest path are
themselves a unique shortest path.  For this purpose we assume the
vertices have a total ordering (and comparing two vertices takes
constant time).  Among two equal hop-length paths from a vertex $u$,
consider the first vertex where the paths diverge.  We say that the
path with the higher priority vertex at that position is shorter.  Let
$\p(u,v)$ be the shortest path between $u$ and $v$ under this
definition for breaking ties, and $L(\p(u,v))$ be its length such that
comparing $L(\p(u,v))$ and $L(\p(u,w))$ breaks ties as defined.  By
our definition all subpaths of a shortest path are also unique
shortest paths for a fixed vertex ordering.  Based on these
definitions we specify $\rho_0(v)$ and $\rho$ as follows:

$$\rho_0(v) = \argmin_{u \in S_0} L(v,u)$$
$$\rho(v) = \argmin_{u \in S \wedge u \in {\scriptsize \p}(v,\rho_0(v))} L(v,u)$$

The definitions indicate that a vertex's center is the first center
encountered on the shortest path to the nearest primary center.  This
could either be a primary or secondary center (see
Figure~\ref{fig:cluster}). $\rho(v)$ is defined in this manner to
prevent vertices from being reassigned to secondary centers created in
other primary clusters, which could result in oversized clusters.

We now describe how to find $\rho(v)$ for every vertex $v$.
First, we find $v$'s closest primary center by running a
BFS from $v$ until we hit a vertex in $S_0$. The BFS orders the vertices by $L(\p(v,\cdot))$.
To find $\rho(v)$
we first search for the primary center of $v$ ($\rho_0(v)$) and then
identify the first center on the path from $v$ to $\rho_0(v)$,
possibly $\rho_0(v)$ itself.

\begin{lemma}
\label{lemma:findcenter}
$\rho(v)$ can be found in $O(k)$ operations in expectation, and
$O(k \log n)$ operations \whp{}, and using
$O(k \log n)$ \local{} memory \whp{}.
\end{lemma}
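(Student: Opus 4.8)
The plan is to analyze the procedure sketched above directly. To compute $\rho(v)$ I would first run a breadth-first search from $v$ that visits vertices in increasing order of $L(\p(v,\cdot))$, halting the instant it reaches a vertex of $S_0$; this vertex is $\rho_0(v)$. The search records a parent pointer for each visited vertex, so the tie-broken shortest path $\p(v,\rho_0(v)) = (v = p_0, p_1, \dots, p_d = \rho_0(v))$ can be recovered by walking parents. I would then scan $p_0, p_1, \dots$ and return the first $p_i$ lying in $S$; since $L(v,p_i)$ increases along the path and $\rho_0(v) = p_d \in S_0 \subseteq S$, that $p_i$ is exactly $\argmin_{u \in S,\ u \in \p(v,\rho_0(v))} L(v,u) = \rho(v)$. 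The sets $S$ and $S_0$ (identified within $S$ from the labels $\ell$) and the graph $G$ all reside in the read-only asymmetric memory, so the procedure performs no writes to it; its only working storage is the BFS tree, held in \local{} memory.

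The heart of the cost bound is the observation that the order in which this search visits vertices is a fixed permutation determined by $G$ and the fixed total order on the vertices alone --- it does not depend on the random sample $S_0$. Writing $v = v_1, v_2, v_3, \dots$ for that order and letting $T$ be the index of the first vertex that lies in $S_0$, the events $\{v_i \in S_0\}$ over distinct $v_i$ are mutually independent and each has probability $1/k$, so $\Pr[T > t] \le (1-1/k)^t$. Hence the mean of $T$ is $\sum_{t \ge 0}\Pr[T>t] \le \sum_{t\ge0}(1-1/k)^t = k$, and $\Pr[T > c k \ln n] \le (1-1/k)^{c k \ln n} \le e^{-c\ln n} = n^{-c}$, so $T = O(k\log n)$ \whp{}. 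Thus the search visits $O(k)$ vertices in expectation and $O(k\log n)$ \whp{}.

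It then remains to check that the work and the \local{}-memory usage are both $O(T)$ up to the degree constant. Since $G$ is bounded-degree the search inspects only $O(T)$ edges, and each visited vertex together with its parent pointer occupies $O(1)$ words. The one point that needs care is honoring the tie-break order within a BFS level without calling a general-purpose sort. I would do this incrementally: given level $d$ listed in increasing $L(\p(v,\cdot))$-order, generate level $d+1$ by iterating over level $d$ in that order and, within each such vertex, over its not-yet-visited neighbors in vertex-ID order, recording the current level-$d$ vertex as parent and letting a vertex reached from several level-$d$ vertices keep its first (highest-priority) discoverer. A short case analysis from the definition of $L(\p(v,\cdot))$ --- two length-$(d+1)$ canonical paths either diverge before their final edge, so the comparison reduces to the parents' positions within level $d$, or diverge only at their final vertex, so it reduces to comparing IDs --- should confirm that level $d+1$ emerges already sorted, at cost proportional to the size of level $d$. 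Summing over levels gives $O(T)$ search cost, and the closing scan of the path, with its $O(1)$-time membership tests against $S$, adds another $O(T)$; combining with the bounds on $T$ yields the lemma.

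I expect the incremental level-generation argument to be the main obstacle: a naive BFS that re-derives path priorities by walking up the search tree would inflate the per-vertex cost by an $O(k)$ or $O(\log n)$ factor, so one has to check carefully that the claimed linear-in-$T$ bound survives the tie-breaking rule. The probabilistic half --- turning the stopping index into a genuine geometric variable via the independence of the visitation order from $S_0$ --- is the key idea but is essentially a one-line computation once that independence is noticed.
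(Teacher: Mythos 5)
Your proposal follows essentially the same approach as the paper: observe that the BFS visitation order from $v$ is a fixed permutation independent of the random sample $S_0$, so the number of vertices visited before hitting $S_0$ is geometric with mean $k$ and is $O(k\log n)$ \whp{}, and then bound operations and \local{} memory by the search size using bounded degree. The only difference is that you spell out the incremental, level-by-level tie-breaking argument that the paper dismisses with ``placing them in priority order in the queue is easy''; your elaboration is correct and fills in that detail.
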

\begin{proof}
Note that the search order from a vertex is deterministic and
independent of the
sampling used to select $S_0$.
Therefore, the expected number of vertices visited before hitting a
vertex in $S_0$ is $k$.  By tail bounds, the probability
of visiting $O(c k \log n)$ vertices before hitting one in $S_0$ is
at most $1/n^c$.  The search is a BFS, so it takes time linear
in the number of vertices visited.  Since the vertices are of bounded
degree, placing them in priority order in the queue is easy.  Once the
primary center is found, a search back on the path gives the actual center.
We assume that \local{} memory is
used for the search so no writes to the asymmetric memory are
required.  The memory used is proportional to the search size, which
is proportional to the number of operations; $O(k)$ in expectation and $O(k
\log n)$ \whp{}.
\end{proof}

The space requirement for the \local{} memory is $O(k \log n)$, which is considered to be realistic and easy to satisfied since we set $k=\sqrt{\wcost{}}$ when using this decomposition later in this paper.

We use the following lemma\hide{, whose proof is in
Appendix~\ref{sec:appendix-implicit},} to help find $C(s)$ for a center $s$.

\begin{lemma}
	\label{lemma:clustertree}
	The union of the shortest paths $\p(v,\rho(v))$ for $v \in V$
        define a rooted spanning tree on each cluster, with the center
        as the root (assuming the path edges are directed to $\rho(v)$).
\end{lemma}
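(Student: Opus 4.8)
The plan is to first prove the key ``cluster‑containment'' fact --- that every shortest path $\p(v,\rho(v))$ stays inside the cluster $C(\rho(v))$ --- and then read the tree off from it. Concretely, I would prove: \emph{for every $v\in V$ and every vertex $w$ on $\p(v,\rho(v))$, $\rho(w)=\rho(v)$}. This is done in two sub‑steps, both using the stated consistency property that subpaths of shortest paths are themselves the unique shortest paths, so $\p(v,x)=\p(v,w)\cdot\p(w,x)$ whenever $w$ lies on $\p(v,x)$. (1a) If $w$ lies on $\p(v,\rho_0(v))$ then $\rho_0(w)=\rho_0(v)$: writing $s_0=\rho_0(v)$, suppose some $s'\in S_0$ had $L(\p(w,s'))\prec L(\p(w,s_0))$; then concatenating $\p(v,w)$ with $\p(w,s')$ yields a $v$‑to‑$s'$ walk strictly ``shorter'' (in the tie‑break order) than $\p(v,w)\cdot\p(w,s_0)=\p(v,s_0)$, contradicting that $s_0$ is the closest primary center to $v$; hence $\rho_0(w)=s_0$, and since $\p(v,\rho(v))$ is a subpath of $\p(v,\rho_0(v))$ the same holds for $w$ on $\p(v,\rho(v))$. (1b) Given $\rho_0(w)=s_0$: by definition $\rho(w)$ is the first vertex of $S$ encountered along $\p(w,s_0)$, which is the tail of $\p(v,s_0)$ starting at $w$; since $\rho(v)$ is the first vertex of $S$ along $\p(v,s_0)$ and $w$ sits between $v$ and $\rho(v)$ on that path, no vertex strictly before $\rho(v)$ is in $S$, so the first $S$‑vertex reached from $w$ is again $\rho(v)$.

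Next I would fix a center $s$ and define $\pi(v)$ to be the second vertex of $\p(v,\rho(v))$ for $v\in C(s)$ with $v\neq s$ (and take $s$ itself as the root). Since $\pi(v)$ lies on $\p(v,\rho(v))$, the containment fact gives $\rho(\pi(v))=\rho(v)=s$, so $\pi$ maps $C(s)$ into $C(s)$ and $\{v,\pi(v)\}$ is an edge of the subgraph induced by $C(s)$. Iterating $\pi$ from $v$ traces out $\p(v,s)$ one vertex at a time --- because $\p(\pi(v),s)$ is exactly $\p(v,s)$ with its first vertex removed --- so the hop‑distance to $s$ strictly decreases each step and $s$ is reached. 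Thus $\pi$ restricted to $C(s)$ is a functional graph with no directed cycles other than the fixed point $s$, in which every vertex reaches $s$; the $|C(s)|-1$ distinct edges $\{v,\pi(v)\}$ ($v\in C(s),\,v\neq s\}$), oriented toward $s$, therefore form a spanning in‑tree of $C(s)$ rooted at $s$. Finally I would observe this edge set is precisely the union of the edge sets of the paths $\p(v,\rho(v))$, $v\in C(s)$: every edge of such a path has the form $\{x,\pi(x)\}$ for some $x$ on it, and conversely $\{v,\pi(v)\}$ is the first edge of $\p(v,s)$.

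The hard part will be sub‑step (1a): the exchange argument compares the concatenated walk $\p(v,w)\cdot\p(w,s')$ against the genuine shortest path $\p(v,s')$ under the tie‑breaking order. When the hop‑counts are strictly ordered this is immediate; in the equal‑hop‑count case one must invoke that any walk of minimum hop‑count is a simple path (otherwise it could be shortcut), so the concatenated walk is a legitimate competitor for $\p(v,s')$ and the first‑divergence priority comparison is valid. Everything after the containment fact --- the map $\pi$, acyclicity, and the edge count --- is routine bookkeeping.
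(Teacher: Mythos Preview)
Your proposal is correct and follows essentially the same route as the paper: first show that $\rho_0$ is preserved along $\p(v,\rho_0(v))$ via the exchange/``substitution and subtraction'' argument, then observe that secondary centers merely cut the resulting primary tree into subtrees so that $\rho$ is also preserved along $\p(v,\rho(v))$. Your treatment is more explicit in two respects --- you spell out the parent map $\pi$ where the paper simply asserts ``this implies a rooted tree,'' and you correctly flag and resolve the equal-hop-count subtlety in the exchange argument (the concatenation must be simple, else the hop count would drop and give the strict case) --- but the underlying ideas are the same.
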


\begin{proof}
	We first show that this is true for the clusters defined by the
	primary centers $S$ ($\rho_0(v))$.
        We use the notation
        $\p(u,v)+\p(v,w)$ to indicate joining the two shortest paths
        at $v$. Consider a vertex $v$ with
        Consider a vertex $v$ with
	$\rho_0(v) = s$, and consider all the vertices $P$ on the shortest
	path from $v$ to $s$.  The claim is that for each $u \in P, \rho(u) =
	s$ and $\mb{SP}(u,s)$ is a subpath of $P$. This implies a
	rooted tree. To see that $\rho(u) = s$ note that the shortest path
	from $u$ to a primary vertex $t$ has length $L(\mb{SP}(u,t))$. We can write the
	length of the shortest path from $v$ to $t$ as
	$L(\mb{SP}(v,t)) \leq L(\mb{SP}(v,u) + \mb{SP}(u,t))$ and the length of the
	shortest path from $v$ to $s$ as
	$L(\mb{SP}(v,s)) = L(\mb{SP}(v,u) + \mb{SP}(u,s))$.
	We know that since $\rho_0(v) = s$
	that $L(\mb{SP}(v,s)) < L(\mb{SP}(v,t))$ $\forall t \neq s$. Through substitution and subtraction,
	we see that $L(\mb{SP}(u,s)) < L(\mb{SP}(u,t))$ $\forall t \neq s$. This means that $\rho_0(u) = s$.
	We know that $\mb{SP}(u,s)$ cannot contain the edge $b$ that $v$ takes to reach $u$
	in $\mb{SP}(v,s)$ since $u \in \mb{SP}(v,s)$. Since the search from $u$ excluding
	$b$ will have the same priorities as the search from $v$ when it
	reaches $u$, $\mb{SP}(u,s)$ is a subpath of $P$.
	
	Now consider the clusters defined by $\rho(v)$.  The secondary centers
	associated with a primary center $s$ partition the tree for $s$ into
	subtrees, each rooted at one of those centers and going down until
	another center is hit.  Each vertex in the tree for $s$ will be
	assigned the correct partition by $\rho(v)$ since each will be
	assigned to the first secondary center on the way to the primary
	center.
\end{proof}

The set of solid edges in Figure~\ref{fig:cluster} is an example of
the spanning forest.  This gives the following.

\begin{corollary}
	\label{cor:clustertree}
For any vertex $v$, $\p(v,\rho(v)) \subseteq C(\rho(v))$.
\end{corollary}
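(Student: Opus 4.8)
The plan is to derive this directly from Lemma~\ref{lemma:clustertree}. That lemma asserts that the union of the shortest paths $\{\p(v,\rho(v)) : v \in V\}$ forms, on each cluster $C(s)$, a rooted spanning tree with root $s$. So, fixing an arbitrary vertex $v$ and setting $s = \rho(v)$, the path $\p(v,s)$ is precisely the root-path of $v$ in that tree, and since the tree spans only the vertices of $C(s)$, every vertex visited by $\p(v,s)$ lies in $C(s)$ --- which is exactly the claim.

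The only point that needs care is justifying that $\p(v,\rho(v))$ really is a path inside the tree on $C(\rho(v))$, i.e., that no intermediate vertex on it has been assigned to a different cluster. I would verify this by reusing the subpath property established in the proof of Lemma~\ref{lemma:clustertree}: for any $w$ lying on $\p(v,\rho_0(v))$ we have $\rho_0(w) = \rho_0(v)$, and $\p(w,\rho_0(v))$ is the corresponding suffix of $\p(v,\rho_0(v))$. Now observe that the prefix $\p(v,\rho(v))$ of $\p(v,\rho_0(v))$ contains no center in its interior, since by definition $\rho(v)$ is the \emph{first} center encountered along $\p(v,\rho_0(v))$. Hence for each interior vertex $w$ of $\p(v,\rho(v))$, the first center encountered along $\p(w,\rho_0(w)) = \p(w,\rho_0(v))$ is $\rho(v)$ itself, so $\rho(w) = \rho(v)$ and $w \in C(\rho(v))$; the endpoint $\rho(v)$ trivially lies in $C(\rho(v))$, and the endpoint $v$ does as well.

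I expect no substantive obstacle here: the statement is essentially a bookkeeping consequence of the spanning-tree structure already proved. The one thing to get right is the interaction with the secondary centers --- one must confirm that the reassignment rule defining $\rho(\cdot)$ cannot pull a vertex in the interior of $\p(v,\rho(v))$ out into a neighboring cluster --- and this is ruled out precisely because that path has no center strictly between its two endpoints.
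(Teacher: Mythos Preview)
Your proposal is correct and matches the paper's approach: the paper states the corollary immediately after Lemma~\ref{lemma:clustertree} with only ``This gives the following,'' treating it as an immediate consequence of the spanning-tree structure, which is precisely your first paragraph. Your second paragraph spells out more carefully (and correctly) why no interior vertex of $\p(v,\rho(v))$ can be reassigned by the secondary-center rule, but this is already implicit in the last paragraph of the paper's proof of Lemma~\ref{lemma:clustertree}, so you are simply making explicit what the paper leaves tacit.
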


\begin{lemma}
\label{lemma:findcluster}
For any vertex $s \in S$, its cluster $C(s)$
can be found in $O(k \lvert C(s)\rvert)$ operations in expectation and $O(k
\lvert C(s)\rvert \log n)$ operations \whp{}, and using $O(\lvert C(v)\rvert + k\log n)$ \local{} memory \whp{}.
\end{lemma}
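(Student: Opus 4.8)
The plan is to compute $C(s)$ by a graph search (a BFS) rooted at $s$ that only \emph{expands} vertices it has already verified to lie in $C(s)$. Concretely: maintain a set $T$, initially $\{s\}$, of discovered cluster vertices, together with a FIFO queue initialized with $s$. To process a vertex $u$ popped from the queue, scan its (constantly many) neighbors; for each neighbor $w$ not yet examined, invoke the procedure of Lemma~\ref{lemma:findcenter} to compute $\rho(w)$, mark $w$ examined, and if $\rho(w)=s$ add $w$ to $T$ and to the queue. When the queue empties, output $T$.

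Correctness --- i.e.\ that $T = C(s)$ at termination --- is where the structural lemmas do the work. The inclusion $T \subseteq C(s)$ is immediate, since a vertex enters $T$ only after its center is confirmed to be $s$. For $C(s) \subseteq T$, take any $v \in C(s)$; since $\rho(v) = s$, Corollary~\ref{cor:clustertree} gives that the shortest path $\p(v,\rho(v)) = \p(v,s)$ lies entirely inside $C(s)$, so there is an $s$--$v$ path all of whose vertices have center $s$. Walking this path from $s$ toward $v$ and inducting on position: the root $s$ starts in $T$, and whenever the predecessor of a path vertex is processed, the search examines that vertex, finds its center equals $s$, and inserts it into $T$; hence $v \in T$. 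Thus restricting expansion to verified cluster members never disconnects the search from any part of $C(s)$ --- this is exactly the point where Lemma~\ref{lemma:clustertree}/Corollary~\ref{cor:clustertree} is needed, and it is the only genuinely non-mechanical step of the argument.

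For the cost, every examined vertex is either in $C(s)$ or a neighbor of some vertex of $C(s)$; since $G$ has bounded degree, there are $O(\lvert C(s)\rvert)$ such vertices, and processing each costs $O(1)$ for the neighbor scan plus one call to Lemma~\ref{lemma:findcenter}, i.e.\ $O(k)$ operations in expectation and $O(k \log n)$ \whp{}. By linearity of expectation over the $O(\lvert C(s)\rvert)$ examined vertices, the total is $O(k \lvert C(s)\rvert)$ operations in expectation; taking a union bound over these (at most $n$) calls to Lemma~\ref{lemma:findcenter} shows that all of them simultaneously run in $O(k \log n)$ operations \whp{}, giving $O(k \lvert C(s)\rvert \log n)$ operations \whp{}. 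The procedure performs no writes to the asymmetric memory.

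Finally, the \local{} memory holds $T$, the queue, and the set of examined vertices, each of size $O(\lvert C(s)\rvert)$, plus the working space of a single call to Lemma~\ref{lemma:findcenter}, which can be reused across calls and is $O(k \log n)$ \whp{}. Hence the \local{} memory is $O(\lvert C(s)\rvert + k \log n)$ \whp{}, completing the proof. I expect the bookkeeping (bounded-degree counting, the union bound for the \whp{} statement, and bounding reused versus persistent \local{} memory) to be routine, and the only real obstacle to be the correctness argument of the second paragraph, which rests on the cluster-as-rooted-spanning-tree structure established earlier.
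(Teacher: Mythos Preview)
Your proposal is correct and follows essentially the same approach as the paper: a BFS from $s$ that expands only vertices whose center is verified to be $s$, with correctness via Corollary~\ref{cor:clustertree}, the $O(\lvert C(s)\rvert)$ count of examined vertices via the bounded-degree assumption, and the per-vertex cost via Lemma~\ref{lemma:findcenter}. If anything, your treatment is more explicit than the paper's (e.g., the union bound for the \whp{} statement and the reuse of \local{} memory across calls).
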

\begin{proof}
For any center $s \in S$, identifying all the vertices in its cluster
$C(s)$ can be implemented as a BFS starting at $s$.  For each vertex
$v \in V$ that the BFS visits, the algorithm checks if $\rho(v) = s$.
If so, we add $v$ to $C(s)$ and put its unvisited neighbors in the BFS
queue, otherwise we do neither.
By Corollary~\ref{cor:clustertree}, any
vertex $v$ for which $\rho(v) = s$ must have a path to $s$ only through
other vertices whose center is $v$.   Therefore the algorithm will
visit all vertices in $C(s)$.  Furthermore, since the graph has
bounded degree it will only visit $O(C(s))$ vertices not in $C(s)$.
Each visit to a vertex $u$ requires finding $\rho(v)$.  Our bound on the
number of operations therefore follows from Lemma~\ref{lemma:findcenter}.
We use $O(|C(v)|)$ \local{} memory for storing the queue and $C(v)$,
and $O(k\log n)$ \whp{} for calculating $\rho(v)$.
\end{proof}

\begin{algorithm}[t]
\caption{Constructing $k$-Implicit Decomposition}
\label{algo:genclusters}

\KwIn{Connected bounded-degree graph $G=(V,E)$, parameter $k$}
\KwOut{A set of cluster centers $S_0$ and $S_1$ ($S = S_0 \bigcup S_1$)}
\smallskip
    Sample each vertex with probability $1/k$, and place in $S_0$\label{line:sample}\\
    $S_1 = \varnothing$\\
    \ForEach {\upshape vertex $v\in S_0$} {
       \mf{SecondaryCenters}($v$, $G$, $S_0$)
    }
    \Return $S_0$ and $S_1$\\
    \smallskip
    \SetKwProg{myfunc}{function}{}{}
    \myfunc{\upshape \mf{SecondaryCenters}($v$, $G$, $S$)} {
       Search from $v$ for the first $k$ vertices
       that have $v$ as their center.  This defines a tree.\\
       If the search exhausts all vertices with center $v$, \Return.\\
       Otherwise identify a vertex $u$ that partitions the tree
       such that its subtree and the rest of the tree
       are each at least a constant fraction of $k$.\\
       Add $u$ to $S_1$.\\
       \mf{SecondaryCenters}($v$, $G$, $S \cup u$)\\
       \mf{SecondaryCenters}($u$, $G$, $S \cup u$)
   }
\end{algorithm}

We now show how to select the secondary centers such that
the size of each cluster is at most $k$.
Algorithm~\ref{algo:genclusters} describes the process.  By
Lemma~\ref{lemma:clustertree}, before the secondary centers are
added, each primary vertex in $s \in S_0$ defines a rooted tree of
paths from the vertices in its cluster to $s$.  The function
\mf{SecondaryCenters} then recursively cuts up this tree into subtrees
rooted at each $u$ that is identified.

\begin{lemma}\label{lemma:genclusters}
Algorithm~\ref{algo:genclusters} runs in
$O(n k)$ operations and $O(n/k)$ writes (both in expectation), and
$O(k \log n)$ \local{} memory \whp{} on the \seqmodel{} Model.
It generates a $k$-implicit
decomposition $S$ of $G$ with labels distinguishing $S_0$ from $S_1$.
\end{lemma}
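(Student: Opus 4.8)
The plan is to verify three things separately: correctness (the output is a genuine $k$-implicit decomposition with the claimed labels), the write bound, and the operation/local-memory bounds. Correctness of the decomposition property reduces to a clean invariant about the recursion in \mf{SecondaryCenters}: at every recursive call, the argument $v$ (or $u$) is a center, and the set of vertices whose path to their primary center passes first through that argument forms a rooted subtree (by Lemma~\ref{lemma:clustertree} and Corollary~\ref{cor:clustertree}). Each call either certifies that this subtree has size $\le k$ and stops, or splits off a subtree of size $\Theta(k)$ by adding one new secondary center. So I would argue by induction on the recursion that when the algorithm terminates, every vertex belongs to a cluster of size $\le k$: the clusters are exactly the leaves of the recursion tree, and a leaf is reached only when its subtree is exhausted within the first $k$ vertices searched. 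The labels are immediate since $S_0$ is recorded at sampling time and every vertex added later goes into $S_1$.

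For the write bound, the only writes to asymmetric memory are: recording the $\Theta(n/k)$ primary centers sampled in line~\ref{line:sample} (expected $n/k$ by linearity of expectation), and adding secondary centers to $S_1$. The key combinatorial point is that each secondary center, when created, permanently separates off a subtree of at least $c k$ vertices (for the constant fraction $c$ guaranteed by the partitioning step), and these subtrees, together with the residual piece at the bottom of each split, are disjoint. Hence a charging argument over the $n$ vertices shows $|S_1| = O(n/k)$: each added center is charged to the $\ge ck$ vertices it peels away, and no vertex is charged twice because once peeled, a vertex lies in a completed subtree that the recursion never revisits. Therefore total writes are $O(n/k)$ in expectation. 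I should also note the existence of a vertex $u$ achieving a constant-fraction split — this follows from the standard fact that any rooted tree on more than $k$ nodes (here, bounded degree, so each node has $O(1)$ children) has a subtree whose size lies in $[k/(d{+}1), k]$ for branching factor $d$; since the search only expanded the first $k$ vertices, such a $u$ is visible in the explored portion.

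For the operation count, the work is dominated by the searches inside \mf{SecondaryCenters}. Each invocation performs a BFS of at most $k$ vertices (it searches for "the first $k$ vertices that have $v$ as their center"), and each visited vertex requires an evaluation of $\rho(\cdot)$, costing $O(k)$ expected operations by Lemma~\ref{lemma:findcenter}; so one invocation costs $O(k^2)$ expected. The number of invocations is $O(|S_0| + |S_1|) = O(n/k)$ in expectation, which would naively give $O(n k)$ — but I must be slightly careful about the product of two expectations / dependence, so I would instead charge work to vertices: each vertex is visited by $O(1)$ searches whose subtree eventually contains it (each split sends it to exactly one child call, and the recursion depth along any root-to-leaf path is $O(\log_k n)$ or simply bounded since each level shrinks the piece by a constant factor until below $k$), and each visit costs $O(k)$ for the $\rho$ evaluation, giving $O(nk)$ expected operations overall. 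Actually the cleanest accounting: the recursion tree has $O(n/k)$ leaves and internal nodes, each node touches $O(k)$ vertices at $O(k)$ cost each, totaling $O(n/k \cdot k^2) = O(nk)$. The local-memory bound is $O(k\log n)$ \whp{} because each search explores $O(k)$ vertices (whose $\rho$-evaluations each use $O(k\log n)$ local memory \whp{} by Lemma~\ref{lemma:findcenter}), and we reuse the space across sequential invocations.

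The main obstacle I anticipate is the charging argument bounding $|S_1| = O(n/k)$: one has to make precise that the constant-fraction-split guarantee, applied recursively, really does ensure every new secondary center amortizes against $\Omega(k)$ fresh vertices with no double-counting, and in particular that the "rest of the tree" in one \mf{SecondaryCenters} call does not get re-traversed (and re-split) in a way that inflates the count — this is where the two recursive calls (one on $v$ with the reduced tree, one on $u$ with its subtree) and the monotonically growing center set $S$ matter.
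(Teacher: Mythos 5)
Your approach matches the paper's almost step for step: clusters of size $\le k$ by construction, writes $= O(|S_0|+|S_1|)$, $O(k^2)$ expected operations per \mf{SecondaryCenters} invocation charged against $O(n/k)$ invocations, $O(k\log n)$ \local{} memory from Lemma~\ref{lemma:findcenter}, and the bounded-degree separator fact for the split.

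The one step that is not quite right as written is the charging argument for $|S_1|=O(n/k)$. You assert that once $u$ is added and ``peels away'' its subtree, ``a vertex lies in a completed subtree that the recursion never revisits'' --- but that is false: the recursion immediately calls \mf{SecondaryCenters}($u,\ldots$) on exactly that peeled subtree, which may split it again, so those vertices would be charged repeatedly. The correct accounting is the one you gesture at in your final aside (and the one the paper uses): treat the calls as a forest of binary recursion trees, one rooted at each primary center. Then $|S_1|=\#\text{internal nodes}=\#\text{leaves}-|S_0|$. Every non-top-level leaf is a descendant of a split, so by induction its cluster has size $\ge ck$ (each split produces two pieces each of size $\ge ck$, and a leaf simply returns one of these pieces intact), while there are at most $|S_0|$ possibly-small top-level leaves. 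Hence $\#\text{leaves}-|S_0|\le n/(ck)$, giving $|S_1|=O(n/k)$ with no double-counting. You flag precisely this as ``the main obstacle,'' so the issue is one of cleanly closing it rather than a wrong idea; but as stated, the ``never revisits'' justification does not hold and needs to be replaced by the internal-nodes-vs-leaves count.
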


\begin{proof}
The algorithm creates clusters of size at most $k$ by construction (it
partitions any cluster bigger than $k$ using the added vertices $u$).
Each call to \mf{SecondaryCenters} (without recursive calls) will use
$O(k^2)$ operations in expectation since we visit $k$ vertices and each one
has to search back to $v$ to determine if $v$ is its center.  Each
call also uses $O(k \log n)$ space for the search \whp{} since we need
to store the $k$ elements found so far and each $\rho(v)$ uses $O(k
\log n)$ space for the search \whp{}.  Before making the recursive
calls, we can throw out the \local{} memory and write out $u$ to $S_1$,
requiring one write per call to \mf{SecondaryCenters}.

We are left with showing there are at most $O(n/k)$ calls to
\mf{SecondaryCenters}.  There are $n/k$ primary clusters in
expectation.  If there are too many (beyond some constant factor above
the expectation), we can try again.  Since the graph has bounded
degree, we can find a vertex that partitions
the tree such that its subtree and the rest of the tree are both at
most a constant fraction~\cite{rosenberg2001graph}.  We can now count all internal nodes
of the recursion against the leaves.  There are at most $O(n/k)$
leaves since each defines a cluster of size $\Theta(k)$.  Therefore
there are $O(n/k)$ calls to \mf{SecondaryCenters}, giving the overall
bounds stated.
\end{proof}

\myparagraph{Parallelizing the decomposition}
To parallelize the decomposition in Algorithm~\ref{algo:genclusters},
we make one small change; in addition to adding the
secondary cluster $u$ at each recursive call to \mf{SecondaryCenters},
we add all children of $v$.  This guarantees that the height of the tree
decreases by at least one on each recursive call, and only increases
the number of writes by a constant factor.  This gives the following
lemma. 

\begin{lemma}\label{lemma:pargenclusters}
Algorithm~\ref{algo:genclusters} runs in
depth $O((k \log n) (k^2 \log n + \omega))$ \whp{} on the
\ourmodel.
\end{lemma}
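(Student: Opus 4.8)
**The plan is to bound the depth of the parallelized Algorithm~\ref{algo:genclusters} by analyzing the recursion tree of \mf{SecondaryCenters} and multiplying the depth of a single call by the height of that recursion.** First I would argue that, with the modification described (adding all children of $v$ to $S_1$ at each recursive call, in addition to the partitioning vertex $u$), the tree of vertices having $v$ as their center strictly shrinks in height: since every child of $v$ becomes its own center, any vertex remaining assigned to $v$ must now lie one level closer to $v$ than before. Hence after at most $\text{height} \le k$ recursive levels the tree is exhausted. Actually the cleaner bound: each cluster has size at most $k$ (by construction once we cut, and trivially a tree on $\le k$ vertices has height $\le k$), so the first call already works on a tree of height $O(k)$, and each level of recursion decrements the height by at least one, giving a recursion depth of $O(k)$. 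I would also note the recursion is a binary branching on $(v, S\cup u)$ and $(u, S\cup u)$, but the two branches run in parallel, so only the longest root-to-leaf path in the recursion tree matters — and that path has length $O(k)$.

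Next I would bound the depth of a single invocation of \mf{SecondaryCenters} (excluding its recursive calls). The body does three things: (i) a bounded search from $v$ for the first $k$ vertices having $v$ as their center, which is a BFS that visits $O(k)$ vertices, and for each visited vertex performs a $\rho(\cdot)$ computation costing $O(k^2)$ operations / $O(k^2\log n)$ whp by Lemma~\ref{lemma:findcenter} (wait — $\rho(v)$ is $O(k)$ ops whp $O(k\log n)$; summed over $O(k)$ vertices that is $O(k^2\log n)$ whp, matching the $k^2$ in the bound); (ii) identifying a balanced tree-separator vertex $u$, which is a linear-time computation on a tree of size $O(k)$; (iii) a constant amount of extra bookkeeping plus one write to $S_1$ (plus $O(1)$-per-child writes of the children of $v$, since the graph is bounded degree). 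All of this can be done in depth $O(k^2\log n + \omega)$ whp — the $\omega$ term covering the writes, and the $k^2\log n$ covering the sequential searches and the tree-separator computation, which I would not attempt to parallelize further since the bound does not require it.

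**Then I would compose the two bounds multiplicatively.** The recursion tree has root-to-leaf path length $O(k)$ — and actually I should use the whp bound $O(k\log n)$ on the search size carefully, since the "first $k$ vertices" are found deterministically but the structure of which vertices belong to $v$'s cluster depends on the random sample; re-examining Lemma~\ref{lemma:findcenter}, each individual $\rho(\cdot)$ query is $O(k\log n)$ whp, and there are $O(k)$ of them per call and $O(n/k)$ calls total, so a union bound over all $O(n)$ such queries keeps everything whp with the same exponent. The recursion depth being $O(k\log n)$ whp as well (the height of any cluster tree is at most its size, $O(k\log n)$ whp via the same tail bound on BFS-to-primary-center, or simply $k$ after cutting), the total depth is the product $O(k\log n) \cdot O(k^2\log n + \omega) = O((k\log n)(k^2\log n + \omega))$ whp, as claimed.

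**The main obstacle I anticipate is the recursion-depth argument**, i.e., justifying cleanly that adding all children of $v$ forces the recursion to terminate in $O(k)$ (or $O(k\log n)$ whp) levels. The subtlety is that when we recurse on $(v, S\cup u)$, the set of vertices still mapped to $v$ is the original cluster tree minus the subtree rooted at $u$ minus the subtrees now rooted at $v$'s newly-promoted children — and I must confirm the remaining tree's height drops by at least one, which relies on $\rho$ being defined via first-center-on-shortest-path (so that promoting the children genuinely peels off the top level and no vertex "reattaches" to $v$ through a longer route). Once that monotone height-decrease is established, the rest — bounding a single call's depth via Lemma~\ref{lemma:findcenter} and multiplying — is routine. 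I would also double-check that the $O(n/k)$-calls bound from Lemma~\ref{lemma:genclusters} is consistent with the parallel branching structure (it is: adding children only inflates writes and call count by a constant factor, since the graph is bounded degree), so that no hidden $n$-dependence sneaks into the depth.
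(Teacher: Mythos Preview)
Your proposal follows essentially the same approach as the paper: factor the parallel depth as (recursion depth) $\times$ (depth of a single \mf{SecondaryCenters} call), argue the recursion depth is bounded by the height of the primary cluster's tree because promoting the children of $v$ peels off one level per recursive step, and bound a single call at $O(k^2\log n + \omega)$ whp via Lemma~\ref{lemma:findcenter} plus the write cost.

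One point to tighten: your first paragraph's claim that ``the first call already works on a tree of height $O(k)$'' because ``each cluster has size at most $k$'' is not correct. The \emph{first} call on a primary center operates on the full primary cluster, whose size can be super-polynomial in $k$ (this is exactly why secondary centers are needed). The right bound, which you do invoke later, is that the \emph{height} of the primary cluster's tree is at most the BFS distance from any vertex to its primary center, and that distance is $O(k\log n)$ whp by Lemma~\ref{lemma:findcenter}. So drop the size-based argument and keep only the height-via-tail-bound argument; then your recursion depth of $O(k\log n)$ whp is justified for the right reason, and the rest of your analysis goes through exactly as in the paper.
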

\begin{proof}
	Certainly selecting the set $S_0$ can be done in parallel.
	Furthermore the calls to \mf{SecondaryCenters} on line 4 can be
	made recursively in parallel.  The depth will be proportional to the
	depth to each call to \mf{SecondaryCenters} (not including
	recursive calls) multiplied by the depth of the recursion.
	To bound the depth, in the parallel version we also mark the children of the root as secondary centers, which does not increase the number of secondary centers asymptotically (due to the bounded-degree assumption).
	In this way one is removed from the height of the tree on each recursive call.
	The depth of the recursion is at most the depth of the tree
	associated with the primary center $\rho_0(v)$.  This is bounded by
	$O(k \log n)$ \whp{} since by Lemma~\ref{lemma:findcenter} every
	vertex finds its primary center within $O(k \log n)$ steps \whp{}.
	The depth of \mf{SecondaryCenters} (without recursion) is just the
	number of operations ($O(k^2 \log n)$ \whp{}) plus the depth of the one
	write of $u$ (which costs $\omega$).  This gives the bound.
\end{proof}

\myparagraph{Extension to unconnected graphs} Note that once a
connected component contains at least one primary center, the
definition and Theorem~\ref{thm:mainimp} hold.  However, it is
possible that in a small component, the search of $\rho(\cdot)$
exhausts all connected vertices without finding any primary centers
(vertices in the initial sample, $S_0$).  In this case, we check
whether the size of the cluster is at least $k$, and if so, we mark as
a primary center the vertex that is the smallest according to the
total order on vertices.
This marks at most $n/k$ primary centers and the rest of the algorithm
remains unchanged.  This step is added after line~\ref{line:sample} in
Algorithm~\ref{algo:genclusters}, and requires $O(nk)$ work and
operations, $O(n/k)$ writes, and $O(k)$ \depth.  The cost bound
therefore is not changed.  If the component is smaller than $k$, we
use the smallest vertex in the component as a center implicitly, but
never write it out.  The $\rho(\cdot)$ function can easily return this
in $O(k)$ operations.

\hide{
\myparagraph{Extension to unconnected graphs}
In the above discussion, we assumed the input graph is connected.
However, for some problems, like graph connectivity, the graph is not necessarily connected.
In Appendix~\ref{sec:appendix-implicit}, we show how to extend the definition of \implicit{} and the algorithm to generate the cluster centers for unconnected input graphs.
}

\section{Graph Connectivity and Spanning Forest}\label{sec:cc}

This section describes parallel write-efficient algorithms for graph
connectivity and spanning forest; that is, identifying which vertices
belong to each connected component and producing a spanning forest of
the graph.  These task can be easily accomplished sequentially by
performing a breadth-first or depth-first search in the graph with
$O(m)$ operations and $O(n)$ writes.  While there are several
work-efficient parallel algorithms for the
problem~\cite{Shun2014,ColeKT96,Gazit1991,PoonR97,PettieR02,Halperin00,HalperinZ96},
all of them use $\Omega(n+m)$ writes.  This section has two main
contributions: (1) Section~\ref{sec:cc-linear} provides a parallel
algorithm using $O(n+m/\wcost)$ writes in expectation,
$O(n\wcost + m)$ expected work, and $O(\wcost^2\log^2n)$ \depth{} with
high probability; (2) Section~\ref{sec:cc-sublinear} gives an
algorithm for constructing a connectivity oracle on constant-degree
graphs in $O(n/\sqrt{\wcost})$ expected writes and $O(n\sqrt{\wcost})$ expected
total operations.  Our oracle-construction algorithm is parallel,
having \depth{} $O(\wcost^{3/2}\log^3n)$ \whp{}, but it also
represents a contribution as a sequential algorithm.

Our parallel algorithm (Section~\ref{sec:cc-linear}) can be viewed as
a write-efficient version of the parallel algorithm due to Shun et
al.~\cite{Shun2014}.  This algorithm uses a low-diameter decomposition
algorithm of Miller et al.~\cite{miller2013parallel} as a subroutine,
which we review and adapt next in Section~\ref{sec:ldd} and
Appendix~\ref{sec:appendix-cc}.


\subsection{Low-diameter Decomposition}\label{sec:ldd}
Here we review the low-diameter decomposition of Miller et
al.~\cite{miller2013parallel}.  The so-called
``$(\beta,d)$-decomposition'' is terminology lifted from their paper,
and it should not be confused with our implicit
$k$-decompositions. The details of the decomposition subroutine are
only important to extract a bound on the number of writes, and it is
briefly summarized in Appendix~\ref{sec:appendix-cc}.

A \defn{$(\beta,d)$-decomposition} of an undirected graph $G = (V,E)$,
where $0<\beta<1$ and $1\leq d \leq n$, is defined as a partition of
$V$ into subsets $V_1,\ldots,V_k$ such that (1) the shortest path
between any two vertices in each $V_i$ using only vertices in $V_i$ is
at most $d$, and (2) the number of edges $(u,v) \in E$ crossing the
partition, i.e., such that $u \in V_i$, $v \in V_j$, and $i \neq j$,
is at most $\beta m$.
Miller et al.~\cite{miller2013parallel} provide an efficient parallel
algorithm for generating a $(\beta,O(\log n / \beta))$-decomposition.
As described, however, their algorithm performs $\Theta(m)$ writes.
The key subroutine of the algorithm, however, is just breadth-first
searches (BFS's).  Replacing these BFS's by write-efficient
BFS's~\cite{BBFGGMS16} yields the following theorem:

\begin{theorem}\label{thm:ldd}
A $(\beta, O({\log n}/\beta))$ decomposition can be generated in
$O(n)$ expected writes, $O(m+\wcost n)$ expected work, and
$O(\wcost{\log^2 n}/\beta)$ \depth\ \whp{} on the \ourmodel\ model.
\end{theorem}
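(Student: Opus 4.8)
The plan is to treat Miller et al.'s construction~\cite{miller2013parallel} as a black box at the level of its algorithmic skeleton, and argue that the only step that writes $\Theta(m)$ times --- the breadth-first exploration --- can be swapped for the write-efficient BFS of~\cite{BBFGGMS16} without disturbing correctness or the asymptotic read and \depth{} costs. Recall the skeleton: each vertex $v$ independently draws a shift $\delta_v$ from an exponential distribution with rate $\beta$, and is assigned to the cluster of the vertex $u$ minimizing $\mathrm{dist}(u,v) - \delta_u$ (ties broken by vertex id). This assignment is computed by a ball-growing process: discretize the shifts, and in round $t$ activate as new sources all vertices whose (rounded) shift exceeds $\delta_{\max} - t$, then perform one synchronous BFS step from the current frontier, claiming each newly reached unclaimed vertex for its discoverer's cluster. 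The correctness guarantees --- that each cluster has a BFS spanning tree rooted at its center whose depth is at most the number of rounds, hence strong diameter $O(\log n/\beta)$, and that the expected number of inter-cluster edges is at most $\beta m$ --- depend only on the combinatorial output of this process, not on how it is implemented, so they carry over verbatim from~\cite{miller2013parallel}.

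First I would bound the writes. The only things ever written to the asymmetric memory are: (i) the $n$ shift values (equivalently, each vertex's activation round), which I would compute and store with $O(n)$ writes after bucketing vertices by round in $O(n + \log n/\beta)$ reads; and (ii) the per-vertex cluster label together with the frontier bookkeeping of the ball-growing BFS. Using the write-efficient BFS of~\cite{BBFGGMS16}, edges are only read --- each relaxed $O(1)$ times, for $O(m)$ reads total --- while each vertex is enqueued into a frontier and assigned a label exactly once, for $O(n)$ writes total across all rounds. Hence the construction uses $O(n)$ expected writes and $O(m + n)$ reads, for total work $O(m + \wcost n)$ in expectation, the expectation arising from the internal randomization of the write-efficient BFS primitives.

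Next I would bound the \depth{}. The number of rounds is at most $\lceil \delta_{\max} \rceil + O(1)$, and since $\delta_{\max}$ is the maximum of $n$ rate-$\beta$ exponentials it is $O(\log n/\beta)$ \whp{}. Each round performs one parallel step of the write-efficient multi-source BFS --- a constant number of map, filter/pack, and duplicate-removal primitives over the frontier, plus the activation of that round's new sources --- each of \depth{} $O(\wcost \log n)$ (the $\wcost$ factor for writes on the critical path, the $\log n$ for prefix sums and reductions). Multiplying the per-round \depth{} by the number of rounds gives $O(\wcost \log^2 n/\beta)$ \whp{}. The point requiring the most care --- and the main obstacle --- is that the write-efficient BFS of~\cite{BBFGGMS16} is stated for a single source, so I must verify that the delayed multi-source variant, with new sources injected over $O(\log n/\beta)$ rounds and ties broken by id, still maintains the invariant that each vertex joins a frontier only once and that each source activation is an $O(1)$-write, $O(1)$-\depth{} event foldable into its round; once that invariant is established, the read, write, and \depth{} accounting above goes through unchanged.
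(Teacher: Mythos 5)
Your proposal is correct and follows the same approach as the paper's: replace the BFS's in Miller et al.'s ball-growing algorithm with the write-efficient BFS of~\cite{BBFGGMS16} so that writes scale with the $O(n)$ vertices assigned rather than the $O(m)$ edges explored, and bound the \depth{} as the $O(\log n/\beta)$ rounds \whp{} times the $O(\wcost\log n)$ per-round packing cost. You are slightly more explicit than the paper about verifying the multi-source adaptation of the write-efficient BFS (the paper simply asserts it), but the write, work, and \depth{} accounting is otherwise identical; the paper also notes, via~\cite{Shun2014}, that arbitrary tie-breaking suffices rather than a specific rule.
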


\subsection{Connectivity and Spanning Forest}\label{sec:cc-linear}
The parallel connectivity algorithm of~\cite{Shun2014} applies the
low-diameter decomposition recursively with $\beta$ set to a
constant less than $1$. Each level of recursion contracts a subset of vertices into
a single supervertex for the next level. The algorithm terminates when
each connected component is reduced to a single supervertex.  The
stumbling block for write efficiency is this contraction step, which
performs writes proportional to the number of remaining edges.

Instead, our write-efficient algorithm applies the low-diameter
decomposition just once, but with a much smaller $\beta$, as follows:
\begin{enumerate}
\item Perform the low-diameter decomposition with parameter
  $\beta = 1/\wcost$.
\item Find a spanning tree on each $V_i$ (in parallel) using
  write-efficient BFS's of~\cite{BBFGGMS16}.
\item Create a contracted graph, where each vertex subset in the
  decomposition is contracted down to a single vertex.  To write down
  the cross-subset edges in a compacted array, employ the
  write-efficient filter of~\cite{BBFGGMS16}.
\item Run any parallel linear-work spanning forest algorithm on the
  contracted graph, e.g., the algorithm from~\cite{ColeKT96} with
  $O(\wcost \log n)$ \depth{}.
\end{enumerate}
Combining the spanning forest edges across subsets (produced in Step
4) with the spanning tree edges (produced in Step 2) gives a spanning
forest on the original graph.  Adding the bounds for each step
together yields the following theorem.  Again only $O(1)$ \local{}
memory is required per task.

\begin{theorem}\label{thm:cc-linear}
  For any choice of $0 < \beta < 1$, connectivity and spanning forest
  can be solved in $O(n+\beta m)$ expected writes,
  $O(\wcost n + \beta \wcost m + m)$ expected work, and
  $O(\wcost \log^2 n / \beta)$ \depth{} \whp{} on the \ourmodel{}
  model.  For $\beta = 1/\wcost$, these bounds reduce to
  $O(n + m/\wcost)$ expected writes, $O(m+\wcost n)$ expected work and
  $O(\wcost^2\log^2n)$ \depth{} \whp{}.
\end{theorem}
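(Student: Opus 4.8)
The plan is to account for the cost of each of the four steps of the algorithm separately and then add them up, checking that no step exceeds the claimed bounds. For Step~1, I would invoke Theorem~\ref{thm:ldd} directly with the given $\beta$: the low-diameter decomposition costs $O(n)$ expected writes, $O(m+\wcost n)$ expected work, and $O(\wcost \log^2 n/\beta)$ \depth{} \whp{}. For Step~2, the key observation is that the write-efficient BFS of~\cite{BBFGGMS16} on each part $V_i$ writes only $O(|V_i|/\wcost + 1)$ words (it must at least write the tree it produces, compacted), so summing over all parts gives $O(n/\wcost + n/d)$ writes where $d = O(\log n/\beta)$ is the diameter bound; this is $O(n)$, and the work is $O(m + \wcost n)$ since the searches collectively traverse each edge a constant number of times. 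I should double-check that the parts have enough structure that the BFS \depth{} stays within $O(\wcost \log^2 n/\beta)$ — since each $V_i$ has diameter $O(\log n/\beta)$ in its induced subgraph, a BFS has $O(\log n/\beta)$ rounds, each costing $O(\wcost \log n)$ \depth{} for the write-efficient primitives, giving $O(\wcost \log^2 n/\beta)$, matching Step~1.

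For Step~3, contracting down to one vertex per part and writing out the cross-part edges via the write-efficient filter of~\cite{BBFGGMS16}: the filter writes only $O((\text{output size})/\wcost + \text{something})$, and the output size is the number of crossing edges, which is at most $\beta m$ by the decomposition guarantee. So the writes here are $O(\beta m/\wcost + n)$ (the $n$ covers the vertex relabeling / supervertex array), which is dominated by $O(n + \beta m)$; the work is $O(m)$ to scan all edges plus $O(\wcost \cdot \beta m/\wcost) = O(\beta m)$ to perform the writes, i.e., $O(m)$. For Step~4, the contracted graph has $O(n/d) \le n$ supervertices and $O(\beta m)$ edges, so any linear-work parallel spanning forest algorithm (e.g.~\cite{ColeKT96}) costs $O(\beta m + n)$ operations and hence $O(\wcost(\beta m + n))$ work if it is write-oblivious, with \depth{} $O(\wcost \log n)$ — this is within $O(\wcost \log^2 n/\beta)$ since $\beta < 1$. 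Finally, combining the spanning-forest edges from Step~4 with the per-part spanning trees from Step~2 yields a spanning forest of $G$: I would argue correctness by noting that contracting each connected $V_i$ preserves the connectivity structure, so two vertices are connected in $G$ iff their supervertices are connected in the contracted graph, and the union of the intra-part trees with a spanning forest of the contracted graph has no cycles and spans each component.

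Adding the four contributions gives $O(n + \beta m)$ writes, $O(\wcost n + \beta \wcost m + m)$ work, and \depth{} $O(\wcost \log^2 n/\beta)$ \whp{}, since every step's \depth{} is $O(\wcost \log^2 n/\beta)$ or smaller and there are $O(1)$ steps. Substituting $\beta = 1/\wcost$ collapses these to $O(n + m/\wcost)$ writes, $O(m + \wcost n)$ work, and $O(\wcost^2 \log^2 n)$ \depth{}. The only subtle point — and the one I would spend the most care on — is verifying that each invoked primitive (write-efficient BFS, write-efficient filter) genuinely has the $O(\text{size}/\wcost)$-write behavior I am attributing to it on inputs of this shape, and in particular that Step~2's BFS writes are bounded by $O(n)$ rather than $O(n)$ per part; the rest is bookkeeping. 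The $O(1)$ \local{} memory per task claim follows because each primitive used is itself a constant-\local-memory algorithm and the steps are composed sequentially.
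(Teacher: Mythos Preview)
Your proposal follows essentially the same step-by-step accounting as the paper's own proof. One minor correction: the write-efficient BFS and filter of~\cite{BBFGGMS16} perform $O(\text{output size})$ writes (i.e., $O(|V_i|)$ per BFS and $O(\beta m)$ for the filter), not $O(\text{output}/\wcost)$ as you speculate---a filter that emits a compacted array of $s$ elements must write $\Theta(s)$ words---but these still sum to the claimed $O(n+\beta m)$, so the argument goes through unchanged.
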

\begin{proof}
  Step~1 has performance bounds given by Theorem~\ref{thm:ldd}, and
  the expected number of edges remaining in the contracted graph is at
  most $\beta m$.  Step~2 performs BFS's on disjoint subgraphs, so
  summing across subsets yields $O(n)$ expected writes and
  $O(m+n\wcost)$ expected work.  Since each tree has low diameter
  ${\cal D} = O(\log n / \beta)$, the BFS's have \depth{} $O(\wcost
  {\cal D} \log n) = O(\wcost \log^2 n / \beta)$
  \whp{}~\cite{BBFGGMS16}.  Step~3 is dominated by the filter, which
  has a number of writes proportional to the output size of $O(\beta
  m)$, for $O(m + \beta \wcost m)$ work. The \depth{} is $O(\wcost
  \log n)$~\cite{BBFGGMS16}.  Finally, the algorithm used in Step~4 is
  not write-efficient, but the size of the graph is $O(n+\beta m)$,
  giving that many writes and $O(\wcost (n + m\beta))$ work.  Adding
  these bounds together yields the theorem.
\end{proof}

\subsection{Connectivity Oracle in Sublinear Writes}\label{sec:cc-sublinear}
A connectivity oracle supports queries that take as input a vertex and
return the label (component ID) of the vertex.  This allows one to
determine whether two vertices belong in the same component.  The
algorithm is parameterized by a value $k$, to be chosen later.  We
assume throughout that the \local{} memory per task is
$\Omega(k\log n)$ words and that the graph has bounded degree.

We begin with an outline of the algorithm.  The goal is to produce an
oracle that can answer for any vertex which component it belongs to in
$O(k)$ work.  To build the oracle, we would like to run the
connectivity algorithm on the \clustergraph{} produced by an implicit
$k$-decomposition. The result would be that all center vertices in the
same component be labeled with the same identifier.  Answering a query
then amounts to outputting the component ID of the center it maps to,
which can be queried in $O(k)$ expected work and $O(k\log n)$ work
\whp{} according to Lemma~\ref{lemma:findcenter}.

The main challenge in implementing this strategy is that we cannot
afford to write out the edges of the \clustergraph\ (as there could be
too many edges). Instead, we treat the implicit $k$-decomposition as
an implicit representation of the \clustergraph{}. Given an implicit
representation, our connected components algorithm is the following:
\begin{enumerate}
\item Find a $k$-implicit decomposition of the graph.
\item Run the write-efficient connectivity algorithm from
  Section~\ref{sec:cc-linear} with $\beta=1/k$, treating the
  $k$-decomposition as an implicit representation of the
  \clustergraph{}, i.e., querying edges as needed.
\end{enumerate}

As used in the connectivity algorithm, our implicit representation
need only be able to list the edges adjacent to a center vertex $x$ in
the \clustergraph{}.  To do so, start at $x$, and explore outwards
(e.g., with BFS), keeping all vertices and edges encountered so far in
\local{} memory.  For each frontier vertex $v$, query its center (as
in Lemma~\ref{lemma:findcluster}) --- if $\rho(v) = x$, then $v$'s
unexplored neighbors are added to the next frontier; otherwise (if
$\rho(v) \neq x$) the edge $(x,\rho(v))$ is in the \clustergraph{}, so
add it to the output list.

\begin{lemma}\label{lem:cc-impgraph}
  Assuming a \local{} memory of size $\Omega(k\log n)$, the centers
  neighboring each center in the \clustergraph{} can be listed in no writes and work,
  \depth{}, and operations all $O(k^2)$ in expectation or
  $O(k^2\log n)$ \whp{}.
\end{lemma}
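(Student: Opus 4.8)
The plan is to analyze the BFS-based exploration procedure that lists the neighbors of a center $x$ in the \clustergraph{}, bounding both the number of vertices it touches and the cost of the per-vertex center queries. First I would argue that the exploration only ever visits vertices in $C(x)$ together with vertices one hop outside $C(x)$. By Corollary~\ref{cor:clustertree}, every vertex $v \in C(x)$ has its entire shortest path $\p(v,\rho(v))$ inside $C(x)$, so starting from $x$ and only expanding the frontier through vertices whose center is $x$ suffices to reach all of $C(x)$; this is exactly the argument already used in the proof of Lemma~\ref{lemma:findcluster}. The vertices with $\rho(v) \neq x$ that we encounter are neighbors of $C(x)$, and since the graph has bounded degree there are only $O(|C(x)|)$ of them. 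Thus the total number of vertices the exploration inspects is $O(|C(x)|)$, and each contributes $O(1)$ edges to consider, so the output list has size $O(|C(x)|)$ as well.

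Next I would bound $|C(x)|$: by construction (Algorithm~\ref{algo:genclusters} and Lemma~\ref{lemma:genclusters}) every cluster has size at most $k$, so $|C(x)| = O(k)$. Combining this with the previous paragraph, the exploration inspects $O(k)$ vertices, and for each one it invokes the center-finding routine of Lemma~\ref{lemma:findcenter}, which costs $O(k)$ operations in expectation and $O(k\log n)$ \whp{}, with $O(k\log n)$ \local{} memory \whp{}. Summing over the $O(k)$ inspected vertices gives $O(k^2)$ operations in expectation and $O(k^2\log n)$ \whp{}. Since all intermediate state (the partial BFS queue, the visited set, and the accumulating edge list, each of size $O(k)$, plus the $O(k\log n)$ scratch space for a single $\rho$ query) fits in the assumed $\Omega(k\log n)$ \local{} memory, no writes to the asymmetric memory are needed, so the write count is zero.

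For the parallel bounds, I would observe that the exploration is essentially a BFS over a region of size $O(k)$, and the center queries for the vertices in a given frontier are independent and can be run in parallel. The depth is therefore the BFS depth (at most $O(k)$ levels, in fact bounded by cluster diameter) times the depth of a single $\rho$ query. Since each $\rho$ query is itself a bounded-degree BFS over $O(k\log n)$ vertices \whp{}, it has depth $O(k\log n)$ \whp{} and involves no writes (so no $\wcost$ factor); hence the overall depth is $O(k^2\log n)$ \whp{}, matching the work bound up to the usual high-probability factor.

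The only mildly delicate point — and the one I would be most careful about — is the high-probability statement: the bound $O(k^2\log n)$ must hold \whp{} over the randomness in the sample $S_0$, and we are taking a union over the $O(k)$ vertices inspected during one exploration (and implicitly over all $O(n/k)$ centers when this lemma is used later). Because Lemma~\ref{lemma:findcenter} already gives a $1/n^c$ failure probability per query for any constant $c$, a union bound over the $O(k) = O(n)$ relevant queries only inflates the constant in the exponent, so the \whp{} guarantee survives. Everything else is a routine assembly of Corollary~\ref{cor:clustertree}, the bounded-degree assumption, the $|C(x)| \le k$ size bound, and Lemma~\ref{lemma:findcenter}.
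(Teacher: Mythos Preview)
Your proposal is correct and follows essentially the same approach as the paper's proof: both argue that the BFS visits the $O(k)$ vertices of $C(x)$ plus their $O(k)$ bounded-degree neighbors, invoke Lemma~\ref{lemma:findcenter} at cost $O(k)$ (expected) per vertex to get $O(k^2)$ total, and note that everything fits in the $\Omega(k\log n)$ \local{} memory so no asymmetric writes occur. Your version is more detailed (explicitly treating the parallel depth and the union bound for the \whp{} claim), but the underlying argument is the same.
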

\begin{proof}
  Listing all the vertices in the cluster takes expected work $O(k^2)$
  according to Lemma~\ref{lemma:findcluster}, or $O(k^2\log n)$
  \whp{}.  The number of vertices in the cluster is $O(k)$, so they
  can all fit in \local{} memory. Moreover, since each vertex in the
  cluster has $O(1)$ neighbors, the total number of explored vertices
  in neighboring clusters is $O(k)$, all of which can fit in \local{}
  memory.  Each of these vertices is queried with a cost of $O(k)$
  operations in expectation and $O(k\log n)$ \whp{} given the
  specified \local{} memory (Lemma~\ref{lemma:findcenter}).
\end{proof}

Note that a consequence of the implicit representation is that listing
neighbors is more expensive, and thus the number of operations
performed by a BFS increases, affecting both the work and the
\depth{}.  The implicit representation is only necessary while
operating on the original \clustergraph{}, i.e., while finding the
low-diameter decomposition and spanning trees of each of those vertex
subsets; the contracted graph can be built explicitly as before.  The
best choice of $k$ is $k=\sqrt{\wcost}$, giving us the following
theorem.

\begin{theorem}\label{thm:cc-oracle}
  A connectivity oracle that answers queries in $O(\sqrt{\wcost})$
  expected work and $O(\sqrt{\wcost}\log n)$ work \whp{} can be
  constructed in $O(n/\sqrt{\wcost})$ expected writes,
  $O(\sqrt{\wcost} n)$ expected work, and $O(\wcost^{3/2}\log^3n)$
  \depth{} \whp{} on the \ourmodel\ model, assuming a \local{} memory of
  size $\Omega(\sqrt{\wcost}\log n)$.
\end{theorem}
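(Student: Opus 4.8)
The plan is to assemble the theorem from the two main ingredients already in hand: the implicit $k$-decomposition construction (Theorem~\ref{thm:mainimp}) and the write-efficient connectivity algorithm of Section~\ref{sec:cc-linear} (Theorem~\ref{thm:cc-linear}), run on the \clustergraph{} accessed through the implicit representation described above. First I would account for Step~1: by Theorem~\ref{thm:mainimp} (and Lemma~\ref{lemma:genclusters}, Lemma~\ref{lemma:pargenclusters}) constructing the $k$-implicit decomposition costs $O(kn)$ operations, $O(n/k)$ writes, and \depth{} $O((k\log n)(k^2\log n + \wcost))$ \whp{}. With $k=\sqrt{\wcost}$ this is $O(\sqrt{\wcost}\,n)$ operations, $O(n/\sqrt{\wcost})$ writes, and \depth{} $O(\wcost^{3/2}\log^2 n)$ \whp{} --- already within the claimed bounds.

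Next I would analyze Step~2, which is the heart of the argument: running the Section~\ref{sec:cc-linear} algorithm with $\beta = 1/k = 1/\sqrt{\wcost}$ on the \clustergraph{} $G' = (S, E')$. The key point is that $G'$ has $n' = |S| = O(n/k)$ vertices and, since the original graph has bounded degree and each cluster has $O(k)$ vertices, $m' = O(n'k) = O(n)$ (multi-)edges. Plugging $n', m', \beta$ into Theorem~\ref{thm:cc-linear} gives $O(n' + \beta m') = O(n/k + n/k) = O(n/\sqrt{\wcost})$ writes, and work $O(\wcost n' + \beta\wcost m' + m') = O(\wcost n/k + \wcost n/k + n) = O(n)$ --- but this counts each edge-listing operation as $O(1)$, which is not the case here. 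By Lemma~\ref{lem:cc-impgraph}, each neighbor-listing of a center costs $O(k^2)$ operations in expectation (no writes), so every place the Section~\ref{sec:cc-linear} algorithm touches an adjacency list of $G'$ (the BFS's inside the low-diameter decomposition of Theorem~\ref{thm:ldd}, the per-subset spanning-tree BFS's of Step~2, and the filter of Step~3) incurs a multiplicative $O(k^2/\text{(avg degree)})$ blow-up in operations, or more simply: the total number of edge-explorations is $O(m') = O(n)$ and each now costs $O(k^2) = O(\wcost)$ operations, so the operation count becomes $O(\wcost n)$. Wait --- that would give $O(\wcost n)$ work, not $O(\sqrt{\wcost}\,n)$; the correct accounting is that the number of \emph{vertices} of $G'$ processed is $O(n/k)$ and listing the neighbors of each costs $O(k^2)$, for $O((n/k)\cdot k^2) = O(nk) = O(\sqrt{\wcost}\,n)$ operations total, which matches. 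Since the implicit edge-listing performs no asymmetric writes, the write count is unaffected and remains $O(n/\sqrt{\wcost})$; the contracted graph in Step~4 has size $O(n' + \beta m') = O(n/\sqrt{\wcost})$ and is built explicitly, contributing $O(\wcost n/\sqrt{\wcost}) = O(\sqrt{\wcost}\,n)$ work and $O(n/\sqrt{\wcost})$ writes.

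For the \depth{}: the low-diameter decomposition (Theorem~\ref{thm:ldd} with $\beta = 1/k$) has \depth{} $O(\wcost \log^2 n / \beta) = O(\wcost k \log^2 n)$ when edge-listing is $O(1)$; here each BFS step additionally pays the $O(k^2\log n)$ \whp{} \depth{} of one neighbor-listing (Lemma~\ref{lem:cc-impgraph}), and the BFS explores $O(\log n/\beta) = O(k\log n)$ frontiers, so this step's \depth{} is $O(k\log n \cdot (k^2\log n + \wcost \log n)) = O(k^3\log^2 n + \wcost k\log^2 n)$ \whp{}, which for $k = \sqrt{\wcost}$ is $O(\wcost^{3/2}\log^2 n)$. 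The spanning-tree BFS's (Step~2) are over disjoint low-diameter subsets and contribute similarly; the filter (Step~3) contributes $O(\wcost\log n)$ times the neighbor-listing \depth{}, i.e.\ $O(\wcost k^2\log^2 n)$ --- hmm, that is $O(\wcost^2\log^2 n)$, which exceeds the claim, so I would instead observe that the filter operates on the already-materialized edge list of $G'$ (produced during the decomposition) rather than re-invoking the implicit representation, keeping it at $O(\wcost\log n)$ \depth{}. Step~4's algorithm has \depth{} $O(\wcost\log n)$. The dominant term across all steps, together with the $O(\wcost^{3/2}\log^2 n)$ decomposition \depth{} from Step~1, is $O(\wcost^{3/2}\log^3 n)$ \whp{} (the extra $\log n$ absorbing the composition of high-probability bounds over the $O(k\log n)$-depth recursion). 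Finally the query bound is immediate: a query runs the $\rho(\cdot)$ procedure (Lemma~\ref{lemma:findcenter}) in $O(k) = O(\sqrt{\wcost})$ expected operations, $O(\sqrt{\wcost}\log n)$ \whp{}, and no writes, then reads the precomputed component ID of that center, and the $O(k\log n) = O(\sqrt{\wcost}\log n)$ \local{}-memory requirement is exactly the stated hypothesis.

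\textbf{Main obstacle.} The delicate part is the bookkeeping of \emph{where} the implicit (expensive) edge-listing is actually invoked versus where an explicit edge array is available, since a naive "multiply every edge access by $O(k^2)$'' overcounts and pushes both work and \depth{} above the target; the right framing is to charge the $O(k^2)$-cost listings to the $O(n/k)$ \emph{vertices} of $G'$ (once per level of the decomposition recursion), and to insist that once the cross-cluster edges are enumerated during the low-diameter decomposition, all subsequent steps (filter, contraction, Step-4 connectivity) operate on that explicit $O(n/\sqrt{\wcost})$-size array. I would also need to verify that the high-probability bounds compose correctly: the $O(k\log n)$-depth BFS recursion multiplies per-level failure probabilities, which is absorbed by adjusting constants in the $n^{-c}$ guarantee, consistent with the paper's convention for \whp{}.
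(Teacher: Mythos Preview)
Your approach is the same as the paper's: build the $k$-implicit decomposition with $k=\sqrt{\wcost}$, run the Section~\ref{sec:cc-linear} connectivity algorithm with $\beta=1/k$ on the implicit \clustergraph{}, charge the $O(k^2)$ neighbor-listing cost to the $O(n/k)$ centers for work, and answer queries via $\rho(\cdot)$.

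The one place your argument is thinner than the paper's is the \depth{} analysis. You write the per-level BFS cost as $O(k^2\log n + \wcost\log n)$ additively, which is the right formula, but you never justify \emph{why} the $\wcost$ term and the $k^2\log n$ term are additive rather than multiplicative. The paper makes this step explicit: the $\wcost$ factor in the write-efficient BFS \depth{} comes entirely from a packing subroutine on frontier vertices that never inspects edges and is therefore untouched by the implicit-representation overhead; only the exploration phase (ordinary \depth{} $O({\cal D}\log n)$) picks up the $O(k^2\log n)$ multiplier, yielding $O(\wcost k\log^2 n + k^3\log^3 n)=O(\wcost^{3/2}\log^3 n)$. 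Without that observation the naive analysis multiplies the full BFS \depth{} $O(\wcost k\log^2 n)$ by $O(k^2\log n)$ and lands at $O(\wcost k^3\log^3 n)=O(\wcost^{5/2}\log^3 n)$, an extra $\wcost$ factor---this is precisely the pitfall the paper first states and then repairs. Your ``Main obstacle'' paragraph flags the work-side bookkeeping (which you handle correctly after your self-correction), but the \depth{} separation into packing versus exploration is the sharper point and deserves to be stated rather than assumed.
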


\begin{proof}
  The $k$-implicit decomposition can be found in $O(n/k)$ writes,
  $O(kn + \wcost n/ k)$ work, and $O(k\log n(k^2\log n + \wcost))$
  \depth{} by Lemmas~\ref{lemma:genclusters}
  and~\ref{lemma:pargenclusters}.  For $k=\sqrt{\wcost}$, these bounds
  reduce to $O(n/\sqrt{\wcost})$ writes, $O(\sqrt{\wcost} n)$ work, and
  $O(\wcost^{3/2}\log^3n)$ \depth{}.

  If we had an explicit representation of the \clustergraph{} with
  $n'=O(n/k)$ vertices and $m' = O(m) = O(n)$ edges, the connectivity
  algorithm would have $O(n'+m'/k) = O(n/k)$ expected writes,
  $O(\wcost n' + \wcost m' / k + m') = O(\wcost n/k + n)$ expected
  work, and $O(\wcost k \log^2n)$ \depth{} \whp{} (by
  Theorem~\ref{thm:cc-linear}).  The fact that the \clustergraph{} is
  implicit means that the BFS needs to perform $O(k^2)$ additional
  work (but not writes) per node in the \clustergraph{}, giving
  expected work $O(\wcost n/k + n + k^2n') = O(\wcost n /k + kn)$.  To
  get a high probability bound, the \depth{} is multiplied by
  $O(k^2\log n)$, giving us $O(\wcost k^3 \log^3 n)$.  For
  $k=\sqrt{\wcost}$, the work and writes match the theorem statement,
  but the \depth{} is larger than claimed by a $\wcost$ factor.

  To remove the extra $\wcost$ factor on the \depth{}, we need to look
  inside the BFS algorithm and its analysis~\cite{BBFGGMS16}.  The
  $O(\wcost {\cal D} \log n)$ \depth{} bound for the BFS, where ${\cal
    D} = O(k\log n)$ is the diameter, is dominated by the \depth{} of
  a packing subroutine on vertices of the frontier.  This packing
  subroutine does not look at edges, and is thus not affected by the
  overhead of finding a vertex's neighbors in the implicit
  representation of the \clustergraph{}.  Ignoring the packing and
  just looking at the exploration itself, the \depth{} of BFS is
  $O({\cal D} \log n)$, which given the implicit representation
  increases by a $O(k^2 \log n)$ factor.  Adding these together, we
  get \depth{} $O(\wcost k \log^2 n + k^3 \log^3 n) =
  O(\wcost^{3/2}\log^3n)$ for the BFS phases.
\end{proof}


We can also output the spanning forest on the contracted graph in the
same bounds, which will be used in the biconnectivity algorithm with
sublinear writes.

\section{Graph Biconnectivity}\label{sec:biconn}

In this section we introduce algorithms related to biconnectivity
and 1-edge connectivity queries.
We first review the classic approach
and its output, which requires $O(m)$ writes.
Then we propose a new BC (biconnected-component) labeling output, which has size $O(n)$ and can be constructed in $O(n)$ writes.
Queries such as determining bridges,
articulation points, and biconnected components can be answered in
$O(1)$ operations (and no writes) with the \imprep{}.  Finally we show how an \implicit{}
(as generated by Algorithm~\ref{algo:genclusters}) can be integrated into
the algorithm to further reduce the writes to $O(n/\sqrt{\wcost})$.

We begin by explaining sequential algorithms that we believe to be new and interesting.
Then in Section~\ref{sec:biconn-depth} we show that these algorithms are parallelizable.
For this section, we assume the size of the \local{}
memory in our model is $O(k\log n)$.

In this section we assume that the graph is connected.
If not, we can run the connectivity algorithm and
then run the algorithm on each component.  The results for a graph
are the union of the results of each of its connected components.

\subsection{The Classic Algorithm}

The classic parallel algorithm~\cite{tarjan1985efficient} to compute
biconnected components and bridges of a connected graph is based on
the Euler-tour technique.  The algorithm starts by building a
spanning tree $T$ rooted at some arbitrary vertex. Each vertex is
labeled by $\first(v)$
and $\last(v)$, which are the ranks of $v$'s first and last appearance
on the Euler tour of $T$.  The low
value $\low(v)$ and the high value $\high(v)$ of a vertex $v\in V$ are
defined as:
\begin{gather*}
\low(v)=\min\{w(u) \mid u \mb{ is in the subtree rooted at } v\}\\
\high(v)=\max\{w(u) \mid u \mb{ is in the subtree rooted at } v\}
\end{gather*}
where
\[w(u)=\min\{\first(u)\cup \{\first(u')\mid (u,u') \mb{ is a nontree
  edge}\}\}\footnote{If there are multiple edges
  $(u,u')$ in the graph, none of them are considered here.}\]

Namely, $\low(v)$ and $\high(v)$ indicate the first and last vertex
on the Euler tour that are connected by a nontree edge to the subtree rooted
at $v$.  The $\low(\cdot)$ and $\high(\cdot)$ values
can be computed by a reduce on each vertex followed by a leaffix\footnote{Leaffix is similar to prefix but defined on a tree and computed from the leaves to the root.} on
the subtrees.  The computation takes $O(\wcost{}\log n)$ \depth,
$O(m+\wcost{}n)$ work, and $O(n)$ writes on the \ourmodel\ model, by using the algorithm
and scheduling theorem in~\cite{BBFGGMS16}.  Then a tree edge is a
bridge if and only if the child's $\low$ and $\high$ is inclusively
within the range of $\first$ and $\last$ of the parent.
This parallel algorithm is asymptotically optimal even sequentially without considering asymmetric read and write costs.

The standard output of biconnected components~\cite{CLRS,JaJa92}
is an array $B[\cdot]$ with size $m$, where the $i$-th element in $B$
indicates which biconnected component the $i$-th edge belongs to.
Explicitly writing-out $B$ is costly in the asymmetric setting,
especially when $m\gg n$.
We provide an alternative
\imprep{} as output that only requires $O(n)$ writes.

\subsection{The BC Labeling}\label{sec:imprep}\label{sec:biconn-linear}

\begin{figure}
\centering
  \includegraphics[width=.5\columnwidth]{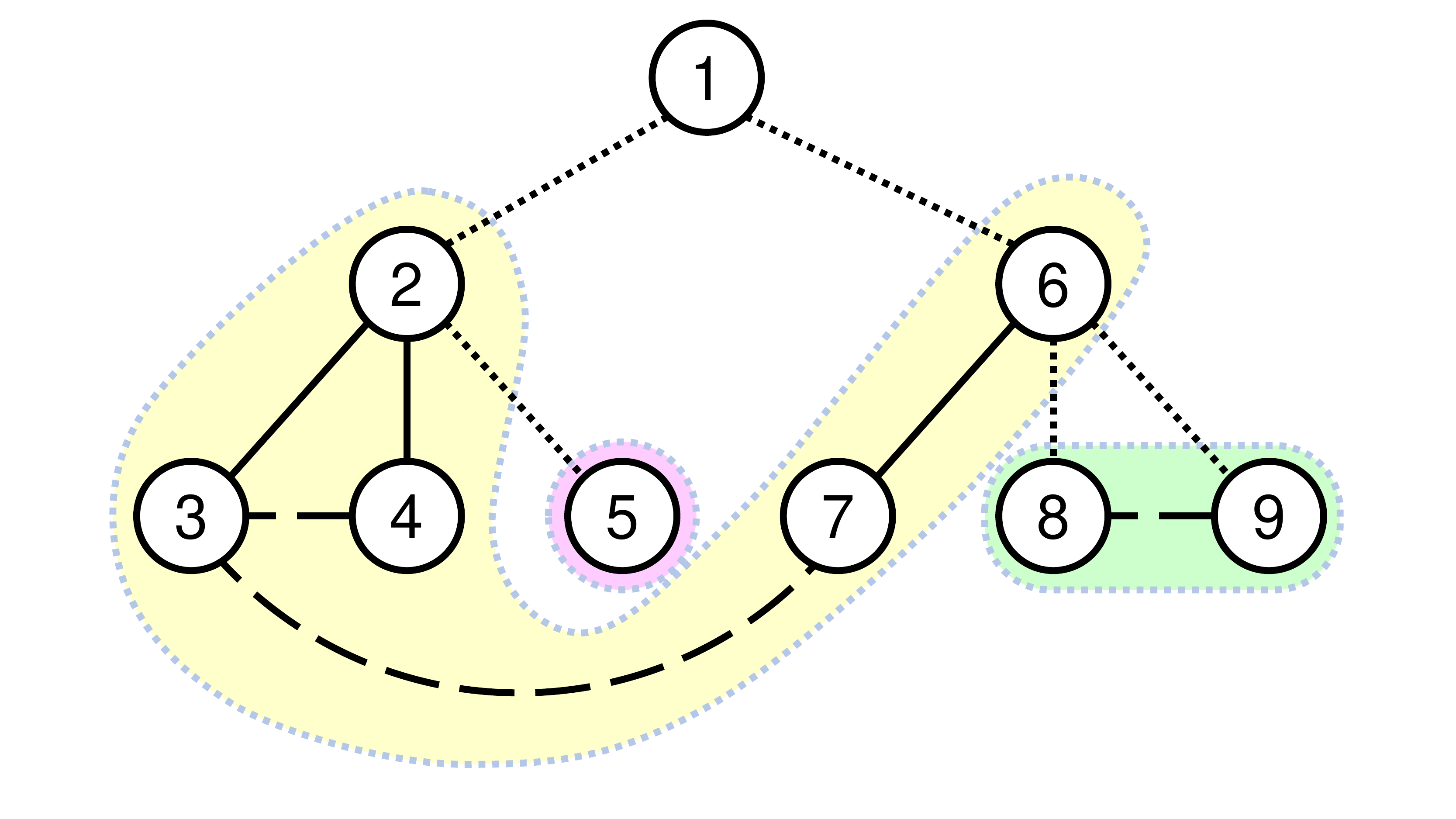}
  \vspace{-1em}
  \caption{An example of the \imprep{} of a graph.  The spanning tree is rooted at vertex 1.  The solid and dot lines indicate tree edges while dot lines are the critical edges.  Dash lines are non-tree edges.  The vertex labels are $l=[1,1,1,2,1,1,3,3]$, and component heads $r=[1,2,6]$.
  Based on the \imprep{} the bridges, articulation points, and
  biconnected components can be easily retrieved as $\{(2,5)\}$,
  $\{2,6\}$, and $\{\{1,2,3,4,6,7\},\{2,5\},\{6,8,9\}\}$.
  }\label{fig:eulertour}
\end{figure}

Here we describe the \defn{BC (biconnected-component) labeling},
which effectively represents biconnectivity output in $O(n)$ space.
Instead of storing all edges within each biconnected
component, the \imprep{} stores a component label for each vertex, and a vertex
for each component.
An example of a BC labeling of a graph is shown in Figure~\ref{fig:eulertour}.
We will later show how to use this representation
along with an implicit decomposition to reduce the writes further.

\begin{definition}[\imprep{}]
  The \imprep{} of a connected graph with respect to a rooted spanning tree
  stores a \defn{vertex label}
  $l:V\backslash \{\textit{root}\}\to [C]$ where $C$ is the number of
  biconnected components in the graph, and a \defn{component head}
  $r:[C]\to V$ of each biconnected component.
\end{definition}

\begin{lemma}
  The \imprep{} of a connected graph can be computed in $O(m)$ operations
  and $O(n + m/\wcost)$ writes on the \seqmodel. Queries about bridges, articulation
  points, or biconnected components can be answered in no writes and $O(1)$ operations
  given a \imprep{} on a rooted spanning tree.
\end{lemma}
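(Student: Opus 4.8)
The plan is to compute the BC labeling directly from the classic Euler-tour quantities of the previous subsection, never materializing the size-$m$ array $B[\cdot]$. First I would build a rooted spanning tree $T$ (via BFS) and compute, for every non-root vertex $v$, the Euler-tour ranks $\first(v), \last(v)$ together with $\low(v)$ and $\high(v)$; by the discussion following the classic algorithm this costs $O(m)$ operations but only $O(n)$ writes (the only per-vertex stored values are these $O(1)$-word quantities, plus we need to touch each edge once to compute $w(\cdot)$ — touching an edge is a read, not a write, which is why the edge-count shows up only in the operation bound and not the write bound). Recall that a tree edge $(\mathit{parent}(v),v)$ is a \emph{critical edge} (i.e.\ the bottom edge of its biconnected component) exactly when $\low(v) \ge \first(\mathit{parent}(v))$ and $\high(v) \le \last(\mathit{parent}(v))$; this test is $O(1)$ per vertex from the stored values, so identifying all critical edges costs $O(n)$ operations and writes.

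The key structural observation is that the biconnected components of a connected graph are in bijection with the critical edges: each biconnected component has a unique "top" vertex (the vertex of smallest tree-depth among its vertices) and that vertex equals $\mathit{parent}(v)$ for the critical edge $(\mathit{parent}(v),v)$ at the component's bottom. So to define $r:[C]\to V$ I would assign component IDs in, say, the order critical edges are discovered during a tree walk, and set $r(c)$ to be the parent endpoint of the $c$-th critical edge. To define the vertex label $l:V\setminus\{\mathit{root}\}\to[C]$, note that for a non-root vertex $v$, the edge $(\mathit{parent}(v),v)$ belongs to exactly one biconnected component, and its ID is the ID of the nearest critical tree-edge on the path from $v$ up toward the root (including $(\mathit{parent}(v),v)$ itself if that edge is critical). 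Equivalently, $l(v)$ is the component ID of the closest critical-edge-bottom that is an ancestor of $v$ or $v$ itself. This "nearest marked ancestor" value propagates down the tree: $l(v) = l(\mathit{parent}(v))$ unless $(\mathit{parent}(v),v)$ is itself critical, in which case $l(v)$ is that edge's fresh ID. A single root-to-leaves pass (leaffix/downward sweep, which is what the earlier text already invokes for $\low/\high$) computes all of $l$ in $O(n)$ operations and $O(n)$ writes. Combining: the whole construction is $O(m)$ operations and $O(n + m/\wcost)$ writes — the $m/\wcost$ term is absorbed because $O(m)$ reads contribute $O(m)$ to work but $0$ to writes, and I should double-check whether the statement's $m/\wcost$ is there just for uniformity with later sections or because some intermediate array of size $m$ is briefly compacted; in the plain sequential computation I only ever write $O(n)$ words, so $O(n) \subseteq O(n+m/\wcost)$ and the bound holds.

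For the query side, I would verify the three query types against the stored $(l,r)$ and the (read-only) spanning tree: (a) \emph{bridges} — the edge $(\mathit{parent}(v),v)$ is a bridge iff its biconnected component is a single edge, which happens iff $l(v)=c$ and no \emph{other} vertex $u$ has $l(u)=c$; to make this an $O(1)$ query without extra storage, I'd instead note that $(\mathit{parent}(v),v)$ is a bridge iff it is critical and $\low(v)=\high(v)=\first(v)$, readable in $O(1)$ (alternatively store one extra bit per component, still $O(n)$ writes); (b) \emph{articulation points} — a non-root vertex $u$ is an articulation point iff it is the head $r(c)$ of some component $c$ other than the one containing the edge above $u$, i.e.\ iff $u = r(l(w))$ for some child $w$ with $l(w)\neq l(u)$; checking this for a queried vertex is $O(1)$ given bounded degree (scan $u$'s children), and the root is an articulation point iff it has $\ge 2$ children in $T$; (c) \emph{biconnected component membership} of an edge $(x,y)$ — if it is a tree edge $(\mathit{parent}(v),v)$ its component is $l(v)$; if it is a non-tree edge, it lies in the component of the deeper endpoint, i.e.\ $l$ of whichever of $x,y$ has larger $\first$ value, again $O(1)$. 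The main obstacle I anticipate is exactly this query analysis: making each of bridges / articulation points / component-of-an-edge answerable in truly $O(1)$ operations while storing only the $O(n)$-word BC labeling (and relying on read-only access to $G$ and to the Euler-tour values) requires picking the right $O(1)$ characterization for each — the "is this the unique vertex in its component" subtlety for bridges and the "is $u$ a head of a component distinct from its parent's" subtlety for articulation points are where the argument needs care; everything else is a routine tree sweep.
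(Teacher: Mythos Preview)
Your construction step has a real gap. The ``nearest critical ancestor'' propagation
\[
l(v) \;=\; \begin{cases} \text{fresh ID} & \text{if }(\mathit{parent}(v),v)\text{ is critical}\\ l(\mathit{parent}(v)) & \text{otherwise}\end{cases}
\]
relies on the claim that critical tree edges are in bijection with biconnected components. That claim holds for a \emph{DFS} tree (no cross edges, so two children of the same vertex can never lie in the same block), but it fails for the BFS tree you build. Concretely, take a pendant $r{-}a$ together with a triangle on $\{a,b,c\}$; BFS from $r$ gives tree edges $r{-}a,\ a{-}b,\ a{-}c$ and the non-tree edge $b{-}c$. Both $(a,b)$ and $(a,c)$ satisfy $\first(a)\le\low(\cdot)$ and $\high(\cdot)\le\last(a)$, so both are critical, yet $b$ and $c$ belong to the \emph{same} biconnected component $\{a,b,c\}$. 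Your downward sweep assigns $b$ and $c$ distinct fresh labels, which is wrong.

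The paper avoids this by \emph{not} restricting to tree edges: after deleting critical tree edges it runs graph connectivity on \emph{all} remaining edges (tree and non-tree), so the edge $b{-}c$ glues the two pieces back together. That connectivity pass on $m$ edges is also where the $m/\wcost$ write term in the bound comes from, answering the question you flagged. Two easy fixes for your approach: either (i) switch to a DFS spanning tree, in which case your bijection and your tree-only sweep are valid and you actually get $O(n)\subseteq O(n+m/\wcost)$ writes; or (ii) keep BFS but replace the sweep with the paper's full connectivity on the graph minus critical edges.

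One smaller issue on the query side: your $O(1)$ test for articulation points (``scan $u$'s children'') assumes bounded degree, which this lemma does not. The paper's fix is to record during construction, for each vertex, whether it occurs as some component head (an $O(n)$-write annotation); then the articulation-point query is a single lookup.
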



\myparagraph{The algorithm to compute \imprep{}} A vertex $v\in V$ (except for the root) is an articulation point iff there exists at least one child $u$ in the spanning tree that has $\first(v)\le\low(u)$ and $\high(u)\le\last(v)$.
We thus name the tree edge between such a pair of vertices to be a \defn{critical edge}.
The algorithm to compute the \imprep{} simply removes all critical edges and runs graph connectivity on all remaining graph edges.
Then the algorithm described in Section~\ref{sec:cc-linear} gives a unique component label that we assign as the vertex label.
For each component, its head is the vertex that is the parent of its cluster in the spanning tree.
Each connected component and its head form a biconnected component.

The correctness of the algorithm can be proven by showing the
equivalence of the result of this algorithm and that of the
Tarjan-Vishkin algorithm~\cite{tarjan1985efficient}.

Since the number of biconnected components is at most $n$, the spanning
tree, vertex labels, and component heads require only linear space. Therefore,
 the space requirement of the \imprep{} is $O(n)$.

\myparagraph{Query on \imprep{}}  We now show that queries are easy with the \imprep{}.
An edge is a bridge iff it is the only edge connecting a single-vertex component and its component head (the biconnected component contains this single edge).
The root of the spanning tree is an articulation point iff it is the head of at least two biconnected components. Any other vertex is an articulation point iff it is the head of at least one biconnected component.
A block-cut tree can also be generated from the \imprep{}: for each vertex create an edge from itself and its vertex label; and for each component create an edge from the label of this component to the component head.  We have a block-cut tree after removing degree-1 nodes corresponding to vertices.

This new representation can be interpreted as an implicit version of the standard
output~\cite{CLRS,JaJa92} of biconnected components,
i.e.\ the label of the biconnected component of each edge can be
reported in $O(1)$ operations.
This is simple: we report the label of the endpoint of the edge that is further from the root along the spanning tree.
The correctness can be shown in two cases: if the edge is a spanning tree edge, then the component label is stored in the further vertex; otherwise, the two vertices must have the same label and reporting either one gives the label of this biconnected component.

\bigskip Using \imprep{} gives the following theorem (see Section~\ref{sec:biconn-depth} for \depth\ analysis).
\begin{theorem}\label{thm:biconn-linear}
Articulation points, bridges, and biconnected components on the \ourmodel\ model take $O(m+n\wcost)$ expected work and $O(\wcost\min\{\wcost,m/n\}\log^2 n)$ \depth\ \whp{}, and each query can be answered in $O(1)$ work.
\end{theorem}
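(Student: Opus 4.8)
The plan is to bound each phase of the BC-labeling construction sketched above and then invoke the query characterization already established. The construction has five phases: (1) compute a spanning tree of the (connected) graph $G$ using the write-efficient connectivity/spanning-forest algorithm of Section~\ref{sec:cc-linear}; (2) root the tree at an arbitrary vertex and compute $\first(\cdot)$ and $\last(\cdot)$ via an Euler tour; (3) compute $\low(\cdot)$ and $\high(\cdot)$ by a reduce over incident nontree edges followed by a leaffix over subtrees; (4) mark the critical tree edges (those with $\first(v)\le\low(u)$ and $\high(u)\le\last(v)$ for a child $u$) in $O(n)$ operations, $O(n)$ writes, and constant \depth{}; and (5) delete the critical edges and run the connectivity algorithm of Section~\ref{sec:cc-linear} once more on the remaining graph, taking the component identifiers as vertex labels $l(\cdot)$ and the tree-parents of the components as the heads $r(\cdot)$.

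For phases (2)--(4) I would cite the algorithm-and-scheduler results of~\cite{BBFGGMS16}: the Euler-tour construction reduces to list contraction and the $\low/\high$ computation to tree contraction (reduce plus leaffix), each running in $O(m+\wcost n)$ expected work, $O(\wcost\log n)$ \depth{} \whp{}, and $O(n)$ writes, so all of these are subsumed by the two connectivity phases. The key step is the choice of the low-diameter decomposition parameter $\beta$ in phases (1) and (5): by Theorem~\ref{thm:cc-linear}, for any $0<\beta<1$ connectivity runs in $O(n+\beta m)$ expected writes, $O(\wcost n + \beta\wcost m + m)$ expected work, and $O(\wcost\log^2 n/\beta)$ \depth{} \whp{}. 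Choosing $\beta$ so that $1/\beta = \Theta(\min\{\wcost,\,m/n\})$ (equivalently $\beta=\Theta(\max\{1/\wcost,\,n/m\})$) makes $\beta\wcost m = O(\wcost n + m)$, keeping the work at $O(m+\wcost n)$, while the \depth{} becomes $O(\wcost\min\{\wcost,m/n\}\log^2 n)$ and the writes become $O(n+\beta m)=O(n+m/\wcost)$. Summing the per-phase bounds gives the claimed work, \depth{}, and writes.

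Correctness of the labels follows from the equivalence of this critical-edge-deletion procedure with the Tarjan--Vishkin algorithm~\cite{tarjan1985efficient}: each connected component of $G$ after removing the critical edges, together with its head, is exactly one biconnected component. The $O(1)$-operation query bound is precisely the query procedure analyzed in the discussion of \imprep{} queries above — report the label of the deeper endpoint of an edge; test whether a single-vertex component is joined to its head by the lone edge (bridge); test how many components a vertex heads (articulation point) — none of which performs writes.

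The main obstacle I anticipate is making the \depth{} bound tight rather than the naive $O(\wcost^2\log^2 n)$: one must check that no auxiliary step — in particular the list ranking inside the Euler tour and the tree contraction inside the $\low/\high$ pass — has \depth{} exceeding $O(\wcost\min\{\wcost,m/n\}\log^2 n)$, and that invoking the connectivity algorithm with the unusually small $\beta$ does not degrade the low-diameter decomposition (it does not, since Theorem~\ref{thm:ldd} is stated for all $0<\beta<1$, and one caps $\beta<1$ harmlessly in the extreme-sparse case $m=\Theta(n)$). Everything else is bookkeeping over already-established per-phase bounds.
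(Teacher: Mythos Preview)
Your proposal is correct and follows the same construction as the paper: build a spanning tree via the write-efficient connectivity algorithm, compute $\first/\last$ by Euler tour and $\low/\high$ by reduce plus leaffix, mark critical edges, and rerun connectivity to produce the BC labels. The paper's own justification (Section~\ref{sec:biconn-depth}) is in fact terser than yours---it simply asserts that the \depth{} is dominated by the connectivity subroutine and quotes the $O(\wcost^2\log^2 n)$ bound obtained by plugging $\beta=1/\wcost$ into Theorem~\ref{thm:cc-linear}, without spelling out how the $\min\{\wcost,m/n\}$ factor arises. Your explicit choice $\beta=\Theta(\max\{1/\wcost,\,n/m\})$ is exactly the missing step needed to match the stated \depth{} of Theorem~\ref{thm:biconn-linear}, so you have filled in a detail the paper leaves implicit; otherwise the two arguments coincide.
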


It is interesting to point out that the \imprep{} can efficiently answer queries that are non-trivial when using the standard output.
For example, consider the query: are two vertices in the same biconnected component?  
With the \imprep{} we can answer the query by finding the label of the lower vertex and checking whether the higher one has the same label or is the component head of this component.  To the best of our knowledge, answering such queries on the standard representation can be hard, unless other information is also kept (e.g.\ a block-cut tree).


\subsection{Biconnectivity Oracle in Sublinear Writes}\label{sec:biconn-sublinear}

Next we will show how the \implicit{} generated by Algorithm~\ref{algo:genclusters} can be integrated into the algorithm to further reduce writes in the case of bounded-degree graphs.  Our goal is as follows.

\begin{theorem}\label{thm:biconn}
There exists an algorithm that computes articulation points, bridges, and biconnected components of a bounded-degree graph in $O(n\sqrt{\wcost})$ expected work, $O(n/\sqrt{\wcost{}})$ writes and $O(\wcost^{3/2}\log^3n)$ \depth, and each query takes an expected $O(\wcost{})$ work and $O(\wcost{}\log n)$ work \whp{}, with no writes, on the \ourmodel\ model.
\end{theorem}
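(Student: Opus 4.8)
The plan is to combine the implicit $k$-decomposition (with $k = \sqrt{\wcost}$) with the BC-labeling machinery of Section~\ref{sec:biconn-linear}, paying careful attention to the mismatch between clusters and biconnected components that the introduction flags as the central difficulty. First I would build a $k$-implicit decomposition of $G$ via Algorithm~\ref{algo:genclusters}, which by Lemmas~\ref{lemma:genclusters} and~\ref{lemma:pargenclusters} costs $O(nk)$ work, $O(n/k)$ writes, and $O(k\log n(k^2\log n+\wcost))$ \depth; for $k=\sqrt{\wcost}$ this is $O(n\sqrt{\wcost})$ work, $O(n/\sqrt{\wcost})$ writes, and $O(\wcost^{3/2}\log^3 n)$ \depth, matching the target. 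Next I would run the sublinear-writes connectivity/spanning-forest construction of Theorem~\ref{thm:cc-oracle} to obtain (implicitly) a rooted spanning forest and work on the \clustergraph{} rather than $G$ itself, so that all the per-vertex bookkeeping of the classic biconnectivity algorithm (the $\first,\last,\low,\high$ values and the critical-edge test) is computed and stored only at the $O(n/k)$ centers. The $\first/\last$ Euler-tour ranks, the $\low/\high$ reductions, and the leaffix can all be done write-efficiently as in Section~\ref{sec:biconn-linear}, but restricted to the contracted graph, so they contribute $O(n/k)$ writes and fit within the work/depth budget.

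The heart of the argument is what the introduction calls the \localgraph{} of each cluster: for a query on vertices $u,v$ we must decide biconnectivity by inspecting only $O(1)$ clusters (at most $3$), so the preprocessed information stored at each center must capture how biconnected components cross cluster boundaries. The key steps here are: (i) define, for each cluster $C(s)$, a small graph recording which biconnected components of $G$ intersect $C(s)$, which critical edges lie inside $C(s)$, and how these biconnected components attach to the neighboring clusters' centers; (ii) show this \localgraph{} has size $O(k)$ and can be computed from $C(s)$ and its $O(k)$ boundary-neighboring vertices in $O(k\cdot k) = O(k^2)$ operations by Lemmas~\ref{lemma:findcenter} and~\ref{lemma:findcluster}, using $O(k\log n)$ \local{} memory, and with no writes beyond a constant amount per cluster; (iii) run the linear-writes BC-labeling algorithm of Section~\ref{sec:biconn-linear} on the \clustergraph{} augmented with this boundary information, removing critical edges and computing connectivity, to assign each center a component label $l$ and each component a head $r$; and (iv) store only $l$ and $r$ at the centers, which is $O(n/k)$ writes. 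Correctness follows by showing the union of the per-cluster \localgraph{}s together with the \clustergraph{}-level labels is equivalent to the BC labeling of $G$ on the full rooted spanning tree; one proves, as in the linear case, equivalence to the Tarjan–Vishkin output by checking that an edge of $G$ is a bridge / a vertex is an articulation point exactly when the corresponding condition holds either inside one \localgraph{} or across the \clustergraph{} labels.

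For a query $(u,v)$ I would: find $\rho(u)$ and $\rho(v)$ in $O(k)$ expected work ($O(k\log n)$ \whp) by Lemma~\ref{lemma:findcenter}; reconstruct the relevant $O(1)$ \localgraph{}s (those of $\rho(u)$, $\rho(v)$, and possibly the cluster containing their tree-path meeting point), each in $O(k^2)$ work with no writes; and then read off the answer from the \localgraph{}s plus the stored $(l,r)$ labels on the $O(1)$ centers involved, exactly as in the linear-case query but localized. With $k=\sqrt{\wcost}$ this gives $O(\wcost)$ expected query work and $O(\wcost\log n)$ \whp, and no writes, as claimed. The main obstacle I anticipate is step (ii)–(iii): proving that a constant number of clusters genuinely suffices, i.e., that the information a biconnected component "carries" across many clusters can be summarized locally so that a query touching only $\rho(u)$, $\rho(v)$ and their connecting cluster recovers the global component label — this requires a careful structural lemma about how biconnected components decompose along the rooted spanning-tree structure of the \clustergraph{}, handling the two bad cases (one cluster meeting many biconnected components, and one biconnected component spanning many clusters) simultaneously. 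Everything else is a reasonably mechanical combination of the implicit-decomposition cost bounds, the write-efficient connectivity algorithm, and the linear-writes BC-labeling construction.
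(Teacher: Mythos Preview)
Your proposal follows essentially the same architecture as the paper: build the implicit $k$-decomposition with $k=\sqrt{\wcost}$, compute the \imprep{} on the \clustergraph{}, introduce a per-cluster \localgraph{}, and answer a query by examining only the clusters of $u$, $v$, and their LCA cluster. The cost bookkeeping you give is correct and matches the paper's.

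The obstacle you flag at the end is precisely where your proposal is incomplete, and the paper resolves it with one extra piece of precomputed data that your step~(iv) omits. Storing only the labels $(l,r)$ at the centers does not suffice: when $\rho(u)$ and $\rho(v)$ are far apart in the cluster spanning tree, the path between them traverses many intermediate clusters, and an articulation point interior to one of those clusters can separate $u$ from $v$ without being visible in any of the three \localgraph{}s you propose to reconstruct. The paper's fix is to define, for every \outver{} of every cluster, whether it is \emph{root-biconnected} in that cluster's \localgraph{} (i.e., shares a block with the cluster root). This bit is computed for all $O(n/k)$ cluster-tree edges in $O(nk)$ operations and $O(n/k)$ writes. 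A single leaffix over the cluster spanning tree then records, for each cluster root, the nearest ancestor cluster whose root is cut off by an articulation point below. With this stored, the ``long path'' portion of a query becomes an $O(1)$ lookup, and the three \localgraph{}s handle only the endpoints and the LCA. This root-biconnectivity plus leaffix step is exactly the structural summarization you were looking for; everything else in your plan lines up with the paper.
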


The overall idea of the new algorithm is to replace the vertices in the original graph with the clusters generated by Algorithm~\ref{algo:genclusters}.  We generate the \imprep{} on the \clustergraph{} (so the vertex labels are now the \defn{cluster labels}), and then show that a connected-type query can be answered using only the information on the \clustergraph{} and a constant number of associated clusters.
The cost analysis is based on the parameter $k$, and using $k=\sqrt{\wcost{}}$ gives the result in the theorem.


\subsubsection{The \imprep{} on the \clustergraph}

In the first step of the algorithm we generate the \imprep{} on the \clustergraph{}
with $k=\sqrt{\wcost}$.  We root this spanning tree and name it
the clusters spanning tree.
The head vertex of a cluster is chosen as the \defn{cluster root} for that cluster. (The root cluster does not have a cluster root.)
For a cluster, we call the endpoint of a cluster tree edge outside of the cluster an \defn{\outver{}}.
The \defn{\outvers{}} of a cluster is the set of \outvers{}
of all associated cluster tree edges.  Note that all \outvers{} except
for one are the cluster roots for
neighbor clusters.

\subsubsection{The \localgraph{} of a cluster}

We next define the concept of the \localgraph{} of a cluster, so that each query only needs to look up a constant number of associated \localgraph{s}.
An example of a \localgraph{} is shown in Figure~\ref{fig:localgraph} and a more formal definition is as follows.

\begin{figure}
\centering
  \includegraphics[width=.5\columnwidth]{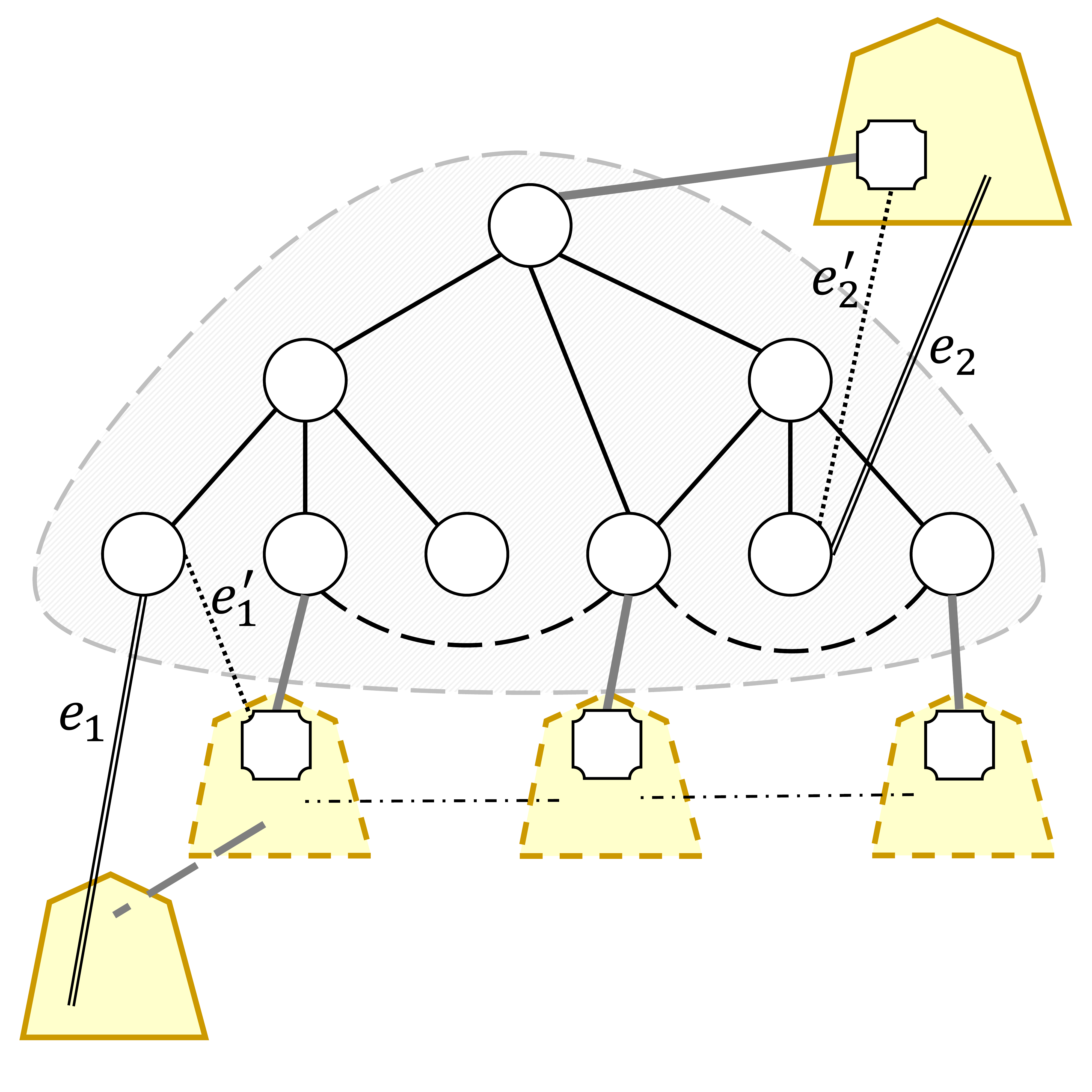}
  \vspace{-1em}
  \caption{An example of a \localgraph{}.  The vertices in the shaded area are in one cluster.  The \localgraph{} contains the vertices in the shaded area and the \outvers{} shown in plaques.  Solid lines indicate the edges that are in the clusters and thick grey lines represent cluster tree edges connecting other clusters (which are shown in yellow pentagons).  The three neighbor clusters sharing the same cluster label are connected using two edges (dash curves).  Edges $e_1$ and $e_2$ are the edges that only has one endpoints in the cluster.  The other endpoint is set to be the \outver{} connecting the cluster of the other original endpoint of this edge in the cluster spanning tree.  Consequently $e_1'$ and $e_2'$ are the replaced edges for $e_1$ and $e_2$.
  }\label{fig:localgraph}
\end{figure}

\begin{definition}[\localgraph]
The \localgraph{} $G'$ of a cluster is defined as $(V_i \cup V_o,E')$. $V_i$ is the set of vertices in the cluster and $V_o$ is the set of \outvers{}. $E'$ consists of:
\begin{enumerate}
  \item The edges with both endpoints in this cluster and the associated clusters' tree edges.
  \item For $c$ neighbor clusters sharing the same cluster label, we find the $c$ corresponding \outvers{} in $V_o$, and connect the vertices with $c-1$ edges.
  \item For an edge $(v_1, v_2)$ with only one endpoint $v_1$ in $V_i$, we find the \outver{} $v_o$ that is connected to $v_2$ on the cluster spanning tree, and create an edge from $v_1$ to $v_o$.
\end{enumerate}
\end{definition}

Figure~\ref{fig:localgraph} shows an example \localgraph{}.
Solid black lines are edges within the cluster and solid grey lines are cluster tree edges.  Neighbor clusters that share a label are shown with dashed outlines and connected via curved dashed lines.  $e_1$ and $e_2$ are examples of edges with only one endpoint in the cluster, and they are replaced by $e_1'$ and $e_2'$ respectively.

Computing \localgraph{s} requires a spanning tree and \imprep{} of the \clustergraph{}.

\begin{lemma}\label{lem:locgraph-cons}
The cost to construct one \localgraph{} is $O(k^2)$ in expectation and $O(k^2\log n)$ \whp{} on the \seqmodel{} model.
\end{lemma}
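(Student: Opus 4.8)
The plan is to bound the cost of each of the three steps in the definition of the \localgraph{} separately, using that each cluster has $O(k)$ vertices (by the $k$-decomposition property), each of bounded degree, so a cluster has $O(k)$ outside vertices as well, and the whole \localgraph{} has $O(k)$ vertices and $O(k)$ edges. The dominant subroutine throughout will be invoking $\rho(\cdot)$ (Lemma~\ref{lemma:findcenter}), which costs $O(k)$ operations in expectation and $O(k\log n)$ \whp{} per call, and finding a cluster $C(s)$ (Lemma~\ref{lemma:findcluster}), which costs $O(k\lvert C(s)\rvert)=O(k^2)$ in expectation. First I would enumerate the vertices $V_i$ of the cluster via Lemma~\ref{lemma:findcluster} and, while doing so, collect all edges incident to $V_i$: for each $v\in V_i$ and each neighbor $w$, either $\rho(w)=\rho(v)$ (an internal edge, handled by Step~1) or $\rho(w)\neq\rho(v)$, in which case we must identify the outside vertex to attach to; computing $\rho(w)$ for each such $w$ is $O(k)$ expected, and there are $O(k)$ such boundary edges, giving $O(k^2)$ expected total (and $O(k^2\log n)$ \whp{} by a union bound over the $O(k)$ calls). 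The cluster tree edges (the edges of the clusters spanning tree incident to this cluster) are already available from the \imprep{} computation on the \clustergraph{}, and there are only $O(k)$ of them since the \clustergraph{} has bounded degree, so Step~1's remaining work is subsumed.

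For Step~3, each boundary edge $(v_1,v_2)$ with $v_1\in V_i$, $v_2\notin V_i$ must be redirected to the outside vertex $v_o$ that lies on the cluster spanning tree between the cluster of $v_1$ and the cluster of $v_2$; since $\rho(v_2)$ has already been computed above and the local portion of the clusters spanning tree (the $O(k)$ cluster tree edges incident to this cluster) is in \local{} memory, locating $v_o$ is an $O(1)$ (or $O(k)$ worst case) lookup per edge, again $O(k)$ edges total. For Step~2, we group the $O(k)$ outside vertices by their cluster label (available from the \imprep{} on the \clustergraph{}) and, within each group of size $c$, add $c-1$ connecting edges; summed over groups this is $O(k)$ edges and $O(k)$ work. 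Thus every step is dominated by the $O(k)$ calls to $\rho(\cdot)$ incurred while scanning the boundary, each costing $O(k)$ expected / $O(k\log n)$ \whp{}, plus the $O(k^2)$ expected cost of enumerating $C(s)$ itself.

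The one subtlety — and the step I expect to require the most care — is correctly identifying the outside vertex $v_o$ in Step~3 and the outside vertices in Step~2 using only \local{} information: one must argue that the relevant fragment of the clusters spanning tree (namely the cluster tree edges incident to the current cluster, together with their endpoints' cluster labels) suffices, so that no additional traversal of the \clustergraph{} is triggered. Granting that the \imprep{} of the \clustergraph{} stores, per cluster, its incident cluster tree edges and the cluster labels of their far endpoints (which is $O(n/k)\cdot O(1)=O(n/k)$ total space, consistent with the earlier bounds), each lookup is $O(1)$ amortized or $O(k)$ in the worst case and does not dominate. Combining: the construction of one \localgraph{} costs $O(k^2)$ operations in expectation and $O(k^2\log n)$ \whp{}, with no writes to the asymmetric memory, on the \seqmodel{} model, and uses $O(k\log n)$ \local{} memory \whp{} (the $O(k)$ vertices and edges of the \localgraph{} plus the $O(k\log n)$ scratch space for each $\rho(\cdot)$ call).
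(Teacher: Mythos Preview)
Your proposal follows essentially the same approach as the paper: bound the cost of enumerating $V_i$ and the incident edges by the $O(k)$ calls to $\rho(\cdot)$, each costing $O(k)$ in expectation and $O(k\log n)$ \whp{}, and observe that $|V_o|=O(k)$ by bounded degree. The overall structure and the final bounds are right.

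The one place where your argument diverges from the paper is exactly the step you flag as subtle: locating the outside vertex $v_o$ for a Step~3 edge $(v_1,v_2)$. Your suggestion that the ``local portion of the clusters spanning tree'' suffices does not quite work. The cluster $\rho(v_2)$ is a neighbor of the current cluster in the \clustergraph{}, but it need not be a neighbor in the clusters \emph{spanning tree}; in general $\rho(v_2)$ may lie in the subtree hanging off any one of the $O(k)$ outside vertices, and knowing only the incident tree edges and their endpoints' cluster labels (from the \imprep{}) does not tell you which subtree. The paper resolves this with a one-time $O(n/k)$ preprocessing of the Euler tour of the clusters spanning tree (i.e., $\first/\last$ values per cluster), after which each $v_o$ lookup is $O(1)$ by a standard ancestor test. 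This preprocessing cost is amortized across all clusters and does not affect the per-\localgraph{} bound you state.
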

\begin{proof}
Each cluster in the \implicit{} has at most $k$ vertices, so finding the vertices $V_i$ takes $O(ck)$ cost where $c$ is the cost to compute the mapping $\rho(\cdot)$ of a vertex ($O(k)$ in expectation and $O(k\log n)$ \whp{}).  Since each vertex has a constant degree, there will be at most $O(k)$ neighbor clusters, so $|V_o|=O(k)$.   Enumerating and checking the other endpoint of the edges adjacent to $V_i$ takes $O(ck)$ cost.  Finding the new endpoint of an edge in category 3 requires constant cost after an $O(n/k)$ preprocessing of the Euler tour of the cluster spanning tree.  The number of neighbor clusters is $O(k)$ so checking the cluster labels and adding edges costs no more than $O(k)$.
The overall cost to construct one \localgraph{} is thus $O(k^2)$ in expectation and $O(k^2\log n)$ \whp{}.
Since $c$ is $O(k)$ in expectation and $O(k\log n)$ \whp{}, the overall cost matches the bounds in the lemma.
\end{proof}

\subsubsection{Queries}

With the \localgraph{} and the \imprep{} on the \clustergraph{}, all sorts of biconnectivity queries can be made.
Some of them are easier while other queries require more steps, and the preprocessing steps are shown in an overview of Algorithm~\ref{algo:biconn}.

\begin{algorithm}[t]
\caption{Sublinear-write algorithm for biconnectivity}
\label{algo:biconn}
\newcommand\mycommfont[1]{\footnotesize{#1}}
\SetCommentSty{mycommfont}
\KwIn{Connected bounded-degree graph $G=(V,E)$ and an \implicit{}.}
    \vspace{0.5em}
    Apply connectivity algorithm to generate the \clustergraph.\\
    Compute $\low(\cdot)$ and $\high(\cdot)$ values of all clusters.\\
    Compute the \imprep{} of the \clustergraph{}.\\
    \tcp{Bridges and articulation points can be queried}
    \smallskip
    Compute the root biconnectivity of all \outvers{} in all \localgraph{s}.\\
    Apply leaffix to identify the articulation point of each cluster root.\\
    \tcp{Biconnectivity and 1-edge connectivity on vertices and edges can be queried}
    \smallskip
    Compute the number of biconnected components in each cluster that are completely within this cluster.\\
    Apply prefix sums on the clusters to give an identical label to each biconnected component.\\
    \tcp{The label of biconnected component can be queried}
\end{algorithm}


\myparagraph{Bridges}
There are three cases when deciding whether an edge is a bridge: a tree edge in the clusters spanning tree, a cross edge in the clusters spanning tree, or an edge with both endpoints in the same cluster.
Deciding which case to use takes constant operations.

A tree edge is a bridge if and only if it is a bridge of the \clustergraph{}, which we can mark with $O(n/k)$ writes while computing the \imprep{}. A cross edge cannot be a bridge.

For an edge within a cluster, we use the following lemma:
\begin{lemma}\label{lem:bridge-clust}
An edge with both endpoints in one cluster is a bridge if and only if it is a bridge in the \localgraph{} of the the corresponding cluster.
\end{lemma}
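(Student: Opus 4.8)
The plan is to show that the bridge status of an intra-cluster edge $e = (u,v)$ (both $u,v \in V_i$ for some cluster with local graph $G'$) is preserved when passing between the whole graph $G$ and the local graph $G'$. I would argue this by characterizing bridges via cycles: $e$ is a bridge of a connected graph iff $e$ lies on no simple cycle, equivalently iff there is no path between its endpoints avoiding $e$. So it suffices to prove that there is a $u$--$v$ path in $G \setminus \{e\}$ if and only if there is a $u$--$v$ path in $G' \setminus \{e\}$.

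First I would handle the ``easy'' direction, from $G'$ back to $G$. Every edge type of $G'$ corresponds to connectivity that genuinely exists in $G$: category-1 edges are literally edges of $G$ (or cluster tree edges, each of which is a path in $G$ through the spanning tree between the two cluster roots / outside vertices); a category-3 replacement edge $e_i'$ from $v_1$ to the outside vertex $v_o$ witnesses the fact that in $G$ there is a path from $v_1$ out to $v_2$ and then up the cluster spanning tree to $v_o$; a category-2 edge between two outside vertices sharing a cluster label witnesses that those two neighbor clusters lie in the same biconnected component of the \clustergraph{}, hence are $2$-edge-connected there and in particular connected in $G$ without using $e$ (since $e$ is interior to our cluster and these paths go through other clusters). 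Contracting each of these ``witness paths'' shows a $u$--$v$ path in $G' \setminus\{e\}$ yields one in $G \setminus \{e\}$, using that none of the witness paths uses $e$ because $e$ is strictly inside the cluster while the witnesses leave it (or stay on other edges of the cluster).

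The harder direction is the converse: a $u$--$v$ path $P$ in $G \setminus \{e\}$ must be projected down to a $u$--$v$ path in $G' \setminus \{e\}$. I would decompose $P$ into maximal segments according to which cluster they are in. Segments lying entirely in our cluster survive as category-1 edges. A maximal ``excursion'' of $P$ that leaves the cluster at outside vertex $x$, wanders through other clusters, and re-enters at outside vertex $y$, needs to be replaced by a short path in $G'$ connecting $x$ to $y$; the key claim is that such an excursion certifies that $x$ and $y$ are in the same biconnected component of the \clustergraph{} — because the excursion together with the cluster spanning-tree paths forms a structure showing $x,y$ are $2$-edge-connected in the \clustergraph{} when the interior edge $e$ is removed — and category-2 edges plus cluster-tree edges in $G'$ exactly encode that biconnected component, giving the needed $x$--$y$ connection in $G'\setminus\{e\}$. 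An excursion that uses a single non-tree edge $(v_1,v_2)$ leaving $V_i$ is handled by the category-3 edge $e_i'$. The main obstacle is making this excursion-replacement argument fully rigorous: one must verify that the \imprep{}/\low/\high information computed on the \clustergraph{}, which determines the category-2 edges, indeed captures \emph{exactly} the biconnectivity relationships among outside vertices that an arbitrary excursion can exploit — i.e., that no excursion can ``sneak'' extra connectivity that $G'$ fails to represent, and conversely that $G'$ doesn't over-connect. I would lean on the correctness of the \imprep{} construction (equivalence with Tarjan--Vishkin) established earlier and on Corollary \ref{cor:clustertree} to control how cluster spanning-tree paths interact with the excursions, then conclude that bridge status of $e$ is identical in $G$ and $G'$.
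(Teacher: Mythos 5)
Your approach takes a genuinely different route from the paper. You characterize a bridge $e=(u,v)$ via the existence of an alternative $u$--$v$ path in $G\setminus\{e\}$ (equivalently, whether $e$ lies on a cycle), and then try to transport paths back and forth between $G$ and the \localgraph{} $G'$. The paper instead uses the tree-based characterization specific to spanning-tree edges — $e$ is a bridge iff no edge other than $e$ leaves the subtree rooted below $e$ — and then argues in one stroke that the three kinds of edge replacements in the \localgraph{} definition each preserve the existence of such a subtree-escaping edge. The paper's characterization is ``local'' to the cut below $e$, so it never has to track an arbitrary detour's itinerary through other clusters; your cycle-based characterization forces exactly that excursion bookkeeping. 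The trade-off is that your direction $G'\Rightarrow G$ is quite clean (each of the three edge categories is a genuine connectivity witness, and the biconnectivity argument shows the replacement paths can be routed to avoid the cluster $C$ entirely, hence avoid $e$), whereas the paper's one-line claim requires the reader to verify that the subtree-escaping condition is invariant under all three edge modifications, which is not self-evident either.

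However, as you yourself flag, the direction $G\setminus\{e\}\Rightarrow G'\setminus\{e\}$ is not actually closed. The missing piece is precisely the claim that whenever a path in $G\setminus\{e\}$ makes an excursion that leaves $C$ through a vertex $q$ (corresponding to \outver{} $v_o$) and re-enters through $q'$ (corresponding to $v_o'$), the \outvers{} $v_o$ and $v_o'$ must be connected in $G'\setminus\{e\}$ via category-2 edges and cluster tree edges. Proving this requires arguing: (i) the excursion certifies that the two neighbor clusters of $C$ on the respective tree paths toward $q$ and $q'$ are connected in the \clustergraph{} with $C$'s vertex removed; (ii) together with the two cluster tree edges into $C$, this places them in the same biconnected component of the \clustergraph{}; (iii) hence neither tree edge is critical, so the two neighbor clusters receive the same cluster label under the \imprep{}; (iv) category-2 edges therefore join $v_o$ and $v_o'$. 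You gesture toward (ii)--(iv) by appealing to Tarjan--Vishkin equivalence and Corollary~\ref{cor:clustertree}, but you do not carry out the chain. In particular step (iii) needs care when one of the two neighbors is $C$'s parent in the clusters spanning tree, since that \outver{} is not a cluster root. Until this chain is made explicit, the harder direction remains an identified but unproven claim, so the proposal as written is incomplete.
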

\begin{proof}
If an edge is a bridge in the original graph it means that there are no edges from the subtree of the lower vertex to the outside except for this edge itself.  By applying the modifications of the edges, this property still holds, which means the edge is still a bridge in the \localgraph{} and vice versa.
\end{proof}

Checking if an edge in a cluster is a bridge takes $O(k^2)$ on average and $O(k^2\log n)$ \whp{}.

\myparagraph{Articulation points}  By a similar argument that a vertex is an articulation point of the original graph if and only if it is an articulation point of the associated \localgraph{}.
Given a query vertex $v$, we can check whether it is an articulation point in \localgraph{} associated to $v$, which costs $O(k^2)$ on average and $O(k^2\log n)$ \whp{}.

\bigskip

We now discuss how some more complex queries can be made.  To start with, we show some definitions and results that are used in the algorithms for queries.

\begin{definition}[root biconnectivity]
We say a vertex $v$ in a cluster $C$'s \localgraph{} is root-biconnected if $v$ and the cluster root have the same vertex label in $C$'s \localgraph{}.
\end{definition}

A root-biconnected vertex $v$ indicate that $v$ can connect to the ancestor clusters without using the cluster root (i.e.\ the cluster root is not an articulation point to cut $v$).  Another interpretation is that, there is no articulation point in cluster $C$ that disconnects $v$ from the \outver{} of the cluster root.

\begin{lemma}
Computing and storing the root biconnectivity of all \outvers{} in all
\localgraph{s} takes $O(nk)$ operations in expectation and $O(n/k)$ writes
on the \seqmodel.
\end{lemma}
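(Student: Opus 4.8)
The plan is to process each cluster of the \implicit{} independently and entirely inside \local{} memory, so that the only writes to the asymmetric memory are the root-biconnectivity bits themselves. Fix a cluster $C$ (there are $O(n/k)$ of them in expectation). First assemble the \localgraph{} $G'$ of $C$: by Lemma~\ref{lem:locgraph-cons} this costs $O(k^2)$ operations in expectation ($O(k^2\log n)$ \whp{}), uses only $O(k\log n)$ \local{} memory \whp{}, and performs no writes, given the clusters spanning tree, the \imprep{} of the \clustergraph{}, and a one-time $O(n/k)$ preprocessing of the Euler tour of the clusters spanning tree. Since $G'$ has $O(k)$ vertices and $O(k)$ edges, we can now run the sequential \imprep{}-computation of Section~\ref{sec:imprep} on $G'$ treated as an ordinary connected graph (spanning tree, Euler tour, the reduce/leaffix for $\low(\cdot)$ and $\high(\cdot)$, removal of critical edges, and a connectivity pass), all in \local{} memory using $O(k\log k)$ additional operations and no writes. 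This produces a vertex label in $G'$ for every vertex of $V_i\cup V_o$, in particular for every \outver{}. For each \outver{} $v\in V_o$ we then compare its $G'$-label with the $G'$-label of the cluster root of $C$ in $O(1)$ time, which decides whether $v$ is root-biconnected, and we write one bit per \outver{} recording the answer.

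Summing over all clusters gives the claimed bounds. The per-cluster operation cost is dominated by the $O(k^2)$ \localgraph{} construction (which dominates both the $\tilde O(k)$-cost \imprep{} computation on $G'$ and the $O(|V_o|)=O(k)$ label comparisons), so the total is $O(n/k)\cdot O(k^2)=O(nk)$ operations in expectation. For the writes, the key observation is that although a single cluster may have $\Theta(k)$ \outvers{}, the total number of \outvers{} summed over all clusters is only $O(n/k)$: every \outver{} of a cluster is the endpoint lying outside that cluster of a clusters-spanning-tree edge, and each such tree edge contributes exactly one \outver{} to each of its two endpoint clusters, so $\sum_C |V_o(C)| \le 2(\#\text{clusters}-1)=O(n/k)$. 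Hence storing one bit per \outver{} is $O(n/k)$ writes in all. The \local{} memory in use at any moment is $O(k\log n)$ \whp{} (a cluster and its \localgraph{} fit in $O(k)$ words, and each $\rho(\cdot)$ query uses $O(k\log n)$ \whp{} by Lemma~\ref{lemma:findcenter}), as assumed by the model.

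The main obstacle is the write bound, and it splits into two pieces. First one must check that the per-cluster work --- finding the cluster's vertices, building the \localgraph{}, and computing its \imprep{} --- can be carried out without ever touching the asymmetric memory, which holds because every intermediate object is $O(k)$-sized and fits in the $O(k\log n)$-word \local{} memory; thus the only asymmetric writes are the root-biconnectivity bits. Second, one needs the global counting argument above --- essentially the handshake lemma for the clusters spanning tree --- to conclude that those bits number only $O(n/k)$ despite individual clusters being responsible for up to $\Theta(k)$ of them. A secondary point to verify is that running the Section~\ref{sec:imprep} \imprep{} algorithm on the (possibly non-bounded-degree) \localgraph{} is both correct and cheap: correctness is inherited directly from that algorithm applied to $G'$ as a connected graph, and the cost is $\tilde O(k)$ purely because $G'$ has $O(k)$ vertices and edges, independent of its maximum degree.
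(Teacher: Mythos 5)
Your proof is correct and takes the same approach as the paper's (which is only a two-sentence sketch): build each cluster's \localgraph{} in \local{} memory, compute vertex labels via the \imprep{} on that $O(k)$-sized graph, compare against the cluster root's label, and bound the total writes by the number of \clustergraph{} spanning-tree edges, i.e., $\sum_C |V_o(C)| = O(n/k)$ via the handshake argument. You supply the details the paper elides (that everything intermediate fits in $O(k\log n)$ \local{} memory, and that the per-cluster $O(k^2)$ \localgraph{} construction dominates the $\tilde O(k)$ \imprep{} computation on $G'$), and both the $O(nk)$-operations and $O(n/k)$-writes accounting match the paper's.
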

The proof is straight-forward.  The cost to construct the \localgraph{s} and compute root biconnectivity is $O(nk)$, and since there are $O(n/k)$ clusters tree edges, storing the results requires $O(n/k)$ writes.

\myparagraph{Querying whether two vertices are biconnected}
Checking whether two vertices $v_1$ and $v_2$ can be disconnected by removing any single vertex in the graph is one of the most commonly-used biconnectivity-style queries.
To answer this query, our goal is to find the tree path between this pair of vertices and check whether there is an articulation point on this path that disconnects them.

The simple case is when $v_1$ and $v_2$ are within the same cluster.
We know that the two vertices are connected by a path via the vertices within the cluster. We can check whether any vertex on the path disconnects these two vertices using their vertex labels.

Otherwise, $v_1$ and $v_2$ are in different clusters $C_1$ and $C_2$.  Assume $C_{\smb{LCA}}$ is the cluster that contains the LCA of $v_1$ and $v_2$ (which can be computed by the LCA of $C_1$ and $C_2$ with constant cost) and $v_{\smb{LCA}}\in C_{\smb{LCA}}$ is the LCA vertex.
The tree path between $v_1$ and $v_2$ is from $v_1$ to $C_1$'s cluster root, and then to the cluster root of the \outver{} of $C_1$'s cluster root, and so on, until reaching $v_{\smb{LCA}}$, and the other half of the path can be constructed symmetrically.  It takes $O(k^2)$ expected cost to check whether any articulation point disconnects the paths in $C_1$, $C_2$ and $C_{\smb{LCA}}$.
For the rest of the clusters, since we have already precomputed and stored the root biconnectivity of all outside vertices, then applying a leafix on the clusters spanning tree computes the cluster containing the articulation point of each cluster root.
Therefore checking whether such an articulation point on the path between $C_1$ and $C_{\smb{LCA}}$ or between $C_2$ and $C_{\smb{LCA}}$ that disconnects $v_1$ and $v_2$ takes $O(1)$ cost.  Therefore checking whether two vertices are biconnected requires $O(k^2)$ cost in expectation and no writes.

\hide{
\begin{enumerate}
  \item if one cluster is the ancestor of the other (WLOG assume $C_1$ is the ancestor and $(v_{C_1},v_{C_3})$ is the cluster tree edge on the path between $C_1$ and $C_2$, $v_{C_1}\in C_1$ and $v_{C_3}\in C_3$), then (1) in $C_1$'s \localgraph{} $v_1$ must be biconnected to $v'$, (2) $v_2$ shares the same vertex label with its cluster root, and (3) $C_2$ and $C_3$ have the same cluster label;
  \item otherwise, both vertices share the same vertex labels with the cluster roots in the \localgraph{s} and the two clusters have the same cluster label.
\end{enumerate}
}

\myparagraph{Querying whether two vertices are 1-edge connected}
This is a similar query comparing to the previous one and the only difference is whether an edge, instead of a vertex, is able to disconnect two vertices.
The query can be answered in a similar way by checking whether a bridge disconnects the two vertices on their spanning tree path, which is related to the two clusters containing the two query vertices and the LCA cluster, and the precomputed information for the clusters on the tree path among these three clusters.
The cost for a query is also $O(k^2)$ operations in expectation and it requires no writes.

\myparagraph{Queries on biconnected-component labels for edges}  We now answer the standard queries~\cite{CLRS,JaJa92} of biconnected components: given an edge, report a unique label that represents the biconnected component this edge belongs to.

We have already described the algorithm to check whether any two vertices are biconnected, so the next step is to assign a unique label of each biconnected components, which requires the following lemma:
\begin{lemma}
A vertex in one cluster is either in a biconnected component that only contains vertices in this cluster, or biconnected with at least one \outver{} of this cluster.
\end{lemma}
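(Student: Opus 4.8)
The plan is to prove the statement by a direct case analysis on where the biconnected component of a given vertex $v$ lives relative to its cluster $C$. Fix a vertex $v \in C$ and let $B$ be the biconnected component of the original graph $G$ containing $v$. The dichotomy to establish is: either every vertex of $B$ lies in $C$, or $B$ contains some \outver{} of $C$. Recall that the clusters spanning tree is a contraction of the original spanning tree $T$, and that each cluster induces a connected subtree of $T$ with the cluster root (the head vertex) sitting just above it; the \outvers{} of $C$ are precisely the endpoints, outside $C$, of the cluster tree edges incident to $C$ (the cluster-root \outver{} going up, plus the child-side \outvers{} going down).

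First I would recall the standard structural fact about biconnected components in a rooted spanning tree: a biconnected component $B$, viewed inside $T$, is itself a subtree, and its ``top'' vertex (the one closest to the root in $T$) is an articulation point (or the root) through which every path from $B$ to the rest of the graph must pass; all other vertices of $B$ sit strictly below that top vertex in $T$. So $B$ is a connected ``vertical'' piece of $T$. Now intersect $B$ with the cluster $C$: since $C$ is also a connected subtree of $T$, the intersection $B \cap C$ is a (possibly empty, but nonempty here since $v$ is in it) connected subtree of $T$.

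The key step — and the place I expect the only real subtlety — is showing that if $B \not\subseteq C$, then $B$ must actually contain an \outver{} of $C$, not merely ``leave'' $C$ through an edge. Here is the argument I would give. Suppose $B$ has a vertex $u \notin C$. Because $B$ is connected in $T$, walk along the tree path inside $B$ from $v$ to $u$; it must cross the boundary of $C$, i.e., it uses a tree edge $(x,y)$ with $x \in C$, $y \notin C$, and both $x,y \in B$. But every tree edge leaving $C$ is, by construction of the \clustergraph{}, a cluster tree edge incident to $C$ (this is exactly how cluster tree edges are defined — contractions of the spanning tree $T$). Hence $y$ is an \outver{} of $C$ by definition, and $y \in B$, which is what we want. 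Conversely, if $B \subseteq C$ we are trivially in the first case. This dichotomy is exactly the claim. I would also note, to connect back to the \localgraph{} machinery used in the preceding lemmas (e.g.\ Lemma~\ref{lem:bridge-clust} and the root-biconnectivity definition), that ``$v$ is biconnected with an \outver{} of $C$'' is detectable via the \localgraph{} of $C$: the edge modifications of categories 1–3 preserve exactly the biconnectivity relationships between vertices of $V_i$ and the \outvers{}, so $v$ and the \outver{} $y$ share a vertex label in the \localgraph{} of $C$ iff they are biconnected in $G$. The main obstacle is purely the boundary-crossing argument above — making sure that ``leaves $C$ through a tree edge'' is identified with ``reaches an \outver{}'' — but this follows immediately from the definition of the cluster tree edges as contractions of $T$, so the proof is short.
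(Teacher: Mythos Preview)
Your argument is correct and is essentially the paper's one-line proof spelled out in more detail: both pick $v_2 \in B \setminus C$, take the spanning-tree path from $v_1$ to $v_2$, and observe that the first vertex outside $C$ on that path is an \outver{} of $C$ lying in the same block as $v_1$. You make explicit the two structural facts the paper leaves tacit (that a block induces a subtree of the spanning tree, and that tree edges leaving $C$ are exactly the cluster tree edges), but the route is the same.
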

\begin{proof}
Assume a vertex $v_1$ in this cluster $C$ is biconnected to another vertex $v_2$ outside the cluster, then let $v_o$ be the \outver{} of $C$ on the spanning tree path between $v_1$ and $v_2$, and $v_1$ is biconnected with $v_o$, which proves the lemma.
\end{proof}

With this lemma, we first compute and store the labels of the biconnected components on the cluster roots, which can be finished with $O(nk)$ expected operations and $O(n/k)$ writes with the \imprep{} on the \clustergraph{} and the the root biconnectivity of \outvers{} on all clusters.
Then for each cluster we count the number of biconnected components completely within this cluster.
Finally we apply a prefix sum on the numbers for the clusters to provide a unique label of each biconnected component in every cluster.
Although not explicitly stored, the vertex labels in each cluster can be regenerated with $O(k^2)$ operations in expectation and $O(k^2\log n)$ operations \whp{}, and a vertex label is either the same as that of an \outver{} which is precomputed, or a relative label within the cluster plus the offset of this cluster.

Similar to the algorithm discussed in Section~\ref{sec:imprep}, when a query comes, the edge can either be a cluster tree edge, a cross edge, or within a cluster.
For the first case the label biconnected component is the precomputed label for the (lower) cluster root.
For the second case we just report the vertex label of an arbitrary endpoint, and similarly for the third case the output is the vertex label of the lower vertex in the cluster.
The cost of a query is $O(k^2)$ in expectation and $O(k^2\log n)$ \whp{}.

\bigskip
With the concepts and lemmas in this section, with a precomputation of $O(nk)$ cost and $O(n/k)$ writes, we can also do a normal query with $O(k^2)$ cost in expectation and $O(k^2\log n)$ \whp{} on \textbf{bridge-block tree}, \textbf{cut-block tree}, and \textbf{1-edge-connected components}.

\subsection{Parallelizing Biconnectivity Algorithms}\label{sec:biconn-depth}

The two biconnectivity algorithms discussed in this section are essentially highly parallelizable.
The key algorithmic components include Euler-tour construction, tree contraction, graph connectivity, prefix sum, and preprocessing LCA queries on the spanning tree.
Since the algorithms run each of the components a constant number of times, and the \depth{} of the algorithm is bounded by the \depth{} of graph connectivity, whose bound is provided in Section~\ref{sec:cc} ($O(\wcost^2\log^2 n)$ and $O(\wcost^{3/2}\log^3 n)$ \whp{} respectively when plugging in $\beta$ as $1/\wcost{}$ and $1/\sqrt\wcost{}$).\footnote{The classic parallel algorithms with polylogarithmic depth solve the Euler-tour construction, tree contraction, and prefix sum, since we here only require linear writes (in terms of number of vertices, $O(n)$ and $O(n/k)$ for the two algorithms) for both algorithms.}

For the sublinear-write algorithm, we assume that computations
within a cluster are sequential,
and the work is upper bounded by $O(k^2)=O(\wcost{})$ in expectation and $O(k^2\log n)=O(\wcost{}\log n)$ \whp{} for any computations within a cluster.
This term is additive to the overall depth, since after acquiring the spanning tree (forest) of the clusters, we run all computations within the clusters in parallel and then run tree contraction and prefix sums based on the calculated values.
The $O(\wcost{})$ expected work ($O(\wcost{}\log n)$ \whp{}) is also the cost for a single biconnectivity query, and multiple queries can be done in parallel.
\hide{
The exception is for graph connectivity, where the \depth{} to search the neighbor clusters is multiplicative to the \depth{} of the BFS in the low-diameter decomposition.
Thus the depth of this
algorithm is also bounded by the depth of graph connectivity, which is
$O(\wcost^{3/2}\log^3n)$ \whp{}.
}

\section{Sublinear-Write Algorithms on Unbounded-Degree Graphs}\label{sec:appendix-unbounded}

Here we discuss a solution to generate another graph $G'$ which has bounded degree with $O(m)$ vertices and edges, and the connectivity queries on the original graph $G$ can be answered in $G'$ equivalently.

The overall idea is to build a tree structure with \defn{virtual nodes} for each vertex that has more than a constant degree.
Each virtual node will represent a certain range of the edge list.
Considering a star with all other vertices connecting to a specific vertex $v_1$, we build a binary tree structure with 2 virtual nodes on the first level $v_{1,2\to n/2}$, $v_{1,n/2+1\to n}$, 4 virtual nodes on the second level $v_{1,2\to n/4},\cdots,v_{1,3n/4+1\to n}$ and so on.
We replace the endpoint of an edge from the original graph $G$ with the leaf node in this tree structure that represents the corresponding range with a constant number of edges.
Notice that if both endpoints of an edge have large degrees, then they both have to be redirected.

The simple case is for a sparse graph in which most of the vertices are bounded-degree, and the sum of the degrees for vertices with more than a constant number of edges is $O(n/k)$ (or $O(n/\sqrt{\wcost})$).
In this case we can simply explicitly build a tree structure for the edges of a vertex.

Otherwise, we require the adjacency array of the input graph to have the following property: each edge can query its positions in the edge lists for both endpoints.
Namely, an edge $(u,v)$ knows it is the $i$-th edge in $u$'s edge list and $j$-th edge in $v$'s edge list.
To achieve this, either an extra pointer is stored for each edge, or the edge lists are presorted and the label can be binary searched (this requires $O(\log n)$ work for each edge lookup).
With this information, there exists an implicit graph $G'$ with bounded-degree.
The binary tree structures can be defined such that given an internal tree node, we can find the three neighbors (two neighbors for the root) without explicitly storing the newly added vertices and edges.
Notice that the new graph $G'$ now has $O(m)$ vertices including the virtual ones.
The virtual nodes help to generate \implicit{} and require no writes unless they are selected to be either primary or secondary centers.

Graph connectivity is obviously not affected by this transformation.
It is easy to check that a bridge in the original graph $G$ is also a bridge in the new graph $G'$ and vice versa.
In the biconnectivity algorithm, an edge in $G$ can be split into multiple edges in $G'$, but this will not change the biconnectivity property within a biconnected component, unless the component only contains one bridge edge, which can be checked separately.

This construction, combined with our earlier results, leads to
Theorem~\ref{thm:main-sublinear}.

\section{Conclusion}
This work provides several algorithms targeted at solving graph connectivity
problems considering the read-write asymmetry. Our algorithms make use of
an implicit decomposition technique that is applicable beyond the scope of the
problems studied in this paper. By using this decomposition
and redundantly performing small computation, we are able to reduce the number
of writes in exchange for a small increase in the total number of operations.
This allows us to offset the increased cost of writes in anticipated future systems
and improve overall performance. Even excluding new memory technology, we
believe that research into algorithms with fewer writes provides interesting
results from both a theoretical and memory/cache coherence perspective. Our
work provides a framework which can be used to develop write-efficient solutions
to large graph connectivity problems.








\appendix




\section{Motivation from~\cite{BFGGS15}}\label{sec:hardware}

Further motivation for the asymmetry between reads and write costs in
emerging memory technologies was provided in~\cite{BFGGS15}.  As a
convenience to the reviewer, in this appendix we repeat a suitable
excerpt from that paper.

``While DRAM stores data in capacitors that
typically require refreshing every few milliseconds,
and hence must be continuously powered, emerging NVM
technologies store data as ``states'' of the given material that
require no external power to retain.  Energy is required only to read
the cell or change its value (i.e., its state).  While there is no
significant cost difference between reading and writing DRAM (each
DRAM read of a location not currently buffered requires a write of
the DRAM row being evicted, and hence is also a write),
emerging NVMs such as Phase-Change Memory (PCM), Spin-Torque
Transfer Magnetic RAM (STT-RAM), and Memristor-based Resistive RAM
(ReRAM) each incur significantly higher cost for writing than reading.
This large gap seems fundamental to the technologies themselves: to
change the physical state of a material requires relatively
significant energy for a sufficient duration, whereas reading the
current state can be done quickly and, to ensure the state is left
unchanged, with low energy.  An STT-RAM cell, for example, can be read
in 0.14 $ns$ but uses a 10 $ns$ writing pulse duration, using roughly
$10^{-15}$ joules to read versus $10^{-12}$ joules to
write~\cite{Dong08} (these are the raw numbers at the materials
level).  A Memristor ReRAM cell uses a 100 $ns$ write pulse duration, and
an 8MB Memrister ReRAM chip is projected to have reads with 1.7 $ns$
latency and 0.2 $nJ$ energy versus writes with 200 $ns$ latency and 25 $nJ$
energy~\cite{Xu11}---over two orders of magnitude differences in latency
and energy.  PCM is the most mature of the three technologies, and
early generations are already available as I/O devices.  A recent
paper~\cite{Kim14} reported 6.7 $\mu s$ latency for a 4KB read and
128 $\mu s$ latency for a 4KB write.  Another reported that the
sector I/O latency and bandwidth for random 512B writes was a factor
of 15 worse than for reads~\cite{ibm-pcm14b}.  As a future memory/cache
replacement, a 512Mb PCM memory chip is projected to have 16 $ns$ byte
reads versus 416 $ns$ byte writes, and writes to a 16MB PCM L3 cache
are projected to be up to 40 times slower and use 17 times more energy
than reads~\cite{Dong09}.  While these numbers are speculative and subject
to change as the new technologies emerge over time, there seems to be
sufficient evidence that writes will be considerably more costly than
reads in these NVMs.''

Note that, unlike SSDs and earlier versions of phase-change memory products,
these emerging memory products will sit on the processor's memory bus and be
accessed at byte granularity via loads and stores (like DRAM).  Thus, the
time and energy for reading can be roughly on par with DRAM, and depends
primarily on the properties of the technology itself relative to DRAM.

\section{Formal Definitions of the Terms}\label{sec:appendix-prelim}

A \defn{spanning tree} $T$ of an undirected connected graph $G$ is a subgraph that is a tree which includes all of the vertices of $G$.
A \defn{spanning forest} of $G$ contains the union of the spanning trees of all connected components in $G$.
The \defn{lowest-common-ancestor} (LCA) query for two vertices on a rooted spanning tree requires $O(n)$ work and $O(\log n)$ depth on preprocessing, and $O(1)$ query time~\cite{berkman1993recursive,sadakane2002space}.

A \defn{connected component} of $G$ is a subgraph in which any two vertices are connected to each other by paths via edges in the graph.

A \defn{biconnected component} (also known as a block or 2-connected component) of $G$ is a maximal subgraph such that it is still connected after removing any single vertex in the subgraph.
Any connected graph decomposes into a tree of biconnected components called the block-cut tree of the graph.
The blocks are attached to each other at shared vertices called \defn{articulation points}.

A \defn{bridge} of $G$ is an edge whose deletion increases the number of connected components of the graph.
A connected graph is \defn{$k$-edge-connected} if it remains connected whenever fewer than $k$ edges are removed.
An unconnected graph is 0-edge connected; a connected graph with bridges is 1-edge-connected; and a bridge-less graph is at least 2-edge-connected.



\section{Summary of Low-Diameter Decomposition Algorithm}\label{sec:appendix-cc}

 The algorithm of Miller et al.~\cite{miller2013parallel} generates a
$(\beta, O({\log n}/\beta))$ decomposition with $O(m)$ operations and
$O(\wcost{\log^2 n}/\beta)$ \depth\ \whp{}.  As described by Miller et
al., the number of writes performed is also $O(m)$, but this can be
improved to $O(n)$.  Specifically, the algorithm executes multiple
breadth-first searches (BFS's) in parallel, which can be replaced by
write-efficient BFS's.

In more detail, the algorithm first assigns each vertex $v$ a
value $\delta_v$ drawn from an exponential distribution with parameter
$\beta$ (mean $1/\beta$). Then on iteration $i$ of the algorithm,
BFS's are started from unexplored vertices $v$ where
$\delta_v \in [i,i+1)$ and all BFS's that have already started are
advanced one level. At the end of the algorithm, all vertices that
were visited by a BFS starting from the same source will belong to the
same subset of the decomposition. If a vertex is visited by multiple
BFS's in the same iteration, it can be assigned to an arbitrary
BFS.\footnote{The original analysis of Miller et
  al.~\cite{miller2013parallel} requires the vertex to be assigned to
  the BFS with the smaller fractional part of $\delta_s$, where $s$ is
  the source of the BFS. However, Shun et al.~\cite{Shun2014} show
  that an arbitrary assignment gives the same complexity bounds.}  The
maximum value of $\delta_v$ can be shown to be $O({\log n}/\beta)$
\whp{}, and so the algorithm terminates in $O(\log n/\beta)$
iterations. Each iteration requires $O(\wcost\log n)$ \depth\ for
packing the frontiers of the BFS's, leading to an overall \depth\ of
$O(\wcost\log^2 n/\beta)$ \whp{}. A standard BFS requires operations
and writes that are linear in the number of vertices and edges
explored, giving a total work of $O(\wcost(m+n))$.  By using the
write-efficient BFS from~\cite{BBFGGMS16}, the expected number of
writes for each BFS is proportional to the number of vertices marked
(assigned to it), and so the total expected number of writes is
$O(n)$. Tasks only need $O(1)$ \local{} memory in the algorithm.  This
yields Theorem~\ref{thm:ldd}.

\hide{
\begin{proof}[{\bf Proof of Theorem~\ref{thm:cc-oracle}}]
  The $k$-implicit decomposition can be found in $O(n/k)$ writes,
  $O(kn + \wcost n/ k)$ work, and $O(k\log n(k^2\log n + \wcost))$
  \depth{} by Lemmas~\ref{lemma:genclusters}
  and~\ref{lemma:pargenclusters}.  For $k=\sqrt{\wcost}$, these bounds
  reduce to $O(n/\sqrt{\wcost})$ writes, $O(\sqrt{\wcost} n)$ work, and
  $O(\wcost^{3/2}\log^3n)$ \depth{}.

  If we had an explicit representation of the \clustergraph{} with
  $n'=O(n/k)$ vertices and $m' = O(m) = O(n)$ edges, the connectivity
  algorithm would have $O(n'+m'/k) = O(n/k)$ expected writes,
  $O(\wcost n' + \wcost m' / k + m') = O(\wcost n/k + n)$ expected
  work, and $O(\wcost k \log^2n)$ \depth{} \whp{} (by
  Theorem~\ref{thm:cc-linear}).  The fact that the \clustergraph{} is
  implicit means that the BFS needs to perform $O(k^2)$ additional
  work (but not writes) per node in the \clustergraph{}, giving
  expected work $O(\wcost n/k + n + k^2n') = O(\wcost n /k + kn)$.  To
  get a high probability bound, the \depth{} is multiplied by
  $O(k^2\log n)$, giving us $O(\wcost k^3 \log^3 n)$.  For
  $k=\sqrt{\wcost}$, the work and writes match the theorem statement,
  but the \depth{} is larger than claimed by a $\wcost$ factor.

  To remove the extra $\wcost$ factor on the \depth{}, we need to look
  inside the BFS algorithm and its analysis~\cite{BBFGGMS16}.  The
  $O(\wcost {\cal D} \log n)$ \depth{} bound for the BFS, where ${\cal
    D} = O(k\log n)$ is the diameter, is dominated by the \depth{} of
  a packing subroutine on vertices of the frontier.  This packing
  subroutine does not look at edges, and is thus not affected by the
  overhead of finding a vertex's neighbors in the implicit
  representation of the \clustergraph{}.  Ignoring the packing and
  just looking at the exploration itself, the \depth{} of BFS is
  $O({\cal D} \log n)$, which given the implicit representation
  increases by a $O(k^2 \log n)$ factor.  Adding these together, we
  get \depth{} $O(\wcost k \log^2 n + k^3 \log^3 n) =
  O(\wcost^{3/2}\log^3n)$ for the BFS phases.
\end{proof}
}

\bibliographystyle{plain}
\bibliography{../../ref}

\end{document}